\documentclass[11pt]{amsart}
\usepackage[margin=1in]{geometry}
\usepackage{latexsym}

\usepackage{libertine}
\usepackage[utf8]{inputenc}
\usepackage{amsfonts}
\usepackage{amsmath}
\usepackage{amssymb}
\usepackage{amsthm}
\usepackage{commath}
\usepackage{xcolor}
\usepackage{subfig}
\usepackage{float}
\usepackage{mathrsfs}
\usepackage{appendix}
\usepackage{needspace}
\usepackage{bbm}
\usepackage[colorlinks, breaklinks=true]{hyperref}
\usepackage[noabbrev,capitalise,nameinlink]{cleveref}
\hypersetup{
	linkcolor=blue,
	citecolor=Green,
	filecolor=Red,
	urlcolor=NavyBlue,
	menucolor=Red,
	runcolor=Red}
\numberwithin{equation}{section}


\usepackage{thmtools}
\usepackage{thm-restate}
\numberwithin{equation}{section}
\allowdisplaybreaks

\crefname{appendix}{Appendix}{Appendices}
\Crefname{appendix}{Appendix}{Appendices}
\crefname{subappendix}{Appendix}{Appendices}
\Crefname{subappendix}{Appendix}{Appendices}

\crefformat{appendix}{#2Appendix~#1#3}
\Crefformat{appendix}{#2Appendix~#1#3}

\crefformat{subappendix}{#2Appendix~#1#3}
\Crefformat{subappendix}{#2Appendix~#1#3}

\AddToHook{cmd/appendix/before}{%
  \crefalias{section}{appendix}%
  \crefalias{subsection}{subappendix}%
}
\makeatletter
\def\@fnsymbol#1{%
  \ensuremath{\ifcase#1\or *\or \circ\or \dagger\or \ddagger\or
  \mathsection\or \mathparagraph\else\@ctrerr\fi}}
\DeclareRobustCommand{\authmark}[1]{\textsuperscript{\@fnsymbol{#1}}}
\makeatother

\usepackage[dvipsnames]{xcolor}

\usepackage[style=alphabetic, maxbibnames=9, giveninits]{biblatex}
\renewbibmacro{in:}{}

\usepackage{todonotes}
\usepackage{amsmath}
\DeclareMathOperator*{\argmax}{arg\,max}

\newcommand{\brac}[1]{\left \langle #1 \right \rangle}

\newcommand{\1}{\mathbbm{1}}
\newcommand{\calS}{\mathcal{S}}
\newcommand{\R}{\mathbb{R}}
\newcommand{\E}{\mathbb{E}}
\newcommand{\calD}{\mathcal{D}}

\newcommand{\Var}{\textup{Var}}

\newcommand{\Geom}{\textup{Geom}}
\newcommand{\supp}{\textup{supp}}

\newcommand{\KL}{D_{\mathrm{KL}}}
\newcommand{\Prob}{\mathbb{P}}
\newcommand{\Ber}{\textup{Ber}}
\newcommand{\Bin}{\textup{Bin}}
\newcommand{\unif}{\textup{unif}}
\newcommand{\Unif}{\textup{Unif}}

\newcommand{\doubleline}{\;\middle\|\;}

\newcommand{\maj}{\textup{maj}}

\newcommand{\Z}{\mathbb{Z}}
\newcommand{\N}{\mathbb{N}}

\newcommand{\sign}{\textup{sign}}
\newcommand{\BDC}{\textup{\textsf{BDC}}}

\newcommand{\ind}{\textup{\textsf{ind}}}
\newcommand{\std}{\textup{\textsf{std}}}
\newcommand{\loc}{\textup{\textsf{loc}}}
\newcommand{\exc}{\textup{exc}}

\newcommand{\fpl}{f_{\mathrm{pl}}}
\newcommand{\fnull}{f_{\mathrm{null}}}
\newcommand{\fplann}{f_{\mathrm{pl}}^{\mathrm{ann}}}
\newcommand{\fnullann}{f_{\mathrm{null}}^{\mathrm{ann}}}
\newcommand{\LCS}{\mathsf{LCS}}

\DeclareFieldFormat[article]{title}{#1}

\newtheorem{theorem}{Theorem}[section]
\newtheorem{lemma}[theorem]{Lemma}
\newtheorem{proposition}[theorem]{Proposition}
\newtheorem{corollary}[theorem]{Corollary}

\newtheorem{conjecture}[theorem]{Conjecture}

\newtheorem{problem}[theorem]{Problem}

\theoremstyle{definition}
\newtheorem{definition}[theorem]{Definition}
\newtheorem{remarkx}[theorem]{Remark}
\newenvironment{remark}
  {\pushQED{\qed}\remarkx}
  {\popQED\endremarkx}

\crefname{appendix}{appendix}{appendices}
\Crefname{appendix}{Appendix}{Appendices}

\title{The Random Subsequence Model and Uniform Codes for the Deletion Channel}

\author[Ryan Jeong]{Ryan Jeong\authmark{1}}
\author[Francisco Pernice]{Francisco Pernice\authmark{2}}

\begin{document}

\begingroup
\renewcommand\thefootnote{\fnsymbol{footnote}}
\footnotetext[1]{Department of Statistics, Stanford University. Email: \nolinkurl{rsjeong@stanford.edu}.}
\footnotetext[2]{Department of Electrical Engineering and Computer Science, Massachusetts Institute of Technology. Email: \nolinkurl{fpernice@mit.edu}.}
\endgroup

\begin{abstract}

    We introduce the \textit{Random Subsequence Model}, a spin glass model on pairs of random strings $(X,Y) \in \{0,1\}^N \times \{0,1\}^M$ whose partition function counts subsequence embeddings of $Y$ into $X$. We study two variants: the null model, where $X$ and $Y$ are independent and uniform, and the planted model, where $X$ is uniform and $Y$ is a uniformly-random length-$M$ subsequence of $X$. We connect the Random Subsequence Model to longstanding problems in various fields, including the best rate achievable by uniformly-random codes in the deletion channel, the longest common subsequence problem between two random strings, and models of directed polymers in statistical physics.

    In the regime where $N,M\to\infty$ at a fixed ratio $\alpha = M/N \in (0,1)$, we exhibit strict asymptotic separations between the null annealed free energy and the quenched free energies of the null and planted models at all values of the density parameter $\alpha$. This suggests that these models are in a spin glass phase at zero temperature throughout the entire dense regime. As a consequence, we show that uniformly-random codes achieve a positive rate in the deletion channel for all deletion probabilities $p\in [0,1),$ settling multiple conjectures of \cite{pernice2024mutual} and proving the first such positive rate result for the regime $p \geq 1/2$.
    
    We also give an exact analytic formula for the annealed free energy of the planted model for all values of the density parameter. This implies a corresponding analytic upper bound on the best rate achievable by uniformly-random codes in the deletion channel, complementing the lower bound from our first result. Our upper and lower bounds for the capacity of the deletion channel under uniform codes are far closer to each other than the best known upper and lower bounds for the capacity of the deletion channel.
\end{abstract}

\maketitle


\section{Introduction}

This paper introduces and studies the \textit{Random Subsequence Model}, a new spin glass model whose zero-temperature partition function counts order-preserving subsequence embeddings for pairs of random binary strings. Given natural numbers $M \leq N$, the \textit{configuration space} of the model is
\begin{align*}
    \Sigma = \Sigma_{N,M} := \left\{\sigma : [M] \to [N] \,:\, \sigma \text{ is strictly increasing} \right\}.
\end{align*}
Given a pair of random strings $(X,Y) \in \{0,1\}^N\times \{0,1\}^M$, which we call the \textit{disorder}, the \textit{partition function} is defined as
\begin{align} \label{eq:Z-defn}
    Z_{X,Y} := |S_{X,Y}| := |\{\sigma\in\Sigma : X_\sigma=Y\}|,
\end{align}
where, for a configuration $\sigma \in \Sigma$, we write 
\begin{align*}
    X_\sigma = \big( X_{\sigma(1)},\dots,X_{\sigma(M)} \big) \in \{0,1\}^M.
\end{align*} 
In words, $Z_{X,Y}$ counts the number of order-preserving embeddings of \(Y\) into \(X\) as a subsequence. Fixing a density parameter $\alpha \in (0,1)$, we are interested in the asymptotic regime in which \(N,M\to\infty\) with \(M/N\to\alpha\in[0,1]\) under two natural laws on the disorder. 

\medskip

\noindent {\bf Null.} $X$ and $Y$ are drawn independently and uniformly from $\{0,1\}^N$ and $\{0,1\}^M$, respectively.

\medskip

\noindent {\bf Planted.} $X$ is drawn uniformly from $\{0,1\}^N$, $\sigma^*$ is drawn independently of $X$ and uniformly from $\Sigma,$ and we set $Y = X_{\sigma^*}$.

\medskip

We often write $X'$ for an independent copy of $X$ to distinguish the null ambient string from the planted one, so that the null partition function is $Z_{X',Y}$ and the planted one is $Z_{X,Y}$.

\subsection{Connections} \label{sec:connections}
Before presenting our results, we connect the Random Subsequence Model to some longstanding problems in information theory, discrete probability, theoretical computer science, and statistical physics. The connections to problems in information theory are repeated here for the reader's convenience from \cite{pernice2024mutual}, where the planted variant of the Random Subsequence Model was implicitly studied. For further such background, including connections to Slepian--Wolf theory and distributed storage, we refer the reader to the introduction of that paper.

\subsubsection{The deletion channel.}

The \textit{binary deletion channel with deletion probability $p \in [0,1]$} is the communication channel which takes input $x \in \{0,1\}^N$ and deletes each bit of $x$ independently with probability $p$ to produce a random output $y \in \{0,1\}^*$. The deletion channel is often regarded as the canonical example of a channel with synchronization errors, that is, in which synchronization between input and output bit positions is lost. Dobrushin's classical coding theorem for synchronization channels \cite{dobrushin1967shannon} showed that the \emph{capacity} of the deletion channel, which can be thought of informally as the maximum rate at which one can reliably transmit information through the channel, admits the variational representation
\begin{align*}
    \lim_{N\to\infty} \max_{X} \frac{1}{N} I\left( X; Y \right),
\end{align*}
where the maximum is over all distributions of $X$ supported on $\{0,1\}^N$, $Y$ is the output of the deletion channel on input $X$, and $I(\cdot\,; \cdot)$ is the mutual information. Finding an analytic formula expressing the capacity above as a function of the deletion probability $p$ has been one of the outstanding problems of information theory over the past several decades. 

A natural relaxation of the capacity problem is to consider $X\sim \Unif\{0,1\}^N$ rather than optimizing over all possible laws of $X.$ We refer to the corresponding limit
\begin{align}\label{eq:unif-mut-inf-limit}
    \lim_{N\to\infty}\frac{1}{N} I\left( X; Y \right), \qquad X\sim \Unif\{0,1\}^N
\end{align}
as the \emph{uniform capacity} of the deletion channel, which gives a lower bound on the full channel capacity, and can be interpreted as the maximum rate that can be achieved with \emph{uniformly random error correcting codes}. Uniformly random codes are of substantial importance as they are known to be asymptotically optimal (i.e., they achieve the channel capacity) in many well-understood memoryless channels like the binary symmetric channel and the binary erasure channel (e.g., see \cite{shannon1948mathematical}), the latter of which is often thought of as the memoryless analogue of the binary deletion channel. As such, uniform codes are widely studied throughout information theory and coding theory. While it is known (e.g., see \cite{drmota2012mutual} and \cite{drinea2006simple}) that uniform codes cannot achieve the capacity of the deletion channel for $p$ close to $1$, we believe that studying uniformly-random codes in this setting is both interesting in its own right and potentially an important stepping stone towards the full capacity problem, since deriving an analytic formula or even a convincing conjectural expression for \eqref{eq:unif-mut-inf-limit} is wide open.

The uniform capacity of the deletion channel is closely connected to the planted variant of the Random Subsequence Model, and this connection is the primary motivation for the present work. Specifically, it is easy to show (see \cref{app:del-chann-cap-proof}) that, letting $\alpha = 1-p,$ one can write
\begin{align} \label{eq:del-chann-cap-connection}
    \lim_{N\to \infty} \frac{1}{N}I(X;Y) &= \alpha \log 2 - h(\alpha) + \lim_{\substack{N,M\to\infty \\ M/N = \alpha}} \frac{1}{N}\E\left[ \log Z_{X,Y} \right],
\end{align}
where all logs are taken base $e$ and 
\begin{align*}
    h(\alpha) = -\alpha \log \alpha - (1-\alpha)\log (1-\alpha)
\end{align*}
denotes the usual binary entropy function. Hence, understanding the \emph{limiting free energy density} 
\begin{align*}
    \fpl(\alpha) := \lim_{\substack{N,M\to\infty \\ M/N = \alpha}} \frac{1}{N}\E\left[ \log Z_{X,Y} \right]
\end{align*}
of this model is equivalent to computing the maximum rate achieved by uniformly random codes in the deletion channel.

Finally, we note that the null variant of the Random Subsequence Model is relevant to uniformly random codes for the deletion channel as well, and refer the reader to \cite{mitzenmacher2009survey} for additional discussion in this direction. Indeed, let $\mathcal{C}\subseteq \{0,1\}^N$ be a uniformly-random code, with all elements of $\mathcal{C}$ drawn i.i.d. uniformly from $\{0,1\}^N.$ Let $X \sim \mathcal{C}$ be a uniformly-random codeword and $Y$ the output of the deletion channel on input $X$. The maximum-likelihood decoder outputs
\begin{align*}
    \hat{x}(Y)=\argmax_{x\in \mathcal{C}} \Prob(Y \mid X=x) = \argmax_{x\in \mathcal{C}} Z_{x, Y},
\end{align*}
since as a function of $x \in \{0,1\}^N$, it holds that $\Prob(Y \mid X=x) \propto Z_{x, Y}$. Thus, we have that
\begin{align*}
    \Prob\left(\hat{x}(Y)\neq X \right) & \leq \Prob\left( Z_{x',Y} \geq Z_{X,Y} \text{ for some } x' \in \mathcal{C} \setminus \{X\} \right) \leq (|\mathcal{C}| - 1) \Prob\left( Z_{X',Y} \geq Z_{X,Y} \right).
\end{align*}
But note that, since $X'$ is independent of $X,$ $Z_{X',Y}$ and $Z_{X,Y}$ are exactly distributed as the partition functions of the null and planted variants of the Random Subsequence Model, respectively. Hence, if one shows that the planted partition function is greater than the null partition function with high probability, that gives a bound on the maximum-likelihood decoder's probability of failure. This maximum likelihood decoding strategy, combined with the sub-optimal bound
\begin{align*}
    \Prob\left(Z_{X',Y} \geq Z_{X,Y}\right) \leq \Prob\left( Z_{X',Y} \geq 1 \right),
\end{align*}
is the heart of the argument of \cite{gallager1961sequential, zigangirov1969sequential, diggavi2001transmission}, which gave one of the early lower bounds on the capacity of the deletion channel.

\subsubsection{Longest common subsequence of two random strings.} 

A second source of motivation for the Random Subsequence Model comes from the longest common subsequence ($\LCS$) problem. If $X_1$ and $X_2$ are independent uniformly random binary strings of respective lengths $N_1$ and $N_2$, we let $\LCS(X_1,X_2)$ denote the length of their longest common subsequence. A classical open problem of more than fifty years is to determine the limit
\begin{align*}
    f(\gamma)& := \lim_{\substack{N_1,N_2\to\infty \\ N_2/N_1 = \gamma}} \frac{1}{N_1} \E\left[ \LCS(X_1,X_2) \right].
\end{align*}
The case $\gamma=1$, proposed by \cite{chvatal1975longest}, has been the subject of intensive study \cite{deken1979limit, steele1982long, alexander1994rate, dancik1995upper, heineman2024improved}. Following \cite{boutetdemonvel1999extensive}, one may instead study the partition function
\begin{align*}
    Z_{N_1,N_2,M} & := |\{\sigma^1\in \Sigma_{N_1,M},\, \sigma^2\in \Sigma_{N_2,M} : (X_1)_{\sigma^1} = (X_2)_{\sigma^2}\}|,
\end{align*}
which counts common length-$M$ subsequences of $X_1$ and $X_2$. So in particular, $Z_{N_1,N_2,M} > 0$ if and only if there exists a common subsequence between $X_1$ and $X_2$ of length $M$. Then one can compute the associated free energy
\begin{align*}
    g(\gamma, \alpha) & := \lim_{\substack{N_1,N_2,M\to\infty \\ N_2/N_1 = \gamma,\, M/N_1 = \alpha}} \frac{1}{N_1} \E\left[ \log (1+Z_{N_1,N_2,M}) \right].
\end{align*}
This free energy encodes the $\LCS$ constant, as it is easy to show that
\begin{align*}
    f(\gamma) = \sup\left\{\alpha : g(\gamma,\alpha) > 0 \right\}.
\end{align*}
The null variant of our Random Subsequence Model corresponds to the special case where $\gamma = \alpha.$ Indeed, in that case one has $M=N_2$, and then the second ambient string is itself the candidate subsequence, so
\begin{align*}
    Z_{N_1,N_2,M} = |\{\sigma\in \Sigma_{N_1,M}: (X_1)_{\sigma} = X_2\}| = Z_{X_1, X_2},
\end{align*}
which is exactly \eqref{eq:Z-defn}. Thus, the Random Subsequence Model may be viewed as a natural slice of the partition-function framework for the longest common subsequence problem. We therefore believe that a better understanding of the Random Subsequence Model is likely to be of interest for the longest common subsequence problem as well.

Lastly, we also note that longest common subsequences play a central role in deletion coding more broadly \cite{guruswami2022zero}. For adversarial deletions, extremal bounds on pairwise $\LCS$ govern codebook feasibility, and even the analysis of random deletion codes is limited by the current lack of sharp estimates for the expected $\LCS$ of two random binary strings.

\subsubsection{Mean-field variants and directed polymers.} \label{subsubsec:mean-field}

Our third and final connection is to \emph{directed polymers} in statistical physics. This connection is nontrivial, and will help us frame the discussion by using the experience of the directed polymers literature to identify which models can be expected to admit exact analytic solutions (see \Cref{subsec:main_result}). 

Given $(X,Y)\in \{0,1\}^N\times \{0,1\}^M,$ $n\leq N$ and $m\leq M,$ we use $X_{1:n}, Y_{1:m}$ to denote the prefixes of $X,Y$ from index 1 to $n,m,$ respectively. We also denote
\[
Z_{n,m} := Z_{X_{1:n}, Y_{1:m}},
\]
noting that $Z_{N,M} = Z_{X,Y}.$ By partitioning
\begin{align*}
    S_{X_{1:n},Y_{1:m}} = (S_{X_{1:n},Y_{1:m}}\cap \{\sigma_m = n\})\sqcup (S_{X_{1:n},Y_{1:m}}\cap \{\sigma_m \neq n\}),
\end{align*}
we observe the recurrence relation
\begin{align} \label{eq:Z-recurrence}
    Z_{n,m} &= Z_{n-1, m} + \1\{X_n = Y_m\}\cdot Z_{n-1,m-1}.
\end{align}
Note that the above gives an efficient dynamic programming algorithm to compute $Z_{N,M}$ exactly. It also leads to a natural generalization of our model. Rather than specifying two strings $X$ and $Y,$ we can instead specify a matrix $B \in \R_+^{N\times M}$ and \emph{define} the partition function $Z_{N,M}$ via the recurrence relation
\begin{align}\label{eq:Z-B-recurrence}
    Z_{n,m} = Z_{n-1, m} +B_{n,m}\cdot Z_{n-1,m-1}, \qquad 1\leq n\leq N,\ 1\leq m\leq M,
\end{align}
with the initial condition $Z_{0,m} = \delta_{0,m}.$ This can be interpreted as a directed polymer model, and in the case that $B$ has entries in $\{0,1\}$, a directed percolation model, in a random environment determined by $B$. Indeed, consider a box in $\Z^2$ with bottom-left endpoint at $(0,0)$ and top-right endpoint at $(N,M).$ Put an edge between any two horizontal neighbors $(n-1,m),(n,m)$ with weight 1 and an edge between diagonal neighbors $(n-1,m-1),(n,m)$ with weight $B_{n,m}.$ Then $Z_{N,M}$ counts weighted paths from $(0,0)$ to $(N,M)$ that are monotonically increasing in the first coordinate.

This kind of model has received considerable attention in the directed polymers literature. An insight of those investigations is that only distributions of $B$ with special algebraic structure admit exact analytic solutions with existing techniques. The case where $B$ has i.i.d. Gamma-distributed entries falls in that category, and its exact analytic solution was found in the important work \cite{corwin2015strict}. We refer the reader to the next section for a discussion of how this relates to the Random Subsequence Model. As far as the authors are aware, the Random Subsequence Model itself, which corresponds to a natural ``rank-one'' random environment $B,$ has not been explicitly studied in the directed polymers literature. Cases where the entries of $B$ are i.i.d. represent a natural \emph{mean field} version of the Random Subsequence Model, and the setting where $B$ has i.i.d. $\Unif\{0,1\}$ entries has been studied under the name ``Bernoulli Matching Model'' in connection to the longest common subsequence problem (see \cite{boutetdemonvel1999extensive, majumdar2005exact}).

\subsection{Our results} \label{subsec:main_result}

Our first result gives strict bounds on the free energies of the null and planted variants of the Random Subsequence Model. For the rest of the paper, following the statistical physics terminology, we use the term \emph{annealed} free energy to refer to the quantities of the form
\[
\frac{1}{N}\log \E[Z],
\]
i.e., with the expectation \emph{inside} the log. This is in contrast to the default, \emph{quenched} free energy 
\begin{align*}
    \frac{1}{N}\E[\log Z],
\end{align*}
where the expectation is outside the log.

\begin{theorem}[Quenched-annealed gaps] \label{thm:weak_law}
    Fix $\alpha \in (0, 1)$. Let $X, X' \in \{0,1\}^N$ be independent uniform random strings and let $Y \in \{0,1\}^M$ be a uniform random length-$M$ subsequence of $X$ (with $M = \lfloor \alpha N \rfloor$). Defining
    \begin{align*}
        \fnullann(\alpha) := \lim_{N\to\infty} \frac{1}{N}\log \E\left[Z_{X',Y}\right],
    \end{align*}
    there exists a constant $\fpl(\alpha) > 0$ such that
    \begin{align} \label{eq:weak_law_planted_bounds}
        \frac{1}{N} \log Z_{X,Y} \xrightarrow{p} \fpl(\alpha) > \fnullann(\alpha),
    \end{align}
    where $\xrightarrow{p}$ denotes convergence in probability.
    Moreover, if $\alpha \in (0, 1/2)$, then there exists a constant $\fnull(\alpha) > 0$ such that
    \begin{align} \label{eq:weak_law_null}
        \frac{1}{N} \log Z_{X',Y} \xrightarrow{p} \fnull(\alpha)
    \end{align}
    and this constant satisfies
    \begin{align} \label{eq:weak_law_null_bounds}
        h( 2\alpha )/2 \leq \fnull(\alpha) < \fnullann(\alpha).
    \end{align}
\end{theorem}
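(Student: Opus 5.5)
Throughout we write $W=\E[Z_{X',Y}]=\binom NM 2^{-M}$, so that $\fnullann(\alpha)=h(\alpha)-\alpha\log 2$. The plan has three parts: existence of the limits, the lower bound on $\fnull$, and the two strict gaps.

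\textbf{Existence of the limits.}
\begin{itemize}
\item Concentration comes from bounded differences. Changing one bit of $X'$ or $Y$ changes $\log Z$ by $O(\log N)$: only runs of length $O(\log N)$ occur w.h.p., and $Z$ can be controlled via the recurrence \eqref{eq:Z-recurrence}. A cruder route is to work with $\log(1+Z)$ and truncate. McDiarmid's inequality then gives $\frac1N\log Z - \frac1N\E\log Z\to 0$ in probability.
\item Convergence of $\frac1N\E\log Z$ comes from superadditivity. Concatenating $(X_{1:n},Y_{1:m})$ with $(X_{n+1:N},Y_{m+1:M})$ gives $Z_{N,M}\ge Z_{n,m}\,Z'_{N-n,M-m}$.
\item For the null model the two pieces are independent, and Fekete's lemma applies once we control the event $Z=0$. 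For $\alpha<1/2$, greedy embedding shows $Z\ge1$ w.h.p., since each bit of $Y$ consumes a $\Geom(1/2)$ number of bits of $X'$, with mean $2$.
\item For the planted model, the subsequence indices of a uniform $\sigma^*$ split into nearly independent blocks. We would condition on $\sigma^*$ having about $\alpha n$ points in $[n]$ and use continuity in $\alpha$. Alternatively, we can realize the planted pair as a deletion channel with i.i.d.\ $\Ber(\alpha)$ retention and pass to fixed $M$ by concentration.
\end{itemize}

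\textbf{Lower bound $\fnull(\alpha)\ge h(2\alpha)/2$.} Group $X'$ into $N/2$ consecutive pairs. Each pair either equals $00$ or $11$, and so matches only the bit it repeats, or is of the form $01$ or $10$, and so matches any single bit. Count embeddings that use $M$ distinct pairs, one bit from each. Restricted to the mixed pairs, which number about $N/4$, this gives roughly $\binom{N/4}{M}$, which is too weak.

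The better route is a nested greedy construction. Along the random walk pairing bits of $Y$ to pairs of $X'$, each bit of $Y$ can be placed in any of the pairs, and the placement is consistent with probability $1/2$ unless the pair is mixed. We want a lower bound of the form $\binom{N/2}{M}\cdot 2^{-?}$ matching $h(2\alpha)/2=\frac{1}{N}\log\binom{N/2}{M}$ to first order. The idea is to show that every choice of $M$ pairs out of the $N/2$ pairs, in increasing order, can be realized by a valid embedding. This holds because within a pair we can choose which bit to use, provided the pair contains $Y_j$; a pair containing both bits always works.

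This suggests restricting to a subset of $M$ pairs, all of which contain the needed bit. The fix is to use blocks so that each chosen block contains both symbols. We would pick an appropriate decomposition of $X'$ into $N/2$ blocks, each containing both symbols: a run-based decomposition has about $N/2$ runs, and consecutive runs alternate symbols. Choosing any $M$ positions among the alternation structure then gives $\binom{N/2}{M}$ embeddings up to $e^{o(N)}$. Concretely, partition $X'$ at the boundaries between runs, of which there are about $N/2$, and map the choice of $M$ of these "switch" points to embeddings.

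\textbf{Strict gaps.} For $\fnull<\fnullann$, we use a second-moment/overlap obstruction or a fluctuation argument. If the two quantities were equal, the typical $Z$ would match $W$ to exponential order. We would show that $\log Z - \log W$ has an $\Omega(N)$ downward drift, by exhibiting an atypical but linear-frequency local pattern. For instance, the annealed measure tilts the joint empirical distribution of runs of $X'$, and conditioning on typical run statistics of $X'$ strictly lowers $\E[Z\mid \text{runs}]$. Concretely, $\frac1N\log\E[Z\mid X']$ is a strictly concave functional of the empirical run-length profile. Jensen applied to the concentration of run profiles then gives $\E\log Z\le \E\log\E[Z\mid X']<\log W$ strictly, since $\E[Z\mid X']$ is not exponentially concentrated at its mean.

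For $\fpl>\fnullann$, the planted $Y$ is correlated with $X$, and we use a change of measure. Note that $\E[Z_{X,Y}]$ under the planted law equals $\E[Z_{X',Y}^2]/\E[Z_{X',Y}]$, which is at least $W$. For the quenched quantity, planted configurations $\sigma^*$ together with local perturbations give $\log Z_{X,Y}\ge$ the entropy of local moves near $\sigma^*$. Combined with \eqref{eq:del-chann-cap-connection} and $I(X;Y)>0$ via a Gallager-type bound, this shows that $\fpl-\fnullann=\lim \frac1N I(X;Y)>0$. Positivity of the uniform-input mutual information for all $p<1$ would itself follow from an explicit local decoding argument.

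\textbf{Main obstacle.} I expect the hardest step to be the strict gap $\fnull<\fnullann$, which requires a quantitative obstruction valid for all $\alpha<1/2$.
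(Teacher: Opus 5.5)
The existence-of-limits part is broadly in line with the paper. However, the argument breaks down in the parts that carry the theorem's real content: both strict inequalities and the lower bound $h(2\alpha)/2$.

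\textbf{The strict gaps.} Your mechanism for $\fnull(\alpha)<\fnullann(\alpha)$ cannot work. Because $Y$ is uniform and independent of $X'$, every fixed $\sigma$ has $\Prob(X'_\sigma=Y\mid X')=2^{-M}$. Hence $\E[Z_{X',Y}\mid X']=\binom{N}{M}2^{-M}=W$ for \emph{every} $X'$, and likewise $\E[Z_{X',Y}\mid Y]=W$. So there is no run-profile dependence to be strictly concave in, and $\E\log\E[Z\mid X']=\log W$ exactly. Equivalently, the size-biased (planted) law does not tilt the marginal of the ambient string: in the planted model $X$ is uniform. Any gap must come from the joint structure of the pair.

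The missing idea is an exponentially reliable test: an event $E$ with null probability $1-e^{-\Omega(N)}$ but planted probability $e^{-\Omega(N)}$. Since the planted law is the null law reweighted by $Z_{X',Y}/W$, this gives $\E[Z_{X',Y}\1_E]=W\,\Prob_{\mathrm{pl}}(E)\le We^{-\Omega(N)}$, and Markov then yields \eqref{eq:null_quantitative}. Building such a test is the hard part. The paper uses the block-majority alignment $T_{\ind}$, which planted pairs satisfy by Chernoff and Hoeffding bounds. For null pairs it controls the supremum over exponentially many induced near-equipartitions via the standardization algorithm plus the biased-stretch count, so that a union bound over the small standardized class suffices. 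Nothing in your proposal plays this role.

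Your argument for $\fpl(\alpha)>\fnullann(\alpha)$ is circular. By \eqref{eq:del-chann-cap-connection} this inequality is literally $C_{\unif}(1-\alpha)>0$. Gallager-type bounds give that only for $\alpha>1/2$, and for $\alpha\le 1/2$ the deferred ``local decoding argument'' is the whole content. The identity $\E_{\mathrm{pl}}[Z]=\E[Z^2]/\E[Z]$ is annealed, and local moves near $\sigma^*$ give $\fpl>0$, not a comparison with $\fnullann$. The paper instead derives this inequality from the null bound. If $Z_{X',Y}<We^{-cN}$ with probability $1-e^{-cN}$, then all but an $e^{-cN/2}$ fraction of $x$ have at most $2^Me^{-cN/2}$ strings $y$ with $Z_{x,y}\ge We^{-cN}$. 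Since the planted $Y$ is drawn with probability $Z_{x,y}/\binom NM$, it lands on a $y$ with $Z_{x,y}\le We^{cN/4}$ with probability at most $e^{-cN}+e^{-cN/4}$.

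\textbf{The lower bound.} A decomposition of $X'$ into $N/2$ contiguous blocks each containing both symbols would force every block to be $01$ or $10$. Choosing $M$ of the roughly $N/2$ runs or switch points of $X'$ yields an embedding only when the symbols (equivalently, the gap parities) agree with $Y$. That holds for an exponentially small fraction of the $\binom{N/2}{M}$ choices.

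The $N/2$ comes from waiting times, not from $X'$ alone. Encode $\sigma\in S_{X',Y}$ by its skip vector $v$, which records how many occurrences of $Y_i$ are passed beyond the greedy one. A fixed $v$ with $\sum_i v_i=s$ is realized iff a sum of $M+s$ i.i.d.\ $\Geom(1/2)$ variables is at most $N$. Take $M+s=(1-\eta)N/2$. Cram\'er's theorem makes the expected number of unrealized vectors an $e^{-\Omega(N)}$ fraction of the $\binom{M+s-1}{M-1}$ candidates, and Markov gives $Z_{X',Y}\ge e^{-o(N)}\binom{(1-\eta)N/2}{M}$ w.h.p.; then let $\eta\downarrow 0$.

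\textbf{Concentration.} Your McDiarmid step fails as stated, because bounded differences do not hold deterministically. For example, with $X'=0^{N/2}1^{N/2}$ and $Y=0^M$, flipping $Y_1$ takes $Z$ from $\binom{N/2}{M}$ to $0$, and passing to $\log(1+Z)$ does not help. You would need a typical-bounded-differences inequality. The paper avoids this with the superadditive ergodic theorem for the planted model and a block decomposition plus a variance bound for the null model.
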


The threshold $\alpha=1/2$ is the natural boundary for the weak limit \eqref{eq:weak_law_null} to hold under the null model. Indeed, as discussed at the beginning of \cref{sec:weak_law_existence}, it is easy to show that $Z_{X',Y} = 0$ with high probability for $\alpha > 1/2$ and that at the boundary $\alpha = 1/2$, there does not exist a weak limit of the form \eqref{eq:weak_law_null} for the null model. Additionally, the main quantitative result in \cref{sec:main_result_upper_bound} which separates the exponential behavior of $Z_{X',Y}$ from the annealed free energy of the null model, namely \eqref{eq:null_quantitative}, holds for all $\alpha \in (0,1)$ (though with implicit constants depending on the choice of $\alpha$). Together with \cref{thm:weak_law}, these observations give strict exponential-scale bounds for the Random Subsequence Model throughout the entire dense regime $\alpha \in (0,1)$. We record in \cref{app:no-double-phase-transition} a combinatorial consequence of \eqref{eq:weak_law_null_bounds}, showing in particular that there is no further transition in the relation between $\fnull(\alpha)$ and $\fnullann(\alpha)$ within the range $\alpha < 1/2$.

\smallskip


We now connect \cref{thm:weak_law} to channel coding over the binary deletion channel with uniform random codes; see \cite{pernice2024mutual} for a more complete discussion of this viewpoint. For $p \in [0,1]$, we let $C_{\unif}(p) \in [0,1]$ denote the maximum rate achievable through the deletion channel with deletion probability $p$ using uniform random codes, i.e., the quantity that was introduced in \eqref{eq:unif-mut-inf-limit}.

\begin{corollary}[Positive rate for uniform codes; {\cite[Conjecture 3]{pernice2024mutual}}]\label{cor:pos_mutual_info}
    It holds for all deletion probabilities $p \in [0, 1)$ that $C_{\unif}(p) > 0$.
\end{corollary}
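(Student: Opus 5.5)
The plan is to combine the identity \eqref{eq:del-chann-cap-connection} with the planted part of \cref{thm:weak_law}, after computing the null annealed free energy explicitly. First, for $p=0$ the channel is noiseless: $Y=X$, so $I(X;Y)=N\log 2$ and $C_{\unif}(0)=\log 2>0$. So fix $p\in(0,1)$ and set $\alpha=1-p\in(0,1)$.

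\emph{Computing $\fnullann$.} Under the null model $X'$ is independent of $Y$. For each fixed $\sigma\in\Sigma_{N,M}$ the string $X'_\sigma$ is uniform on $\{0,1\}^M$ and independent of $Y$, so $\Prob(X'_\sigma=Y)=2^{-M}$. Hence $\E[Z_{X',Y}]=\binom{N}{M}2^{-M}$. By Stirling's formula with $M=\lfloor\alpha N\rfloor$,
\begin{align*}
\fnullann(\alpha)=h(\alpha)-\alpha\log 2 .
\end{align*}

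\emph{From convergence in probability to the quenched limit.} \cref{thm:weak_law} gives $\frac1N\log Z_{X,Y}\xrightarrow{p}\fpl(\alpha)$, but \eqref{eq:del-chann-cap-connection} involves $\frac1N\E[\log Z_{X,Y}]$, so we need to upgrade the convergence. In the planted model, $\sigma^*\in S_{X,Y}$, so $Z_{X,Y}\ge 1$. Also $Z_{X,Y}\le|\Sigma|=\binom NM\le 2^N$. Therefore
\begin{align*}
0\le \tfrac1N\log Z_{X,Y}\le\log 2 .
\end{align*}
These random variables are uniformly bounded, so the bounded convergence theorem gives $\frac1N\E[\log Z_{X,Y}]\to\fpl(\alpha)$. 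In particular, the limit appearing in \eqref{eq:del-chann-cap-connection} exists and equals the constant $\fpl(\alpha)$ of \cref{thm:weak_law}.

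\emph{Conclusion.} Substituting into \eqref{eq:del-chann-cap-connection} gives
\begin{align*}
C_{\unif}(p)=\alpha\log 2-h(\alpha)+\fpl(\alpha)=\fpl(\alpha)-\fnullann(\alpha),
\end{align*}
which is strictly positive by \eqref{eq:weak_law_planted_bounds}.

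\emph{Main obstacle.} The substantive content is the strict inequality $\fpl>\fnullann$ from \cref{thm:weak_law}, which we are taking as given. The only remaining point needing care is that \eqref{eq:del-chann-cap-connection} is stated with $M/N=\alpha$ fixed, whereas in the actual channel $M\sim\Bin(N,\alpha)$. We rely on the reduction in \cref{app:del-chann-cap-proof}, which handles this via concentration of $M$ and the continuity in $\alpha$ of the relevant quantities.
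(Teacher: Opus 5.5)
Your proposal is correct and follows the paper's proof essentially verbatim. You handle $p=0$ directly, use uniform boundedness of $\frac{1}{N}\log Z_{X,Y}$ to identify the limit in \eqref{eq:del-chann-cap-connection} with $\fpl(\alpha)$ (you justify this explicitly via $1 \le Z_{X,Y} \le 2^N$), and then read off $C_{\unif}(1-\alpha) = \fpl(\alpha) - \fnullann(\alpha) > 0$ from \eqref{eq:weak_law_planted_bounds}.
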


\cref{cor:pos_mutual_info} confirms \cite[Conjecture 3]{pernice2024mutual} and resolves \cite[Question 1]{pernice2024mutual} by showing that uniformly random codes achieve a strictly positive rate even when $p \geq 1/2$, i.e., in the regime where each bit is more likely to be deleted than retained. This is the first such positive-rate result for uniform codebooks in this regime, and it strengthens (in a qualitative sense) the earlier lower bounds of \cite{gallager1961sequential, zigangirov1969sequential, diggavi2001transmission, drinea2006simple, rahmati2013bounds, han2016mutual}, all of which were only able to address the $p < 1/2$ setting.

In fact, the proof of \cref{thm:weak_law} yields a quantitative version of \cref{cor:pos_mutual_info} in the likely deletion regime. Specifically, there exists $k \in \mathbb{N}$ (the crude explicit bound of \cref{thm:quantitative_capacity_bound} gives $k=362$, which we expect to be far from sharp) such that as $p \to 1$,
\begin{align*}
    C_{\mathrm{unif}}(p) = \Omega\bigl((1-p)^k\bigr).
\end{align*}
On the other hand, as $p \to 1$, \cite{drmota2012mutual} gives the upper bound
\begin{align*}
    C_{\mathrm{unif}}(p) = O\left((1-p)^{4/3}\log \frac{1}{1-p}\right).
\end{align*}
These bounds together show that the uniform capacity decays polynomially but faster than linearly in $1-p$ as $p \to 1$.

\smallskip

We note that $\fnullann(\alpha)$ trivially admits an exact analytic formula. Indeed, we have
\begin{align*}
    f_{\text{null}}^{\text{ann}}(\alpha) &= \lim_{N\to \infty} \frac{1}{N}\log \E [Z_{X',Y}] \\
    &= \lim_{N\to \infty} \frac{1}{N}\log \left(|\Sigma_{N,M}|2^{-M} \right) = \lim_{N\to \infty} \frac{1}{N}\log \left(\binom{N}{M}2^{-M} \right) = h(\alpha) - \alpha \log 2.
\end{align*}
The situation is significantly more involved for the annealed free energy of the planted model
\[
f_{\text{pl}}^{\text{ann}}(\alpha) = \lim_{N\to \infty} \frac{1}{N}\log \E [Z_{X,Y}].
\]
However, our next result gives an exact analytic formula for this quantity as well.

\begin{theorem}[Annealed free energy of the planted model]\label{thm:main-annealed-intro}
For every $\alpha \in (0,1)$, define
\begin{align*}
    \Delta(\alpha)\;=\;\sqrt{9\alpha^{2}-4\alpha+4};\qquad x_\alpha\;=\;\frac{\Delta(\alpha)-3\alpha}{2}; \qquad y_\alpha\;=\;\frac{(3\alpha+2-\Delta(\alpha))^2}{2(\Delta(\alpha)-3\alpha)(2+\Delta(\alpha)-3\alpha)}.
\end{align*}
Then $(x_\alpha,y_\alpha)\in(0,\infty)^2$ and
\begin{align*}
    \fplann(\alpha)\;=\;-h(\alpha) - \alpha \log 2- \log x_\alpha\;-\;\alpha\log y_\alpha.
\end{align*}
\end{theorem}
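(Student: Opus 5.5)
We would begin by writing the annealed planted partition function as a sum over pairs of configurations. Since $\sigma^*$ is uniform on $\Sigma$ and independent of $X$,
\begin{align*}
    \E[Z_{X,Y}] = \binom{N}{M}^{-1}\sum_{\sigma,\tau\in\Sigma}\Prob\bigl(X_\sigma = X_\tau\bigr).
\end{align*}
The event $X_\sigma=X_\tau$ imposes the equalities $X_{\sigma(i)}=X_{\tau(i)}$ for $i\in[M]$. View these as edges $\sigma(i)\text{---}\tau(i)$ on the vertex set $[N]$. Each vertex has degree at most two, since it occurs at most once as a $\sigma$-value and at most once as a $\tau$-value.

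The first key step is to show that this graph has no cycles other than self-loops, i.e.\ indices with $\sigma(i)=\tau(i)$. Monotonicity of $\sigma$ and $\tau$ should force this. Along a nontrivial path the positions move strictly in one direction: if $\sigma(i)<\tau(i)=\sigma(j)$, then $j>i$, so $\tau(j)>\tau(i)$. A path therefore can never close up. It follows that $\Prob(X_\sigma=X_\tau)=2^{-M+k(\sigma,\tau)}$, where $k(\sigma,\tau)=\#\{i:\sigma(i)=\tau(i)\}$. Hence
\begin{align*}
    \E[Z_{X,Y}]=2^{-M}\binom{N}{M}^{-1}a_{N,M},\qquad a_{N,M}:=\sum_{\sigma,\tau}2^{k(\sigma,\tau)}.
\end{align*}
Once we show $\frac1N\log a_{N,M}\to-\log x_\alpha-\alpha\log y_\alpha$, the claimed formula follows.

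To count $a_{N,M}$, we encode a pair $(\sigma,\tau)$ as a walk indexed by positions $n=1,\dots,N$, with
\begin{align*}
    d_n=\#\{i:\sigma(i)\le n\}-\#\{i:\tau(i)\le n\}.
\end{align*}
At each position there are four options for the step:
\begin{itemize}
    \item $n$ lies in neither image: step $0$.
    \item $n$ lies only in the image of $\sigma$: step $+1$.
    \item $n$ lies only in the image of $\tau$: step $-1$.
    \item $n$ lies in both images: step $0$.
\end{itemize}
An index with $\sigma(i)=\tau(i)=n$ occurs exactly when $n$ is in both images and $d_{n-1}=0$. So $a_{N,M}$ is a weighted count of Motzkin-type bridges of length $N$ that start and end at $0$. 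The weights are as follows: each step carries $x$; each $\sigma$-inclusion ($+1$ or double step) carries $y$; and a double step taken at height $0$ gets an extra factor $2$. Decomposing bridges at their returns to $0$, and splitting excursions into positive and negative ones, gives a quadratic equation for the excursion generating function $E(x,y)$. This yields
\begin{align*}
    F(x,y)=\sum_{N,M}a_{N,M}x^Ny^M=\frac{1}{1-x(1+2y)-2E(x,y)},
\end{align*}
with $E$ algebraic and involving a square root. This matches the square root $\Delta(\alpha)$ in the statement.

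Finally, we extract the exponential growth of the diagonal-direction coefficients $[x^Ny^{\alpha N}]F$. The plan is the standard large-deviation/saddle-point argument. For the upper bound, for every $(x,y)$ in the domain of convergence we have $a_{N,M}\le F(x,y)\,x^{-N}y^{-M}$. We minimize $-\log x-\alpha\log y$ over the boundary of that domain, which is either the square-root branch point or the pole $1-x(1+2y)-2E=0$. For the matching lower bound, we use a supermultiplicativity argument ($a_{N_1+N_2,M_1+M_2}\ge a_{N_1,M_1}a_{N_2,M_2}$, obtained by concatenating bridges) together with concavity of the resulting rate function, or alternatively a direct tilting argument. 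We then solve the critical-point equations and check that they give $x_\alpha=(\Delta-3\alpha)/2$ and $y_\alpha$ as stated, with both positive.

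We expect this last stage to be the main obstacle. It requires identifying which singularity, the pole or the branch point, is dominant for each $\alpha$. It also requires verifying that the algebraic elimination of the critical-point equations really produces the closed forms in the statement.
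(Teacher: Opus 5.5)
Your proposal is correct, and it takes a genuinely different route from the paper.

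\textbf{The paper's route.} The paper decomposes a pair $(\sigma,\tau)$ at its overlap points. This gives the gap-product formula, with blocks weighted by $w(a,b)=\binom{a-1}{b-1}^2$. It then uses the method of types (Stirling, support bounds, entropy replacement) to reach the variational problem $\mathsf R(\alpha)=\sup_\rho\alpha\rho\,\Phi(\rho)$. The inner problem is solved by an exponential family with block partition function $Z(x,y)=xy/\sqrt{D(x,y)}$, where $D=(1-x-xy)^2-4x^2y$. The envelope theorem then forces $Z=1$, and a final Lagrange computation on that curve finishes.

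\textbf{How your route connects.} You decompose Motzkin bridges at every return to $0$. Your closed form simplifies to
\[
F(x,y)=\frac{1}{1-x(1+2y)-2E(x,y)}=\frac{1}{\sqrt{D(x,y)}-xy},
\]
which is exactly $\frac{1}{xy}\sum_{L\ge 1}Z^L$ in the paper's notation. So your pole is precisely the paper's normalization condition $Z(x,y)=1$.

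\textbf{Your flagged obstacle is easy.} The prime-piece series $x(1+2y)+2E$ has nonnegative coefficients, so it is increasing in $x$. It tends to $0$ as $x\to 0$. At the branch point $D=0$ it equals $1+xy>1$. Hence the pole is always met strictly first, and it is the entire relevant boundary of the convergence domain. Minimizing $-\log x-\alpha\log y$ along $y=(1-x)^2/(2x(1+x))$ gives $x^2+3\alpha x-(1-\alpha)=0$. This yields $x_\alpha$, and $y_\alpha$ follows by substitution.

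\textbf{The lower bound.} Your superadditivity route works cleanly, because Fekete gives the exact inequality $\log a_{N,M}\le N\,G(1,M/N)$. A supporting line to the concave function $G(1,\cdot)$ at $\alpha$ then produces points of the convergence domain with $-\log x-\alpha\log y$ arbitrarily close to $G(1,\alpha)$. Passing from rational directions to $M=\lfloor\alpha N\rfloor$ only needs the monotonicities $a_{N+1,M}\ge a_{N,M}$ and $a_{N+1,M+1}\ge a_{N,M}$. Alternatively, you can tilt to the minimizer on the pole curve, where the Lagrange condition makes the mean piece direction $(1,\alpha)$. There the piece law has exponential tails and is aperiodic, so a local renewal theorem gives polynomially small hitting probabilities.

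\textbf{What each route buys.} Yours is shorter and more elementary, and it yields the full bivariate generating function, so finer coefficient asymptotics are within reach. It also supplies the acyclicity argument behind $\Prob(X_\sigma=X_\tau)=2^{-M+\langle\sigma,\tau\rangle}$, which the paper only asserts. The paper's route instead gives structural information in variational form: the optimal overlap density $\rho^*$ and the tilted block law $\nu^*$. In other words, it describes the typical overlap geometry of the pair measure.
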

Note that, by Jensen's inequality, $\fplann(\alpha)$ is an upper bound on $\fpl(\alpha).$ Combining this with \eqref{eq:del-chann-cap-connection} yields an upper bound on the uniform capacity of the deletion channel for all deletion probabilities. We note that this is one of the very few known analytic bounds for the deletion channel \cite{diggavi2001transmission, cheraghchi2019capacity}, and is numerically far closer than any other upper bounds of which we are aware, albeit bounding the uniform capacity rather than the full capacity. We view this as further evidence that the uniform capacity is a fruitful stepping stone towards better understanding the full capacity problem. Finally, we note that \cite{pernice2024mutual} previously gave an efficient \emph{algorithm} to compute the annealed free energy in the planted model to any desired precision. Our result significantly improves on theirs by being exact and analytic.

\begin{figure}[ht]
    \centering
    \includegraphics[width=0.8\linewidth]{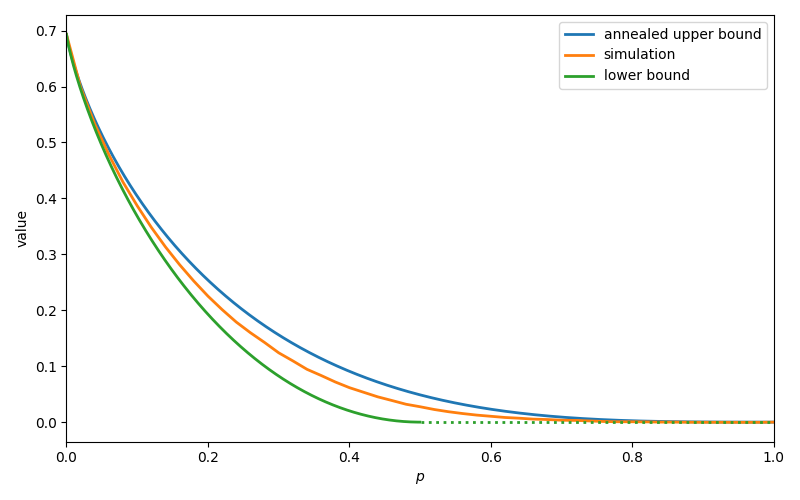}
    \caption{Our lower (green) and upper (blue) bounds on the uniform capacity of the deletion channel, together with a simulation-based computation of the true capacity curve (orange). All curves equal $\log 2$ at $p=0$ since the $y$-axis is in nats. The green curve is obtained by combining the lower bound $(\log 2-h(p))\1\{p\leq 1/2\}$ from \cite{diggavi2001transmission} with \Cref{cor:pos_mutual_info} (the dotted line means that the inequality is strict). The blue curve is obtained from \Cref{thm:main-annealed-intro} and \eqref{eq:del-chann-cap-connection}. The orange curve is obtained by numerically solving \eqref{eq:Z-recurrence} up to $N=10{,}000.$}
    \label{fig:capacity-simulation}
\end{figure}
We conclude this section by returning to the connection with directed polymers. As we mentioned in \Cref{sec:connections}, the Random Subsequence Model can be understood as a directed polymer model in a random ``rank-one'' environment. In the directed polymer literature, only models with special algebraic structure have been found to admit exact solutions. Since this kind of structure does not seem to be present in the Random Subsequence Model, the experience of the directed polymer literature suggests that it may be intractable to obtain exact analytic formulas for $\fpl$ or $\fnull$ with existing mathematical techniques, and results of the kind proved in \Cref{thm:weak_law} and \Cref{thm:main-annealed-intro} may be the best we can hope for. However, we think that a promising direction is to study variants of the Random Subsequence Model which do admit such exact solutions. There is a long tradition of doing this in the literature on spin glasses. For example, the Sherrington-Kirkpatrick model, which led to tremendous progress, was first proposed as a simplified, mean-field version of the earlier Edwards-Anderson model \cite{edwards1975theory, sherrington1975solvable}. In our case, the natural solvable analogue is the Strict-Weak Polymer Model, introduced and solved in \cite{corwin2015strict}. After a trivial mapping, this corresponds to exactly the same recurrence as \eqref{eq:Z-B-recurrence}, but where we take $B$ to have i.i.d. $\text{Gamma}(a, b)$ entries. The exact solution in this case is (see \cite[Theorem 1.3]{corwin2015strict})
\begin{align}\label{eq:strict-weak-exact-soln}
    \lim_{\substack{N,M\to\infty \\ M/N = \alpha}} \frac{1}{N}\E\left[ \log Z_{N,M} \right] &=  \inf_{\lambda>0} \Bigl\{ -(1-\alpha)\,\Psi(\lambda) +\Psi(a+\lambda) +\alpha\log b \Bigr\},
\end{align}
where $\Psi$ is the digamma function 
\[ 
\Psi(x)=\frac{d}{dx}\log \Gamma(x)=\frac{\Gamma'(x)}{\Gamma(x)}. 
\]
We can choose $a$ and $b$ so as to agree with the null variant of the Random Subsequence Model as much as possible by insisting that the entries of $B$ have the same mean and variance as they would in the Random Subsequence Model. This amounts to choosing $a=1,b=1/2,$ which corresponds to the entries of $B$ being i.i.d. $\text{Exponential}(2).$ In \cref{fig:fig2}, we plot the exact solution \eqref{eq:strict-weak-exact-soln} alongside the numerical solution of \eqref{eq:Z-recurrence} for the null Random Subsequence Model. The fact that the two curves are very close suggests that the Strict-Weak Polymer Model may be fruitful to study to gain insight on the Random Subsequence Model. We pose some open problems in this direction in \Cref{sec:open_problems}.
\begin{figure}[ht]
    \centering
    \includegraphics[width=0.8\linewidth]{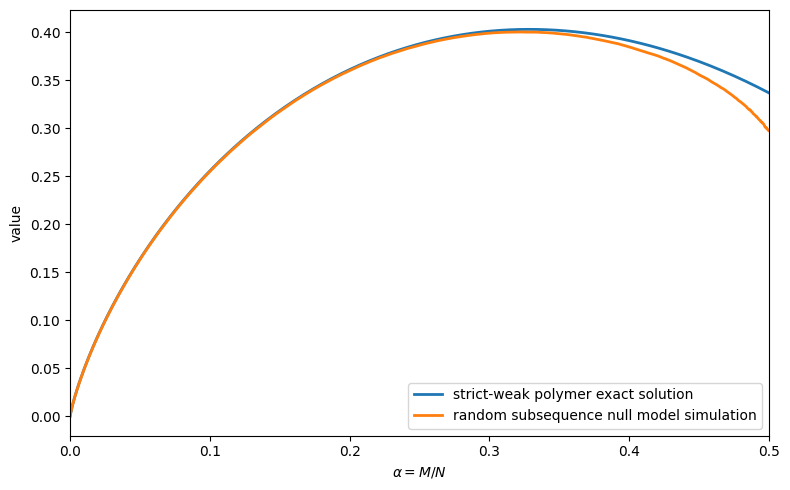}
    \caption{The blue curve is the exact solution \eqref{eq:strict-weak-exact-soln} for $a=1,b=1/2,$ and the orange curve is obtained by numerically solving \eqref{eq:Z-recurrence} for the null Random Subsequence Model, $N=10{,}000.$ The plot is in the interval $\alpha \in [0,1/2]$ because, outside of that range, we have $Z_{X,Y}=0$ with high probability for the null Random Subsequence Model.}
    \label{fig:fig2}
\end{figure}

\subsection{Outline of the proofs}

We now sketch the proofs of the main results, focusing on the key mechanisms that drive the analysis and postponing the technical details to the body of the paper.

\subsubsection{Proof overview of \cref{thm:weak_law} and \cref{cor:pos_mutual_info}.}

The convergence statements in \eqref{eq:weak_law_planted_bounds} and \eqref{eq:weak_law_null} are essentially obtained in \cref{sec:weak_law_existence} by invoking the superadditive ergodic theorem on the log-partition function of the model. In the planted model, $\log Z_{X,\BDC_{1-\alpha}(X)}$ is genuinely superadditive, and a coupling between $\BDC_{1-\alpha}(X)$ and a uniform length-$M$ subsequence transfers the weak limit to the fixed-length planted model. Rare occurrences can break this superadditivity in the null model, so we instead prove the weak law by directly exploiting the self-replicating structure of the subsequence count. We also obtain the first lower bound of \eqref{eq:weak_law_null_bounds} by encoding embeddings of $S_{X',Y}$ via \textit{skip vectors}, which record at each step how many occurrences of the current target bit are passed over a greedy embedding procedure. Realizing a vector with $s$ total skips is equivalent to requiring that a sum of $M+s$ i.i.d. $\Geom(1/2)$ random variables be at most $N$.

The main content of \cref{thm:weak_law} is therefore the strict chain of quenched-annealed gaps given across \eqref{eq:weak_law_planted_bounds} and \eqref{eq:weak_law_null_bounds}, and we focus here on the ideas behind its proof. Our strategy is to recast separation between the null and planted variants of the Random Subsequence Model as a structural hypothesis testing problem. For each ambient string $x \in \{0,1\}^N$, we define an $x$-\textit{good set}
\begin{align*}
    \mathcal G(x) \subseteq \{0,1\}^M
\end{align*}
of strings that are ``well-aligned" to $x$ so that we obtain the following distinguishing property. 
\begin{itemize}
    \item For $(X,Y)$ drawn from the planted model, $Y \in \mathcal G(X)$ with probability $1 - e^{-\Omega(N)}$.

    \item For $(X',Y)$ drawn from the null model, $Y \notin \mathcal G(X')$ with probability $1 - e^{-\Omega(N)}$.
\end{itemize}
The strict inequality of \eqref{eq:weak_law_null_bounds} is then obtained by showing that for typical $x$, only an exponentially small fraction of its length-$M$ subsequences belong to $\mathcal G(x)$. The strict inequality of \eqref{eq:weak_law_planted_bounds} essentially comes from combining this null estimate with the size-bias relation between the null and planted laws.

To define $\mathcal G(x)$, we partition $x$ into $B = B(N) := N/b$ contiguous blocks $x^{(1)},\dots,x^{(B)}$ of length $b$, where $b$ is large but fixed. If $Y$ is a planted subsequence of $x$, then the embedding induces a decomposition
\begin{align*}
    Y = \big(Y^{(1)},\dots,Y^{(B)}\big),
\end{align*}
where $Y^{(i)}$ is the portion of $Y$ contributed by the block $x^{(i)}$. For a typical block of $x$, the sign of the block sum in $Y^{(i)}$ is more likely than not to agree with that of $x^{(i)}$, and the strength of this bias is captured by the random-walk displacement $\Delta(Y^{(i)})$. This leads to the \textit{local alignment} $A_\loc(x^{(i)},y^{(i)})$, which rewards agreement of block majorities while weighting it by the size of the displacement. Taking the supremum of the resulting average local alignment over \textit{induced near-equipartitions} of $y$, thought of as the collection of typical block structures of $Y$ drawn from the planted model, yields the \textit{induced total alignment} $T_\ind(x,y)$. \cref{subsec:alignment_W_random_Y} shows that for every typical $x$, with probability $1 - e^{-\Omega(N)}$, a planted subsequence $Y$ satisfies
\begin{align*}
    T_\ind(x,Y) \geq \frac{1}{2} + \beta^\star(\alpha),
\end{align*}
where $\beta^\star(\alpha) > 0$ is an appropriate constant. This is the \textit{regular alignment property} that defines $\mathcal G(x)$.

The heart of the proof is the corresponding inverse problem for the null model. Here $Y$ is independent of $X'$, and we must rule out the possibility that an adversary can produce an induced near-equipartition of $Y$ so as to mimic the macroscopic block structure of $X'$. This is difficult because it can be shown that
\begin{enumerate}
    \item many natural block statistics can be driven above $1/2$ by a favorable induced near-equipartition, so one needs a score that is robust under adversarial choices;

    \item there are exponentially many induced near-equipartitions of $Y$, and this family is large enough that standard metric entropy arguments over the collection of all such induced near-equipartitions do not provide appropriate guarantees.
\end{enumerate}
The notions of alignment introduced above are so that we may overcome the first obstacle. Our main device for overcoming the second obstacle is the \emph{standardization algorithm}. This algorithm defines a map
\begin{align*}
    \varphi_Y : \mathcal{NE}_\ind(Y) \to \mathcal{NE}_\std(Y),
\end{align*}
sending induced near-equipartitions of $Y$ to a much smaller class of \textit{standardized near-equipartitions} in which almost all block lengths are some fixed length. We construct the standardization algorithm in such a way that the typical ``extremal increase'' in the value of the average local alignment is modest, so we may think of the standardization algorithm as an approximation algorithm. More precisely, \cref{subsec:alignment_impossibility_ind_Y} shows that with probability $1 - e^{-\Omega(N)}$, every induced near-equipartition is mapped to a standardized one whose score is larger by at most $\beta^\star(\alpha)/2$. The proof reduces the worst-case effect of standardization to fluctuation bounds for contiguous substrings of $Y$: large gains can only come from \textit{biased stretches}, and a uniform random string contains too few such stretches for them to matter on the exponential scale. Once this reduction is established, a union bound over the much smaller family $\mathcal{NE}_\std(Y)$, together with concentration for each fixed partition, bounds the \textit{standardized total alignment} via
\begin{align*}
    T_\std(X',Y) < \frac{1+\beta^\star(\alpha)}{2}
\end{align*}
with overwhelming probability. Hence, $Y \notin \mathcal G(X')$.

\cref{subsec:proof_of_thm} converts this structural separation into free-energy separation. Under the null model, the event $Y \notin \mathcal G(X')$ implies that only an exponentially sparse subcollection of the $\binom{N}{M}$ candidate embeddings contributes. Together with the regular alignment property for typical $x$, this yields the quantitative result
\begin{align} \label{eq:null_quantitative_intro}
    \Prob\left( Z_{X', Y} < e^{N\left( \fnullann(\alpha) - \Omega(1) \right)} \right) \geq 1 - e^{-\Omega(N)}.
\end{align}
This implies that $f_{\mathrm{null}}(\alpha) < f_{\mathrm{null}}^{\mathrm{ann}}(\alpha)$. Under the planted model, \eqref{eq:null_quantitative_intro} together with the size-bias relation between planted and null embeddings forces $Z_{X,Y}$ above the null annealed scale by a fixed exponential margin, giving $\fnullann(\alpha) < \fpl(\alpha)$. This proves \cref{thm:weak_law}. \cref{cor:pos_mutual_info} then follows by substituting the resulting gap into the mutual information identity \eqref{eq:del-chann-cap-connection}.

\subsubsection{Proof overview of \cref{thm:main-annealed-intro}}
The starting point of the proof is to rewrite the planted annealed partition function in a way that exposes the geometry of \emph{pairs} of embeddings rather than individual embeddings. Indeed, a simple argument leads to the formula
\[
\E\left[ Z_{X, Y} \right] = \frac{1}{\binom{N}{M}2^M}\sum_{\sigma,\tau \in\Sigma}  2^{\brac{\sigma,\tau}},
\]
where $\brac{\sigma,\tau}$ denotes the \emph{overlap} 
\begin{align*}
    \sum_{j=1}^M \1\{\sigma(j) = \tau(j)\}.
\end{align*}
Hence, the goal of the rest of the proof is to understand the moment generating function of the overlap of a random pair of configurations. The key structural property used to understand this is a certain Markov property for the distribution of $(\sigma,\tau) \sim \Sigma^2:$ if we select indices $i\in [N]$ and $j\in [M]$ and condition on the event that $\sigma(j) = \tau(j) = i$, then the pairs 
\begin{align*}
    & (\sigma|_{[j-1]},\tau|_{[j-1]}),
    & (\sigma|_{[M]\setminus [j]},\tau|_{[M]\setminus [j]})
\end{align*}
become conditionally independent and, after a trivial relabeling, uniform in $\Sigma_{i-1,j-1}^2,\Sigma_{N-i,M-j}^2,$ respectively. This leads to a recursive structure for the moment generating function which underlies the proof of \cref{lem:gap}. This lemma proves the formula
\begin{align}\label{eq:mgf-simplification}
\sum_{\sigma,\tau \in\Sigma}  2^{\brac{\sigma,\tau}} = \sum_{\ell=0}^{M}\ \sum_{1\le j_1<\cdots<j_\ell\le M}\ \sum_{1\le i_1<\cdots<i_\ell\le N} \ \prod_{k=1}^{\ell+1}\binom{i_k-i_{k-1}-1}{\,j_k-j_{k-1}-1\,}^{\!2}
\end{align}
where we define $i_0=j_0=0$ and $i_{\ell+1}=N+1,j_{\ell+1}=M+1$. The benefit of this formula is that the expression inside the summand depends on the tuples $(j_1,\dots,j_\ell)$ and $(i_1,\dots,i_\ell)$ only through the empirical measure
\[
\mu = \frac{1}{\ell + 1} \sum_{k=1}^{\ell+1} \delta_{(i_k-i_{k-1}, j_k-j_{k-1})}
\]
supported on $\N^2_{>0}.$ This is finite-dimensional, so as $N,M\to\infty,$ \eqref{eq:mgf-simplification} turns into a constrained variational problem in the space of probability measures in $\N^2_{>0}.$ Indeed, the limit of the log of \eqref{eq:mgf-simplification} normalized by $N$ is shown to be 
\[
\mathsf R(\alpha)=\sup_{\rho\in(0,1)} \alpha\rho\,\Phi(\rho),
\]
where for each fixed $\rho$, the inner quantity $\Phi(\rho)$ is the supremum of
\[
H(\nu)+\E_\nu[\log w(a,b)],
\]
where $H$ is the Shannon entropy and $w(a,b) = \binom{a-1}{b-1}^2,$
over probability measures $\nu$ supported in $\N^2_{>0}$ satisfying the mean constraints
\[
\E_{(a,b)\sim \nu}[a]=\frac{1}{\alpha\rho},\qquad \E_{(a,b)\sim \nu}[b]=\frac{1}{\rho}.
\]
Conceptually, this is the heart of the proof: all of the combinatorics has now been absorbed into a finite-dimensional order parameter, namely the law solving the variational problem $\Phi(\rho).$

The second half of the argument is to solve this variational problem explicitly. For fixed $\rho$, one introduces Lagrange multipliers for the normalization and moment constraints. This shows that the maximizing measure must lie in an exponential family:
\[
\nu^*(a,b)\propto w(a,b)\,x^a y^b
\]
for suitable parameters $x,y>0$. Thus the optimization is encoded by a two-variable partition function
\[
Z(x,y)=\sum_{a\ge 1}\sum_{1\le b\le a} w(a,b)x^a y^b.
\]
The value of the variational problem can then be expressed in terms of $\log Z(x,y)$ together with the moment constraints. The next step is therefore analytic rather than probabilistic: one computes \(Z(x,y)\) in closed form. This is done by rewriting the sum using Vandermonde’s identity and then evaluating the resulting generating function explicitly.

Once the inner variational problem over $\nu$ has been solved, one returns to the outer optimization in $\rho$. Here the key observation is that at an interior optimizer, the envelope theorem implies that the derivative of the outer objective is proportional to \(\log Z(x,y)\). Hence optimality forces the normalization condition
\[
Z(x,y)=1.
\]
This is what collapses the remaining optimisation to a purely algebraic system. Combining the equation \(Z(x,y)=1\) with the stationarity relation coming from the moment constraints yields two equations in the two unknowns \(x\) and \(y\). Solving this system produces the explicit formulas for \(x_\alpha\) and \(y_\alpha\), and substituting them back gives
\[
\mathsf R(\alpha)=-\log x_\alpha-\alpha\log y_\alpha,
\]
which concludes the proof.

\subsection{Notation and conventions} \label{subsec:notation}

We employ standard asymptotic notation throughout, occasionally subscripted to clarify the relevant limiting variable (e.g., $o_b(1)$ denotes a quantity that vanishes as $b \to \infty$), though $o_p$ retains its usual probabilistic meaning. Unless otherwise stated, all asymptotics are taken as $N, M \to \infty$ with $\alpha = M/N$ fixed, and we omit floor and ceiling symbols when doing so does not affect asymptotic behavior. Because many of our estimates are meaningful only up to subexponential factors, we introduce the shorthand
\begin{align*}
    f(N) \approx g(N)
    \quad \Longleftrightarrow \quad
    \frac{f(N)}{g(N)} = e^{o(N)} \;\; \text{and} \;\; \frac{g(N)}{f(N)} = e^{o(N)}.
\end{align*}
It is immediate that $\approx$ defines an equivalence relation on functions from $\N$ to $\R_{>0}$. Given $p \in [0,1]$, we let $\BDC_p(X)$ denote the (random) string resulting from passing $X$ through the binary deletion channel with deletion probability $p$. All logarithms in this paper are taken base $e$.

\subsection{Organization} The rest of the paper is organized as follows. In \cref{sec:weak_law_existence}, we prove the lower bound of \eqref{eq:weak_law_null_bounds} and establish the weak limits of \eqref{eq:weak_law_planted_bounds} and \eqref{eq:weak_law_null}. Proving the strict inequalities of \eqref{eq:weak_law_planted_bounds} and \eqref{eq:weak_law_null_bounds} constitutes the main technical challenge of the proof of \cref{thm:weak_law}. We carry this out in \cref{sec:main_result_upper_bound} and then establish \cref{thm:weak_law} and \cref{cor:pos_mutual_info}. In \cref{sec:annealed_free_planted}, we derive the exact annealed free energy formula of \cref{thm:main-annealed-intro}. We conclude in \cref{sec:open_problems} with some open problems and suggestions for future research.

\section{Weak Limits for the Null and Planted Models} \label{sec:weak_law_existence}

We initiate the proof of \cref{thm:weak_law} by establishing the existence of weak limits $\fnull(\alpha) > 0$ (for values $\alpha \in (0,1/2)$ of the density parameter) and $\fpl(\alpha) > 0$ for which \eqref{eq:weak_law_planted_bounds} and \eqref{eq:weak_law_null} hold.

\subsection{Null model} \label{subsec:null_model}

Throughout \cref{subsec:null_model}, we fix $\alpha \in (0,1/2)$. We begin by observing that the existence of the weak limit $\fnull(\alpha)$ is enough to prove the first inequality of \eqref{eq:weak_law_null_bounds}. Towards this end, we first observe that there is a natural greedy algorithm \cite{diggavi2001transmission, mitzenmacher2009survey} for attempting to embed $Y$ into $X'$. Writing $\tau(i)$ for the position selected at step $i$ and setting $\tau(0) := 0$, we define $\tau(i)$ recursively as the least index $t > \tau(i-1)$ such that $X'_t = Y_i$ whenever such an index exists. If no such index exists at some step, we say that the greedy embedding algorithm fails. Then
\begin{align*}
    Z_{X',Y}=0 \qquad\Longleftrightarrow\qquad \text{the greedy embedding algorithm fails},
\end{align*}
since any $\sigma\in S_{X',Y}$ must satisfy $\tau(i) \le \sigma(i)$ at every step for which the greedy algorithm is defined. Equivalently, the event that $Z_{X',Y} = 0$ is exactly the event that
\begin{align*}
    \sum_{i=1}^M G_i > N,
\end{align*}
where $G_1,\dots,G_M$ are i.i.d. $\Geom(1/2)$ random variables. Here, $G_i$ corresponds to the number of bits of $X'$ that must be examined after $\tau(i-1)$ in order to locate $\tau(i)$. It easily follows from this latter representation that $Z_{X',Y} = 0$ with high probability for $\alpha > 1/2$, and that at $\alpha=1/2$, the event $Z_{X',Y} = 0$ has probability tending to $1/2$. This explains why the weak limit \eqref{eq:weak_law_null} is stated only for $\alpha \in (0,1/2)$.

We now fix $\sigma \in S_{X',Y}$, which corresponds to the following \emph{skip vector} $v \in \N^M$. For each $i \in [M]$, we let $v_i$ denote the number of instances of $Y_i$ strictly after the earliest instance of $Y_i$ following $X'_{\sigma(i-1)}$ (for $i = 1$, this is the earliest instance of $Y_1$ in $X'$) up to and including $X'_{\sigma(i)}$. We think of $v_i$ as the number of \textit{skips} that the $i$\textsuperscript{th} bit of $Y$ plays over the greedy algorithm when constructing $\sigma$. This mapping from $S_{X',Y}$ to skip vectors $v \in \N^M$ is evidently injective. Furthermore, for each $s \in \N$, we define
\begin{align*}
    \mathcal V_s := \left\{ v \in \N^M: \sum_{i=1}^M v_i = s \right\}.
\end{align*}
It similarly follows that for any $v \in \mathcal V_s$, the event that $v$ is a skip vector corresponding to a configuration in $S_{X',Y}$ is the event that the sum of $M+s$ i.i.d. $\Geom(1/2)$ random variables is at most $N$. Here, each $\Geom(1/2)$ random variable corresponds to the number of bits of $X'$ necessary to advance the candidate embedding with skip vector $v$ by one step. We conclude that
\begin{align}
    Z_{X',Y} & = \sum_{s=0}^{(1/\alpha-1)M} \sum_{v \in \mathcal V_s} \1\left\{ v \text{ corresponds to an elmt. of $S_{X',Y}$} \right\} \nonumber \\
    & \geq \sum_{s=0}^{\left(\frac{1}{2\alpha} - 1 \right)M} \sum_{v \in \mathcal V_s} \1\left\{ \sum_{i=1}^{M+s} \Geom(1/2) \leq N \right\} \geq e^{-o_p(N)} \sum_{s=0}^{\left(\frac{1}{2\alpha} - 1 \right)M} |\mathcal V_s| \approx e^{-o_p(N)} \Big|\mathcal V_{\left(\frac{1}{2\alpha} - 1 \right)M} \Big| \label{eq:lb_indicator_expression} \\
    & \approx e^{-o_p(N)} \exp\left(M \cdot \frac{h(2\alpha) }{2\alpha} \right) = e^{-o_p(N)} \exp\left(N \cdot \frac{h(2\alpha)}{2} \right), \nonumber
\end{align}
where the latter inequality of \eqref{eq:lb_indicator_expression} follows from Cram\'er's theorem applied to the $\Geom(1/2)$ law and Markov's inequality to control the number of ``bad summands." The lower bound of \eqref{eq:weak_law_null_bounds} follows.

\medskip

It remains to prove \eqref{eq:weak_law_null}. Our strategy is guided by the observation that $Z_{X',Y}$ exhibits an approximate superadditivity due to its self-replicating structure. For large integers $M_1$ and $M_2$, the inequality
\begin{align} \label{eq:superadditivity}
    \log Z_{X'_{1:N_1+N_2} Y_{1:M_1+M_2}} \geq \log Z_{X'_{1:N_1}, Y_{1:M_1}} + \log Z_{X'_{N_1+1:N_1+N_2}, Y_{M_1+1:M_1+M_2}},
\end{align}
holds whenever each partition function in \eqref{eq:superadditivity} is positive. We partition $S_{X',Y}$ by partitioning $X'$ and $Y$ into contiguous blocks and grouping embeddings according to how blocks of $Y$ are matched to blocks of $X'$. To handle dependencies across these classes, we restrict to an extremal class of this partition and argue that this choice is enough to fully capture the typical exponential behavior of $Z_{X',Y}$.

We define the collection of \emph{assignments}, corresponding to weak compositions of $N$ into $\sqrt{M}$ parts, via
\begin{align*}
    \mathcal A = \mathcal A_N := \left\{ \mu: [\sqrt{M}] \to [N] \ \Bigg| \  \sum_{i=1}^{\sqrt{M}} \mu(i) = N \right\}.
\end{align*}
Corresponding to $\mu \in \mathcal A$ is a decomposition of $X'$ given by
\begin{align*}
    X' = X_{\mu}'^{(1)}X_{\mu}'^{(2)}\cdots X_{\mu}'^{(\sqrt{M})}.
\end{align*}
For $i \in [\sqrt{M}]$, the \textit{block} $X_{\mu}'^{(i)}$ induced by $\mu$ is a contiguous sequence of $\mu(i)$ bits of $X'$, with the superscript respecting the order in which the bits appear in $Y$. Letting $\Bar{\nu}: [\sqrt{M}] \to [M]$ denote the map with $\Bar{\nu}(i) = \sqrt{M}$ for all $i \in [M]$, we may decompose $Y$ analogously. For $\mu \in \mathcal A$, the number of embeddings of $Y$ as a subsequence of $X'$ for which $Y_{\Bar{\nu}}^{(i)}$ is \textit{assigned} to $X_{\mu}'^{(i)}$ is then
\begin{align} \label{eq:mu_embeddings}
    Z_{X', Y}(\mu) := \prod_{i=1}^{\sqrt{M}} Z_{X_{\mu}'^{(i)}, Y_{\Bar{\nu}}^{(i)}}.
\end{align}
We focus on the following two kinds of assignments.
\begin{itemize}
    \item We let $\mu^*$ denote the (random) extremal assignment maximizing $Z_{X', Y}(\mu)$.
    \item We let $\Bar{\mu}$ denote a ``baseline'' for which $\Bar{\mu}(i) = N/\sqrt{M}$ for all $i \in [\sqrt{M}]$.
\end{itemize}
We work under the convention that $\log 0 := 0$. It will suffice to derive the weak law in this setting, as it can be easily shown that the probability that $Z_{X',Y} = 0$ vanishes. Altogether, we have that
\begin{align} \label{eq:Z_expression}
    Z_{X',Y} & = O\left( |\mathcal A| \right) \cdot Z_{X', Y}(\mu^*) = e^{\Tilde{O}\left(\sqrt{N} \right)} \prod_{i=1}^{\sqrt{M}} Z_{X_{\mu^*}'^{(i)}, Y_{\Bar{\nu}}^{(i)}}.
\end{align}
For $i \in [\sqrt{M}]$, the event $Z_{X_{\Bar{\mu}}'^{(i)}, Y_{\Bar{\nu}}^{(i)}} = 0$ corresponds to the event that a sum of $\sqrt{M}$ i.i.d. $\Geom(1/2)$ random variables is greater than $N/\sqrt{M} = \sqrt{M}/\alpha$. Thus, where the following \eqref{eq:mu_hat_union_bound} relies crucially on the assumption that $\alpha < 1/2$,
\begin{align} \label{eq:mu_hat_union_bound}
    \Prob\left(\bigcup_{i \in [\sqrt{M}]} \left\{ Z_{X_{\Bar{\mu}}'^{(i)}, Y_{\Bar{\nu}}^{(i)}} = 0 \right\} \right) \stackrel{\text{(Hoeffding)}}{=} e^{-\Omega(\sqrt{N})} \ll 1.
\end{align}
Rearranging \eqref{eq:Z_expression} and invoking \eqref{eq:mu_hat_union_bound} yields, where the $o_p(1)$ term is $O(1)$,
\begin{align} \label{eq:subsequence_equation_1}
    \frac{1}{M} \log Z_{X',Y} = \frac{1}{M} \sum_{i=1}^{\sqrt{M}} \log Z_{X_{\Bar{\mu}}'^{(i)}, Y_{\Bar{\nu}}^{(i)}} + \frac{1}{M}\underbrace{\left(\log Z_{X', Y}(\mu^*) - \log Z_{X', Y}(\Bar{\mu}) \right)}_{\geq 0} + \Tilde{O}\big( N^{-1/2} \big) + o_p(1).
\end{align} 

\begin{remark}
    We comment that fractional lengths (e.g., note that $\Bar{\nu}(i) = \sqrt{M}$ may not be an integer) are a benign issue throughout the argument and do not affect the asymptotics. The adjustment to the argument presented below is to take floors for every fractional length and to have the final multiplicand in \eqref{eq:mu_embeddings} become a larger ``remainder block." We omit the routine modifications here.
\end{remark}

We are now ready to prove \cref{lem:increasing_expectation}, an analogous convergence result in expectation.
\begin{lemma} \label{lem:increasing_expectation}
    The expression $\frac{1}{N} \E[\log Z_{X',Y}]$ converges to a constant.
\end{lemma}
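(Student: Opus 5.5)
We will show that $a_N := \E[\log Z_{X',Y}]$ (with $\log 0 := 0$, and $M = \alpha N$) is approximately superadditive in $N$, and then use a Fekete-type argument. The quantity $a_N/N$ is bounded, since $0 \le \log Z_{X',Y} \le \log\binom{N}{M} \le N\log 2$. So it suffices to show that
\begin{align*}
a_{N_1+N_2} \geq a_{N_1} + a_{N_2} - \varepsilon(N_1,N_2),
\end{align*}
with $N_1/M_1 = N_2/M_2 = 1/\alpha$ and an error $\varepsilon$ that is sublinear in a suitable sense. A Fekete/de Bruijn--Erd\H{o}s lemma with error terms then gives convergence of $a_N/N$ to $\sup$-type limits.

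\textbf{Step 1 (superadditivity on the good event).} Split $X'$ at position $N_1$ and $Y$ at position $M_1$. The two pieces $(X'_{1:N_1},Y_{1:M_1})$ and $(X'_{N_1+1:N},Y_{M_1+1:M})$ are independent, and each has the null law at sizes $(N_1,M_1)$ and $(N_2,M_2)$ respectively. By \eqref{eq:superadditivity}, on the event $E$ that both pieces have positive partition function, $\log Z_{X',Y}$ is at least the sum of the two pieces' logs. Off $E$, the left side is still $\ge 0$ (convention $\log 0 = 0$). However, the summand for the piece with $Z \neq 0$ may be positive and is then lost. This is exactly the rare failure of superadditivity mentioned earlier.

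\textbf{Step 2 (bounding the loss).} The loss is at most
\begin{align*}
\E\bigl[\log Z_{X'_{1:N_1},Y_{1:M_1}}\,\1\{E^c\}\bigr] + \E\bigl[\log Z_{X'_{N_1+1:N},Y_{M_1+1:M}}\,\1\{E^c\}\bigr] \le (N_1+N_2)\log 2\cdot \Prob(E^c).
\end{align*}
By the greedy-algorithm representation, $\{Z_{N_i,M_i}=0\}$ is the event that a sum of $M_i$ i.i.d.\ $\Geom(1/2)$ variables exceeds $N_i = M_i/\alpha$. Since $\alpha<1/2$, this mean-$2$ sum exceeding $M_i/\alpha > 2M_i$ has probability $e^{-c M_i}$ by Hoeffding/Cram\'er. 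Hence
\begin{align*}
\varepsilon(N_1,N_2) \lesssim (N_1+N_2)\bigl(e^{-cN_1}+e^{-cN_2}\bigr),
\end{align*}
which is summable and harmless.

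\textbf{Step 3 (handling integrality).} The main technical nuisance is that $M = \lfloor \alpha N\rfloor$ is not exactly additive. To handle this, compare $a_N$ along the subsequence $N_k = kL$ for a fixed $L$ with $\alpha L \in \N$ (or nearby rational $\alpha$). Then interpolate to general $N$ by monotonicity. Adding one bit to $X'$ does not decrease $Z$, by the recurrence \eqref{eq:Z-recurrence}. Removing or adding a single bit of $Y$ changes $\log Z$ by at most $O(\log N)$ in expectation: each embedding of $Y_{1:m}$ extends or restricts with a multiplicity of at most $N$, and one argues this on the positive event, again with exponentially small exceptions. These perturbations cost $O(\log N)$ per step, which is $o(N)$. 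Applying the approximate Fekete lemma to $a_{kL}$, and then letting the perturbation from non-integral lengths vanish, gives that $\frac{1}{N}\E[\log Z_{X',Y}]$ converges.

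I expect the hardest part to be Step 3's $Y$-length perturbation. The issue is that deleting a bit of $Y$ can turn $Z$ from $0$ to positive or can change it by a large factor. I would try first to bound it by the crude estimate $Z_{N,m} \le Z_{N,m-1}\cdot N$ in one direction. In the other direction, I would use $Z_{N+2,m} \ge Z_{N,m-1}$ on a constant-probability suffix event, combined with monotonicity in $N$, so that only $X$-length shifts of $O(1)$ are needed.
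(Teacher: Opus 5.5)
Your Steps 1--2 are sound. Applying a de Bruijn--Erd\H{o}s lemma only to balanced splits is also the right way to neutralize the zero events. For $n/2\le m\le 2n$ the loss $(n+m)\log 2\,(e^{-cn}+e^{-cm})$ is bounded, which is all that lemma needs, and this is a cleaner route than the paper's tower $M\mapsto\sqrt M$.

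The gap is Step 3. Only one direction of your single-bit claim holds: $Z_{N,m}\le N\,Z_{N,m-1}$ gives $\E\log Z_{N,m}\le\E\log Z_{N,m-1}+\log N$. The reverse fails pointwise, even on $\{Z_{N,m}>0\}$. An embedding of $Y_{1:m-1}$ need not extend at all, since it may end after the last occurrence of $Y_m$ in $X'_{1:N}$, so ``extends with multiplicity at most $N$'' is wrong.

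Your suffix-event fix does not repair this. Off the event, which has probability $1/4$ independently of $(X'_{1:N},Y_{1:m})$, you only know $Z_{N+2,m}=Z_{N,m}$. What you actually obtain is $\E\log Z_{N+2,m}\ge\E\log Z_{N,m-1}-\tfrac14\bigl(\E\log Z_{N,m-1}-\E\log Z_{N,m}\bigr)$. Its error term is exactly the quantity you are trying to bound, and it is a priori of order $N$. Iterating over $2K$ fresh $X$-bits shrinks it by $4^{-K}$, so an $O(1)$ error costs $\Theta(\log N)$ extra $X$-bits per added $Y$-bit. At that exchange rate, instead of $1/\alpha$, monotonicity in $N$ cannot bring you back to the lattice $(kL,kp)$.

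You cannot sidestep this direction in your interpolation. Monotonicity in the $X$-length plus the easy bound only move you to points with at least as many $X$-bits and at most as many $Y$-bits. That cannot connect $(N,\lfloor\alpha N\rfloor)$ to lattice points on both sides of the sandwich. Likewise, ``nearby rational $\alpha$'' presupposes continuity of the limit in $\alpha$. Moreover, shifting $\alpha$ by $\eta$ moves $M$ by $\eta N$ bits, so per-bit costs of $\log N$ are useless there.

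The missing idea is to absorb all rounding through superadditivity against long buffer blocks, never through single-bit perturbations of $Y$.

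The cleanest version is to index by $M$ with $N=\lfloor M/\alpha\rfloor$. Since $\lfloor M_1/\alpha\rfloor+\lfloor M_2/\alpha\rfloor\le\lfloor(M_1+M_2)/\alpha\rfloor$, the discrepancy lands on the $X$ side, where monotonicity is free. Your Steps 1--2 then hold for every balanced split, and de Bruijn--Erd\H{o}s applies to the whole sequence for every $\alpha\in(0,1/2)$, rational or not.

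To return to $M=\lfloor\alpha N\rfloor$, or to interpolate from your lattice if you keep it, split off a buffer of $j\asymp C\log N$ blocks. For $kL\le N<(k+1)L$, the leftover pair $(X'_{(k-j)L+1:N},\,Y_{(k-j)p+1:\lfloor\alpha N\rfloor})$ has length ratio at most $\alpha(1+1/j)<1/2$. So its partition function vanishes with probability $e^{-cj}=O(N^{-2})$, and your Step 2 estimate gives $\E\log Z_{N,\lfloor\alpha N\rfloor}\ge a_{(k-j)L}-O(1)$. Symmetrically, embedding $(N,\lfloor\alpha N\rfloor)$ plus a buffer into $((k+j)L,(k+j)p)$ gives $a_{(k+j)L}\ge\E\log Z_{N,\lfloor\alpha N\rfloor}-O(1)$.

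This is the mechanism of the paper's proof. Its tower step and its sandwich via $\ell(M),u(M)\in\mathcal S$ are both block decompositions with growing blocks, so zero events are negligible after a union bound. Leftover lengths go into remainder blocks, and no single-bit perturbation of $Y$ ever appears.
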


\begin{proof}[Proof of \cref{lem:increasing_expectation}]
    It suffices to prove \cref{lem:increasing_expectation} when normalizing by $M$ instead of $N$. We first restrict our attention to those values of $M$ in the collection
    \begin{align*}
        \mathcal S := \{2, 4, 16, 256, \dots\}. 
    \end{align*}
    From \eqref{eq:mu_hat_union_bound} and \eqref{eq:subsequence_equation_1}, it follows for $M \geq 4$ from $\mathcal S$ that
    \begin{align} \label{eq:subsequence_ineq}
        \frac{1}{M}\E\left[\log Z_{X',Y}\right] \geq \frac{\sqrt{M}}{M}\E\left[\log Z_{X_{\Bar{\mu}}'^{(1)},\, Y_{\Bar{\nu}}^{(1)}}\right] - e^{-\Omega(\sqrt{N})} = \frac{1}{\sqrt{M}}\E\left[\log Z_{X'_{1:\sqrt{M}/\alpha}, Y_{1:\sqrt{M}}}\right] - e^{-\Omega(\sqrt{N})}.
    \end{align}
    Let $C, c > 0$ be such that the exponential term in the RHS of \eqref{eq:subsequence_ineq} is at most $Ce^{-c\sqrt{M}}$ for all $M$ considered. It follows that for $M \in \mathcal S$, the sequence
    \begin{align*}
        \frac{1}{M}\E\left[\log Z_{X',Y}\right] + \sum_{M' \in \mathcal S: M' \leq M} Ce^{-c\sqrt{M'}}
    \end{align*}
    is increasing. This sequence is also uniformly bounded since $e^{-\Omega(\sqrt{N})}$ is summable and
    \begin{align*}
        \frac{1}{M} \E\left[ \log Z_{X',Y} \right] \leq \frac{1}{M} \log \binom{N}{M} = O(1),
    \end{align*}
    so it converges. We conclude that the sparse subsequence $\frac{1}{M}\E[\log Z_{X',Y}]$ on $M \in \mathcal S$ converges.
    
    \smallskip
    
    We now extend the result to all values of $M$. We define the iterated square root
    \begin{align*}
        \phi(M) = \begin{cases}
        0 & M \leq 2 \\
        1 + \phi(\sqrt{M}) & M > 2
    \end{cases}
    \end{align*}
    and we let 
    \begin{align*}
        & \ell = \ell(M) := 2^{2^{\phi(M)-2}} \in \mathcal S;
        & L = L(M) := M/\ell(M) \geq \sqrt{M}.
    \end{align*}
    We define $\mu \in \mathcal A_{N}$ and $\nu \in \mathcal A_M$ such that for all $i \in [L-1]$, we have that $\mu(i) = \ell/\alpha$ and $\nu(i) = \ell$. We decompose $X'$ and $Y$ via
    \begin{align*}
        & X' = X_\mu'^{(1)}X_\mu'^{(2)}\cdots X_\mu'^{(L)};
        & Y = Y_{\nu}^{(1)}Y_{\nu}^{(2)}\cdots Y_{\nu}^{(L)}.
    \end{align*}
    By considering those configurations of $S_{X',Y}$ where $Y_{\nu}^{(i)}$ is assigned to $X_{\mu}'^{(i)}$ for $i \in [L]$, it follows that
    \begin{align}
        \frac{1}{M} \E\left[ \log Z_{X',Y} \right] & \geq \frac{1}{M}\sum_{i=1}^L \E\left[ \log Z_{X_\mu'^{(i)},\, Y_{\nu}^{(i)}} \right] = \frac{1}{\ell L} \sum_{i=1}^L \E\left[ \log Z_{X_\mu'^{(i)},\, Y_{\nu}^{(i)}} \right] \nonumber \\
        & = \frac{1}{\ell} \E\left[ \log Z_{X_\mu'^{(1)},\, Y_{\nu}^{(1)}} \right] + \frac{1}{M} \E\left[\log Z_{X_\mu'^{(L)},\, Y_{\nu}^{(L)}} - \log Z_{X_{\mu}'^{(1)},\, Y_{\nu}^{(1)}} \right] \nonumber \\
        & = \frac{1}{\ell}\E\left[ \log Z_{X'_{1:\ell/\alpha},\,Y_{1:\ell}} \right] + o(1). \label{eq:interpolation_lower_bound}
    \end{align}
    On the other hand, we define 
    \begin{align*}
        & u = u(M) := 2^{2^{\phi(M)+1}} \in \mathcal S;
        & U = U(M) := u(M)/M \geq M.
    \end{align*}
    We define $\Tilde{\mu} \in \mathcal A_{N}$ and $\Tilde{\nu} \in \mathcal A_M$ so for all $i \in [U-1]$, we have that $\Tilde{\mu}(i) = N$ and $\Tilde{\nu}(i) = M$. We decompose $X'_{1:u/\alpha}$ and $Y_{1:u}$ via
    \begin{align*}
        & X'_{1:u/\alpha} = X_{\Tilde{\mu}}'^{(1)}X_{\Tilde{\mu}}'^{(2)}\cdots X_{\Tilde{\mu}}'^{(U)};
        & Y_{1:u} = Y_{\Tilde{\nu}}^{(1)}Y_{\Tilde{\nu}}^{(2)}\cdots Y_{\Tilde{\nu}}^{(U)}.
    \end{align*}
    By considering those configurations of $S_{X'_{1:u/\alpha},Y_{1:u}}$ for which $Y_{\Tilde{\nu}}^{(i)}$ is assigned to $X_{\Tilde{\mu}}'^{(i)}$ for $i \in [U]$, it follows that
    \begin{align}
        \frac{1}{u} \E\left[ \log Z_{X'_{1:u/\alpha},\,Y_{1:u}} \right] & \geq \frac{1}{u}\sum_{i=1}^U \E\left[ \log Z_{X_{\Tilde{\mu}}'^{(i)},\, Y_{\Tilde{\nu}}^{(i)}} \right] = \frac{1}{M} \cdot \frac{1}{U}\sum_{i=1}^U \E\left[ \log Z_{X_{\Tilde{\mu}}'^{(i)},\, Y_{\Tilde{\nu}}^{(i)}} \right] \nonumber \\
        & = \frac{1}{M}\E\left[ \log Z_{X_{\Tilde{\mu}}'^{(1)},\, Y_{\Tilde{\nu}}^{(1)}} \right] + \frac{1}{U} \cdot \frac{1}{M} \E\left[ \log Z_{X_{\Tilde{\mu}}'^{(U)},\, Y_{\Tilde{\nu}}^{(U)}} - \log Z_{X_{\Tilde{\mu}}'^{(1)},\, Y_{\Tilde{\nu}}^{(1)}} \right] \nonumber  \\
        & = \frac{1}{M}\E[\log Z_{X', Y}] + o(1). \label{eq:interpolation_upper_bound}
    \end{align}
    Altogether, we have that
    \begin{align*}
        \frac{1}{\ell} \E\left[\log Z_{X'_{1:\ell/\alpha},\,Y_{1:\ell}}\right] + o(1) \stackrel{\eqref{eq:interpolation_lower_bound}}{\leq} \frac{1}{M}\E\left[ \log Z_{X',Y} \right] \stackrel{\eqref{eq:interpolation_upper_bound} }{\leq} \frac{1}{u} \E\left[\log Z_{X'_{1:u/\alpha},\,Y_{1:u}}\right] + o(1),
    \end{align*}
    from which we conclude that $\frac{1}{M}\E[\log Z_{X',Y}]$ converges.
\end{proof}

\begin{proof}[Proof of Equation \eqref{eq:weak_law_null}]
    Taking expectations in \eqref{eq:subsequence_equation_1} with normalization by $N$ yields
    \begin{align*}
        \frac{1}{N} \E\left[ \log Z_{X',Y} \right] = \frac{1}{\sqrt{M}/\alpha} \E\left[ \log Z_{X'_{1:\sqrt{M}/\alpha},\, Y_{1:\sqrt{M}}} \right] + \E\left[\frac{1}{N}\left(\log Z_{X', Y}(\mu^*) - \log Z_{X', Y}(\Bar{\mu}) \right)\right] + o(1).
    \end{align*}
    \cref{lem:increasing_expectation} and Markov's inequality now imply that
    \begin{align} \label{eq:discrepancy_from_max}
        0 \leq \frac{1}{N}\left(\log Z_{X', Y}(\mu^*) - \log Z_{X', Y}(\Bar{\mu}) \right) = o_p(1).
    \end{align}
    Therefore, we may write \eqref{eq:subsequence_equation_1} with normalization by $N$ as
    \begin{align} \label{eq:subsequence_equation_3}
        \frac{1}{N}\log Z_{X',Y} = \frac{1}{\sqrt{N}} \sum_{i=1}^{\sqrt{M}} \frac{1}{\sqrt{N}} \log Z_{X_{\Bar{\mu}}'^{(i)},\, Y_{\Bar{\nu}}^{(i)}} + o_p(1),
    \end{align}
    with the $o_p(1)$ term being $O(1)$. Since $\Bar{\mu}$ and $\Bar{\nu}$ are deterministic, the $\sqrt{M}$ pairs 
    \begin{align*}
        \big( X_{\Bar{\mu}}'^{(i)},\, Y_{\Bar{\nu}}^{(i)} \big)
    \end{align*}
    are independent, so the corresponding summands in \eqref{eq:subsequence_equation_3} are also independent. We conclude that
    \begin{align} 
        \Var\left(\frac{1}{N}\log Z_{X',Y}\right) & = \Var\left( \frac{1}{\sqrt{N}} \sum_{i=1}^{\sqrt{M}} \frac{1}{\sqrt{N}} \log Z_{X_{\Bar{\mu}}'^{(i)},\, Y_{\Bar{\nu}}^{(i)}} \right) + o(1) \nonumber \\
        & = \frac{\sqrt{M} \cdot \Var\left( \frac{1}{\sqrt{N}} \log Z_{X_{\Bar{\mu}}'^{(1)},\, Y_{\Bar{\nu}}^{(1)}} \right)}{N} + o(1) = O\big( N^{-1/2} \big) + o(1) \ll 1. \label{eq:variance_computation}
    \end{align}
    The desired weak law now follows from  \cref{lem:increasing_expectation}, \eqref{eq:variance_computation}, and Chebyshev's inequality.
\end{proof}

\subsection{Planted model}

We now prove the weak limit part of \eqref{eq:weak_law_planted_bounds}, the analogue of \eqref{eq:weak_law_null} for the planted setting. We begin by proving \cref{prop:subset_weak_limit_planted_deletion_channel}, a deletion channel variant of this weak limit which settles \cite[Conjecture 1]{pernice2024mutual}.

\begin{proposition}[{Deletion channel planted limit; \cite[Conjecture 1]{pernice2024mutual}}] \label{prop:subset_weak_limit_planted_deletion_channel}
    Fix $\alpha \in (0,1)$. Let $\BDC_{1-\alpha}(X)$ denote the outcome of passing $X \in \{0,1\}^N$ through a deletion channel with deletion probability $1 - \alpha$. There exists a constant $\fpl(\alpha) > 0$ such that
    \begin{align} \label{eq:random_subset_weak_law}
        \frac{1}{N} \log Z_{X,\, \BDC_{1-\alpha}(X)} \xrightarrow{a.s.} \fpl(\alpha).
    \end{align}
\end{proposition}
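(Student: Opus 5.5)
We will realize the deletion channel on an infinite i.i.d. sequence and apply Kingman's subadditive ergodic theorem (in its superadditive form) to $W_{s,t} := \log Z_{X_{s+1:t},\,D_{s,t}}$, where $D_{s,t}$ denotes the output of the channel restricted to the input positions $s+1,\dots,t$.

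\textbf{Construction.} Let $(X_n,\xi_n)_{n\ge1}$ be i.i.d. pairs, with $X_n\sim\Unif\{0,1\}$ and independent retention indicators $\xi_n\sim\Ber(\alpha)$. Then $\BDC_{1-\alpha}(X_{1:N})$ is exactly the concatenation of the retained bits $X_n$, $n\le N$. With this realization, $D_{0,N} = D_{0,n}D_{n,N}$ for all $0 \le n \le N$.

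\textbf{Key steps.}
\begin{enumerate}
\item \emph{Superadditivity.} Concatenating any embedding of $D_{s,t}$ into $X_{s+1:t}$ with any embedding of $D_{t,u}$ into $X_{t+1:u}$ gives an embedding of $D_{s,u}$ into $X_{s+1:u}$. Distinct pairs give distinct embeddings, so $Z$ is supermultiplicative and
\[
W_{s,u}\ge W_{s,t}+W_{t,u}.
\]
Unlike the null model, there is no positivity problem here. Each $Z\ge1$ always, because the identity embedding along the retained positions is a valid embedding. Hence every $W\ge 0$ and the logarithm is always defined.
\item \emph{Stationarity and ergodicity.} The shift $(X_n,\xi_n)\mapsto(X_{n+1},\xi_{n+1})$ is measure preserving and ergodic, since it is a Bernoulli shift. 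The array $(W_{s+1,t+1})$ therefore has the same joint law as $(W_{s,t})$.
\item \emph{Integrability and bounds.} We have $0\le W_{0,N}\le \log\binom{N}{|D_{0,N}|}\le N\log 2$. This gives $\E|W_{0,1}|<\infty$ and $\sup_N \E W_{0,N}/N\le\log 2$.
\end{enumerate}

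\textbf{Applying Kingman.} Kingman's theorem, applied to $-W$, gives
\[
\frac1N W_{0,N}\to \fpl(\alpha) := \sup_N \frac{\E W_{0,N}}{N}
\]
almost surely and in $L^1$. By ergodicity the limit is a constant, and it is finite by step 3.

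\textbf{Positivity of the limit.} It suffices to exhibit one $N$ with $\E W_{0,N}>0$; then $\fpl(\alpha) \ge \E W_{0,N}/N > 0$. Take $N=2$. With probability $\alpha(1-\alpha)\cdot 2\cdot\tfrac12>0$ exactly one bit is retained and $X_1=X_2$. On that event $Z=2$, so $W_{0,2}=\log 2$, and hence $\E W_{0,2}\ge \alpha(1-\alpha)\log 2>0$.

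The main obstacle is just the bookkeeping: making sure the channel is realized on a common probability space, so that superadditivity holds pathwise and not merely in distribution. The construction above settles this, since each retention indicator is attached to a fixed input position. One further point to note: "a.s." must refer to this coupling across $N$, which is the natural reading of the statement.
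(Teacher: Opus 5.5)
Your proof is correct and follows the same route as the paper. Both arguments realize the channel on a common i.i.d.\ sequence, use pathwise superadditivity of $\log Z$, and apply Kingman's superadditive ergodic theorem, with ergodicity coming from the Bernoulli shift. Your write-up is slightly tighter in two places the paper leaves loose or implicit: you note that $Z \ge 1$ always, so no $\log 0$ convention is needed, and you check $\fpl(\alpha) > 0$ explicitly via $\E W_{0,2} \ge \alpha(1-\alpha)\log 2$.
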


\begin{proof}
    It is clear that for any $N_1, N_2 \in \N$, we have that
    \begin{align}
        & \log Z_{X_{1:N_1+N_2},\, \BDC_{1-\alpha}(X_{1:N_1+N_2})} \nonumber \\
        & \qquad \geq \log Z_{X_{1:N_1},\, \BDC_{1-\alpha}(X_{1:N_1})} + \log Z_{X_{N_1+1:N_1+N_2},\, \BDC_{1-\alpha}(X_{N_1+1:N_1+N_2})}. \label{eq:planted_superadditive}
    \end{align}
    Specifically, \eqref{eq:planted_superadditive} holds for the following reason. The partition function in the LHS of \eqref{eq:planted_superadditive} counts unconstrained subsequence embeddings of $\BDC_{1-\alpha}(X_{1:N_1+N_2})$ into $X_{1:N_1+N_2}$. On the other hand, the expression on the RHS counts the logarithm of the number of constrained such subsequence embeddings for which $\BDC_{1-\alpha}(X_{1:N_1})$ is mapped into $X_{1:N_1}$ and $\BDC_{1-\alpha}(X_{N_1+1:N_1+N_2})$ is mapped into $X_{N_1+1:N_1+N_2}$. If one of the partition functions of \eqref{eq:planted_superadditive} vanishes, then $\BDC_{1-\alpha}$ has deleted every bit of the corresponding string. In that case the relevant logarithmic term is $0$ by convention, and \eqref{eq:planted_superadditive} is readily checked to remain valid. Therefore, invoking the superadditive ergodic theorem \cite{kingman1973subadditive} on the integrable random variables
    \begin{align*}
        g_N := \log Z_{X_{1:N},\, \BDC_{1-\alpha}(X_{1:N})},
    \end{align*}
    noting that the corresponding transformation is ergodic as it is effectively a Bernoulli shift, yields the desired.
\end{proof}

\begin{proof}[Proof of weak limit of \eqref{eq:weak_law_planted_bounds}]
    We couple the random string $\BDC_{1-\alpha}(X)$ with $Y$ by defining $Y$ via inserting or deleting $|M - \alpha N|$ bits uniformly at random. If the planted string $Y'$ of $X$ is obtained from the planted string $Y$ of $X$ by including a bit of $X$ in its appropriate position, then it holds that
    \begin{align*}
        |\log Z_{X,Y'} - \log Z_{X,Y}| \leq N,
    \end{align*}
    so the log-partition function increases by an additive margin of at most $\log N$. Indeed, this crude bound follows via choosing one of the $N$ bits of $X$ that the new bit is mapped to when forming $Y'$ from $Y$, as the remaining bits of $Y$ must still correspond to a valid subsequence embedding. This observation, together with \cref{prop:subset_weak_limit_planted_deletion_channel} and the standard fact that 
    \begin{align*}
        |\BDC_{1-\alpha}(X)| \sim \Bin(N, \alpha)
    \end{align*}
    concentrates about $\alpha N = M$ with $O(\sqrt{N}) \ll N$ deviations is now enough to derive the desired result. We leave the straightforward details to the reader.
\end{proof}

\begin{remark}
    A weaker version of \cref{thm:weak_law} in which all inequalities are non-strict now follows. Indeed, since the sequence $\frac{1}{N} \log Z_{X',Y}$ is uniformly bounded, it follows that
    \begin{align} \label{eq:null_jensen}
        \fnull(\alpha) = \lim_{N \to \infty} \E\left[ \frac{1}{N} \log Z_{X',Y} \right] \stackrel{\text{(Jensen)}}{\leq} \lim_{N \to \infty} \frac{1}{N}  \log \E\left[ Z_{X',Y} \right] = \fnullann(\alpha).
    \end{align}
    Towards proving the other non-strict inequality, we let $\sigma^* \in \Sigma_{N,M}$ denote the planted embedding, so that
    \begin{align} \label{eq:planted_subsequence}
        Y = X_{\sigma^*}.
    \end{align}
    We let $x \in \{0,1\}^M$ be an independent uniform random binary string. For $z \in \N$, via explicitly pinning down the collection $A$ of $M$-subsequences of $X$ corresponding to embeddings of $Y$, we can write
    \begin{align}
        & \Prob\left( Z_{X,Y} = z \right) = \E_{\sigma^*,X} \left[ \sum_{\substack{A \subseteq \binom{[N]}{M}: \\ |A| = z}} \1\left\{ X|_S = Y \ \text{for all} \ S \in A; \ X|_S \neq Y \ \text{for all} \ S \notin A \right\} \right] \nonumber \\
        & \quad = 2^M \E_{\sigma^*, X, x} \left[ \sum_{\substack{A \subseteq \binom{[N]}{M} \\ |A| = z}} \1\left\{ X|_S = x \ \text{for all} \ S \in A; \ X|_S \neq x \ \text{for all} \ S \notin A \right\} \1\left\{ Y = x \right\} \right] \nonumber \\
        & \quad \stackrel{\eqref{eq:planted_subsequence}}{=} \frac{2^M}{\binom{N}{M}} \E_{X, x} \left[ \sum_{\sigma^* \in \Sigma_{N,M}} \sum_{\substack{A \subseteq \binom{[N]}{M} \\ |A| = z}} \1\left\{ X|_S = x \ \text{for all} \ S \in A; \ X|_S \neq x \ \text{for all} \ S \notin A \right\} \1\left\{ X_{\sigma^*} = x \right\} \right] \nonumber \\
        & \quad = \frac{2^M}{\binom{N}{M}} \E_{X, x} \left[ Z_{X,x} \cdot \1\left\{Z_{X,x} = z \right\} \right] = \frac{2^M}{\binom{N}{M}} \E\left[ Z_{X',Y} \cdot \1\left\{Z_{X',Y} = z \right\} \right]. \label{eq:planted_point_prob}
    \end{align}
    We remark that the above calculation is an application of Nishimori's identity. Indeed, the planted law is obtained from the null law by reweighting with the partition function. Concretely, the distribution of $Z_{X,Y}$ under the planted model is the $Z_{X',Y}$-size-biased tilt of its distribution under the null model. Altogether, we conclude that
    \begin{align}
        \E\left[ \log Z_{X,Y} \right] & = \sum_{z \geq 0} \log z \cdot \Prob\left( Z_{X,Y} = z \right) \stackrel{\eqref{eq:planted_point_prob}}{=} \frac{2^M}{\binom{N}{M}} \sum_{z \geq 0} \log z \cdot \E\left[ Z_{X',Y} \cdot \1\left\{Z_{X',Y} = z \right\} \right] \nonumber \\
        & = \frac{2^M}{\binom{N}{M}} \E\left[ Z_{X',Y} \log Z_{X',Y} \right] \stackrel{\text{(Jensen)}}{\geq} \frac{2^M}{\binom{N}{M}} \E\left[ Z_{X',Y} \right] \log \E\left[ Z_{X',Y} \right] = \log \E\left[ Z_{X',Y} \right]. \label{eq:planted_quenched_jensen}
    \end{align}
    We establish that the Jensen gaps in \eqref{eq:null_jensen} and \eqref{eq:planted_quenched_jensen} are nontrivial in forthcoming sections.
\end{remark}

\section{Proof of the Quenched-Annealed Gaps} \label{sec:main_result_upper_bound}

\subsection{Definitions and notation} \label{subsec:defns_and_notation}

We begin by recording the conventions that will remain in force unless explicitly stated otherwise. Throughout, lowercase letters, such as $x \in \{0,1\}^N$ and $y \in \{0,1\}^M$, denote deterministic binary strings to distinguish them from random strings, which are denoted using uppercase letters. We also fix the density parameter $\alpha \in (0,1)$ and the block length $b \in \N$. The quantity $\alpha b$ should be interpreted as the typical block length of an \textit{$x$-random string}, i.e., a uniformly chosen length-$M$ subsequence of $x$. In particular, in the planted variant of the Random Subsequence Model, $Y$ is an $X$-random string. We assume that $N$ is a large positive integer which tends to infinity. We occasionally suppress the dependence of certain quantities on $N$ in our notation --- this should never raise any confusion. 

For convenience in the argument there, throughout \cref{subsec:alignment_W_random_Y} (namely the proof of \cref{prop:regular_alignment_property}), we proceed with the understanding that we are given a fixed typical (as defined in \cref{defn:typical_W}) deterministic string $x \in \{0,1\}^N$ and work with the decomposition
\begin{align} \label{eq:X_decomp}
    x = \big( x^{(1)}, \dots, x^{(B)} \big),
\end{align}
studying how an $x$-random string aligns with the structure of $x$. We begin by introducing shorthand for the corresponding ``planted'' probability measure induced by a deterministic string.
\begin{definition}[$x$-planted measure] \label{defn:planted_prob_msr}
    For a fixed string $x \in \{0,1\}^N$, we let the \emph{$x$-planted measure} $\Prob_x$ denote the probability measure on $\bigcup_{k=0}^{N} \{0,1\}^k$ corresponding to including each bit of $x$ independently with probability $\alpha$. Specifically, for $0 \leq k \leq N$ and $y \in \{0,1\}^k$, we define
    \begin{align*}
        \Prob_x\!\left( y \right) := \alpha^k (1-\alpha)^{N-k} Z_{x,y}.
    \end{align*}
\end{definition}

\begin{remark} \label{rmk:V_random_msr}
    Given $x \in \{0,1\}^N$, it is clear that
    \begin{align} \label{eq:x_random_string_law}
        \Prob_x\big( \cdot \mid \{0,1\}^M \big) |_{\{0,1\}^M}
    \end{align}
    denotes the law of an $x$-random string, while a local limit theorem (e.g., see \cite[Theorem 3.5.3]{durrett2019probability}) yields 
    \begin{align*}
        \Prob_x\big( \{0,1\}^M \big) = \Theta\big( N^{-1/2} \big).
    \end{align*}
    Thus, we have that
    \begin{align} \label{eq:V_random_msr_bayes}
        \Prob_x\big( \cdot \mid \{0,1\}^M \big) \leq \Theta\big( N^{-1/2} \big)\Prob_x\!\left( \cdot \right).
    \end{align}
    As we strictly concern ourselves with the exponential orders of rare events under \eqref{eq:x_random_string_law}, it suffices to work with the unconditional measure $\Prob_x$.
\end{remark}

Next, we introduce those parameters needed to derive concentration guarantees at the desired level of granularity. We take $\epsilon > 0$ to be some small fixed constant ($\epsilon = 1/24$ suffices for the argument to hold), and we introduce shorthand for
\begin{align} \label{eq:window_parameters}
    & \delta := b^{-1/2+\epsilon};
    & \gamma := b^{-\epsilon}.
\end{align}
We now elaborate on the specific roles that these quantities play over the course of our proof. We recall that our aim is to demonstrate a distinction between samples $(X,Y)$ drawn from the null and the planted variants of the Random Subsequence Model. Loosely, we will show that near-equipartitions into $B$ blocks of $Y$ drawn from the planted model resemble the structure of the corresponding block decomposition of $X$, while there is no near-equipartition of $Y$ drawn from the null model for which such a resemblance is attained. We proceed with the relevant definitions, which crucially rely upon our choices of the parameters introduced in \eqref{eq:window_parameters}.

\begin{definition}[Induced and standardized near-equipartitions] \label{defn:near_equipartitions}
    Given $y \in \{0,1\}^*$, the collection of \emph{induced near-equipartitions} of $y$ is
    \begin{align*}
        \mathcal{NE}_\ind\!\left( y \right) := \left\{ \big(y^{(1)}, \dots, y^{(B)}\big) \in \left(\{0,1\}^* \right)^B \,\middle|\,
        \begin{array}{c}
            y = y^{(1)}\cdots y^{(B)}; \\ 
            |y^{(i)}| \in \{0,1,\dots,b\} \text{ for all } i \in [B]; \\
            \left| \left\{ i \in [B] : |y^{(i)}| \notin \left[ (1-\delta)\alpha b, (1+\delta)\alpha b \right] \right\} \right| \leq \gamma B
        \end{array} \right\}.
    \end{align*}
    On the other hand, the collection of \emph{standardized near-equipartitions} of $y$ is
    \begin{align*}
        \mathcal{NE}_\std\!\left( y \right) := \left\{ \big(y^{(1)}, \dots, y^{(B)}\big) \in \left(\{0,1\}^* \right)^B \,\middle|\,
        \begin{array}{c}
            y = y^{(1)}\cdots y^{(B)}; \\
            |y^{(i)}| \in \{0,1,\dots,b\} \text{ for all } i \in [B]; \\
            \left| \left\{ i \in [B] : |y^{(i)}| \neq \alpha b \right\} \right| \leq 3\gamma B
        \end{array} \right\}.
    \end{align*}
\end{definition}

\begin{remark}
    Although we generally suppress floors and ceilings throughout the analysis, \cref{defn:near_equipartitions} is one location where a brief clarification is helpful. When $\alpha b \notin \mathbb N$, the collection of standardized near-equipartitions $\mathcal{NE}_\std(y)$ should be defined by assigning either $\lfloor \alpha b\rfloor$ or $\lceil \alpha b\rceil$ to each superscript $i$. This is done in such a way that for every $k$, the sum of the first $k$ block lengths differs from $\alpha b k$ by at most $1$. Since $\alpha b$ is fixed, such a choice can be made once and for all. These adjustments do not affect the asymptotic estimates in the proof, and we suppress the corresponding routine modifications, both here and in all other instances where floors and ceilings are omitted.
\end{remark}

Induced near-equipartitions of $y$ correspond to ways of partitioning $y$ into $B$ blocks so that very few of these blocks are multiplicatively far from $\alpha b$, and they should (in light of \cref{rmk:V_random_msr}) importantly be thought of as capturing the \textit{induced} block structure of a typical outcome of a string drawn from the planted model. On the other hand, owing to our choice of definitions and the strict restriction that ``good blocks'' $y^{(i)}$ must have size $\alpha b$, the collection of standardized near-equipartitions of $y$ is far smaller. This will be crucial in \cref{subsec:alignment_impossibility_ind_Y}, where we establish an approximation-type result for the following notion of total agreement when substituting $\mathcal{NE}_\ind(Y)$ for $\mathcal{NE}_\std(Y)$ via the \textit{standardization} algorithm.

\begin{definition}[Induced and standardized total alignment] \label{defn:alignment_fcns}
    Given $y \in \{0,1\}^*$, we respectively define its
    \emph{induced total alignment} and its \emph{standardized total alignment} with the string $x = \big(x^{(1)}, \dots, x^{(B)}\big) \in \{0,1\}^{N}$ via 
    \begin{align}
        & T_\ind\!\left( x, y \right) := \sup_{\left( y^{(1)}, \dots, y^{(B)} \right) \in \mathcal{NE}_\ind\left( y \right)} \frac{1}{B} \sum_{i=1}^B A_\loc\big( x^{(i)}, y^{(i)}\big); \label{eq:total_alignment_first_kind_defn} \\
        & T_\std\!\left( x, y \right) := \sup_{\left( y^{(1)}, \dots, y^{(B)} \right) \in \mathcal{NE}_\std\left( y \right)} \frac{1}{B} \sum_{i=1}^B A_\loc\big( x^{(i)}, y^{(i)}\big). \label{eq:total_alignment_second_kind_defn}
    \end{align}
    The individual \emph{local alignment} terms (i.e., the summands) on the RHS of \eqref{eq:total_alignment_first_kind_defn} and \eqref{eq:total_alignment_second_kind_defn} are defined via
    \begin{align} \label{eq:local_alignment_defn}
         & A_\loc\big( x^{(i)}, y^{(i)} \big) := \begin{cases}
             0 & \maj(x^{(i)}) \neq \maj(y^{(i)}) \\
             1 \land \delta \Delta(y^{(i)}) & \maj(x^{(i)}) = \maj(y^{(i)})
         \end{cases};
         & \Delta(y^{(i)}) := \left| \sum_{j=1}^{|y^{(i)}|} \left( 2( y^{(i)})_j - 1 \right) \right|,
    \end{align}
    where $\maj(z) \in \{0,1\}$ denotes the majority bit of $z \in \{0,1\}^*$, taken to be $1$ in the case of a tie.
\end{definition}
Corresponding the binary string $y^{(i)}$ to a realization of a simple random walk with Rademacher increments in the natural way, we note that the expression $\Delta(y^{(i)})$ in \eqref{eq:local_alignment_defn} is the absolute value of the random walk at time $|y^{(i)}|$. In particular, we may equivalently write the local alignment via (with $\sign(0) = 1$)
\begin{align} \label{eq:local_alignment_SRW}
    A_\loc\big(x^{(i)}, y^{(i)}\big) = \begin{cases}
         0 & \sign\left( \sum_{j=1}^{|x^{(i)}|} \left( 2( x^{(i)} )_j - 1 \right) \right) \neq \sign\left( \sum_{j=1}^{|y^{(i)}|} \left( 2( y^{(i)})_j - 1 \right) \right); \\
         1 \land \delta \Delta(y^{(i)}) & \sign\left( \sum_{j=1}^{|x^{(i)}|} \left( 2( x^{(i)} )_j - 1 \right) \right) = \sign\left( \sum_{j=1}^{|y^{(i)}|} \left( 2( y^{(i)} )_j - 1 \right) \right).
     \end{cases}
\end{align}

We mention in passing that the notions of local alignment and total alignment that we introduce here are loosely reminiscent of recent ideas in the trace reconstruction literature. For instance, \cite{holden2018subpolynomial} introduced robust alignment tests motivated by this correspondence, together with the appropriate scaling needed to make the consequences of such tests transparent.

Our next definition introduces the collection of $N$-bit strings that we restrict our attention to.
\begin{definition}[Typical ambient strings] \label{defn:typical_W}
    We say that $x \in \{0,1\}^{N}$ is \emph{typical} if at least $B/10$ of the blocks $x^{(i)}$ are such that $\Delta(x^{(i)}) \geq \sqrt{b}$.
\end{definition}
It is easy to justify the use of the term ``typical'' in \cref{defn:typical_W}. Indeed, the central limit theorem yields
\begin{align} \label{eq:typical_block_cond}
    \frac{1}{\sqrt{b}} \Delta(X^{(i)}) = \frac{1}{\sqrt{b}} \sum_{j=1}^{b} \left( 2( X^{(i)} )_j - 1 \right) \xrightarrow{d} \mathcal N(0,1) \implies \Prob\left( \Delta(X^{(i)}) \geq \sqrt{b} \right) \stackrel{\text{($b$ large)}}{>} \frac{3}{10},
\end{align}
from which it follows that
\begin{align*}
    \Prob\left( \sum_{i=1}^B \1\left\{ \Delta(X^{(i)}) \geq \sqrt{b} \right\} \leq \frac{B}{10} \right) & \stackrel{\eqref{eq:typical_block_cond}}{\leq} \Prob\left( \sum_{i=1}^B \1\left\{ \Delta(X^{(i)}) \geq \sqrt{b} \right\} - \Prob\left( \Delta(X^{(i)}) \geq \sqrt{b} \right) \leq -\frac{B}{5} \right) \\
    & \stackrel{\text{(Hoeffding)}}{\leq} e^{-\Omega(N)},
\end{align*}
so that a uniform random string $X \in \{0,1\}^N$ is typical, in the sense of \cref{defn:typical_W}, with probability at least $1 - e^{-\Omega(N)}$. We conclude this section by pinning down our key notion of structural resemblance with a typical string $x \in \{0,1\}^N$. We let 
\begin{align} \label{eq:alignment_constant}
    \beta^\star(\alpha) := \frac{\beta(\alpha)}{40} := \frac{1}{40} \left[ \Prob\left( \mathcal N\left( \alpha, \alpha\left( 1 - \alpha \right) \right) \geq 0 \right) - \frac{1}{2} \right] > 0;
\end{align}
we clarify in the forthcoming analysis how this constant shows up.
\begin{definition}[Aligned and good sets]
    Fix $x \in \{0,1\}^{N}$. The \emph{$x$-aligned set} $\mathcal A(x) \subseteq \{0, 1\}^*$ is the collection of all $y \in \{0,1\}^*$ for which 
    \begin{align*}
        T_\ind(x, y) \geq 1/2 + \beta^\star(\alpha).
    \end{align*}
    The \emph{$x$-good set} $\mathcal G(x) \subseteq \{0,1\}^M$ is the restriction of $\mathcal A(x)$ to $\{0,1\}^M$, i.e.,
    \begin{align} \label{eq:good_set_defn}
        \mathcal G(x) := \mathcal A(x) \cap \left\{ 0,1 \right\}^M.
    \end{align}
\end{definition}
In \cref{subsec:alignment_W_random_Y}, we show that for $(X,Y)$ drawn from the planted model, $Y$ will overwhelmingly land in $\mathcal G(X)$. On the other hand, we show in \cref{subsec:alignment_impossibility_ind_Y} that for $(X',Y)$ drawn from the null model, $Y$ overwhelmingly fails to land in $\mathcal G(X')$. This distinction will then lead to the proof of \cref{thm:weak_law}.

\subsection{Aligned structure under planted measures} \label{subsec:alignment_W_random_Y}

With the machinery of \cref{subsec:defns_and_notation} in place, we are now ready to introduce the regular alignment property, which is the key notion of structural alignment between binary strings which we exploit to prove the upper bound of \cref{thm:weak_law}. In the present \cref{subsec:alignment_W_random_Y}, we handle the asymptotic behavior of $x$-random strings for typical $x \in \{0,1\}^N$. Here, we demonstrate via standard concentration arguments that with probability $1 - e^{-\Omega(N)}$, an $x$-random string has large induced total alignment with $x$. Indeed, this is largely as expected --- it is natural to suspect that $x$-random strings should resemble $x$ with high probability, and \cref{prop:regular_alignment_property} establishes induced total alignment as one such appropriate notion of resemblance. We move on to study the inverse problem in \cref{subsec:alignment_impossibility_ind_Y}, in which we consider the same problem for random strings $(X,Y)$ drawn from the null model and obtain the exact opposite result.

\begin{proposition}[Regular alignment property] \label{prop:regular_alignment_property}
    Uniformly over all typical $x \in \{0,1\}^{N}$,
    \begin{align} \label{eq:regular_alignment_property_statement}
        \binom{N}{M}^{-1} \left| \left\{ S \in \binom{[N]}{M}: x|_S \notin \mathcal G(x) \right\} \right| = \Prob_x\!\left( \{0,1\}^M \setminus \mathcal G(x) \mid \{0,1\}^M \right) \leq e^{-\Omega(N)}.
    \end{align}
\end{proposition}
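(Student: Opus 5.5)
By \cref{rmk:V_random_msr}, specifically \eqref{eq:V_random_msr_bayes}, it suffices to show that $\Prob_x(\mathcal A(x)^c) \le e^{-\Omega(N)}$ under the unconditional measure, in which each bit of $x$ is retained independently with probability $\alpha$. Indeed, conditioning on length $M$ costs only a factor $\Theta(N^{1/2})$, and a string of length $M$ lies outside $\mathcal G(x)$ exactly when it lies outside $\mathcal A(x)$. The advantage of $\Prob_x$ is independence across blocks. Let $Y^{(i)}$ denote the subsequence retained from block $x^{(i)}$. Then the pieces $Y^{(1)},\dots,Y^{(B)}$ are independent, and $(Y^{(1)},\dots,Y^{(B)})$ is a partition of $Y$ with all pieces of length at most $b$. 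So we only need to check two things. First, with overwhelming probability this natural partition lies in $\mathcal{NE}_\ind(Y)$. Second, its average local alignment is at least $1/2+\beta^\star(\alpha)$. Since $T_\ind$ is a supremum over $\mathcal{NE}_\ind(Y)$, this gives $Y\in\mathcal A(x)$.

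\textbf{Step 1 (near-equipartition).} Each length $|Y^{(i)}|\sim\Bin(b,\alpha)$, and these are independent across $i$. By Chernoff,
\[
\Prob\bigl(\bigl||Y^{(i)}|-\alpha b\bigr|>\delta\alpha b\bigr)\le 2e^{-c\alpha\delta^2 b}=2e^{-c\alpha b^{2\epsilon}},
\]
which is much smaller than $\gamma/2=b^{-\epsilon}/2$ once $b$ is large. A second Chernoff/Hoeffding bound over the $B$ independent blocks then shows that at most $\gamma B$ blocks are bad, except with probability $e^{-\Omega(B)}=e^{-\Omega(N)}$, since $b$ is fixed.

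\textbf{Step 2 (expected alignment per block).} The variables $A_\loc(x^{(i)},Y^{(i)})\in[0,1]$ are independent, so Hoeffding reduces the claim to showing
\[
\frac1B\sum_i \E A_\loc(x^{(i)},Y^{(i)})\ge \tfrac12+2\beta^\star(\alpha)=\tfrac12+\tfrac{\beta(\alpha)}{20}.
\]
We split the blocks into two types.
\begin{itemize}
\item[(a)] For any block, $Y^{(i)}$ is a random walk of about $\alpha b$ steps, so $\Delta(Y^{(i)})$ is of order $\sqrt b$. Hence $\delta\Delta(Y^{(i)})\ge 1$ unless $\Delta(Y^{(i)})\le b^{1/2-\epsilon}$, and the local CLT makes that event have probability $O(b^{-\epsilon})=o_b(1)$. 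So, up to $o_b(1)$, $A_\loc$ is just the indicator that $\maj(Y^{(i)})=\maj(x^{(i)})$.
\item[(b)] Condition on $x^{(i)}$ having $k$ ones and $b-k$ zeros. The sum $\sum_j (2Y_j-1)$ is a difference of independent $\Bin(k,\alpha)$ and $\Bin(b-k,\alpha)$ variables. Its mean is $\alpha(2k-b)=\pm\alpha\Delta(x^{(i)})$ and its variance is $b\alpha(1-\alpha)$. For a block with $\Delta(x^{(i)})\ge\sqrt b$, the CLT (Berry--Esseen, uniformly in $k$) gives
\[
\Prob(\text{majorities agree})\ge \Prob\bigl(\mathcal N(\alpha,\alpha(1-\alpha))\ge0\bigr)-o_b(1)=\tfrac12+\beta(\alpha)-o_b(1).
\]
Monotonicity in the drift is used here.
\item[(c)] For every other block, the drift has the correct sign (or is zero), so the agreement probability is at least $1/2-o_b(1)$. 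The tie convention $\sign(0)=1$ perturbs this only by $O(b^{-1/2})$.
\end{itemize}
Typicality of $x$ guarantees at least $B/10$ blocks of type (b). Averaging gives at least
\[
\tfrac12+\tfrac{\beta(\alpha)}{10}-o_b(1)\ge\tfrac12+\tfrac{\beta(\alpha)}{20}
\]
for $b$ large, uniformly over typical $x$.

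\textbf{Main obstacle.} The delicate step is Step 2. The per-block bounds must be uniform in the block composition, so we rely on Berry--Esseen rather than the plain CLT. We also need to control three $o_b(1)$ losses: the truncation $1\wedge\delta\Delta$, ties, and the event $\Delta(Y^{(i)})\le\delta^{-1}$. Each of these must be small compared to $\beta(\alpha)/20$, which is why $b$ must be chosen large depending on $\alpha$. A union bound over the failure events of Steps 1 and 2 then gives $e^{-\Omega(N)}$, uniformly over typical $x$.
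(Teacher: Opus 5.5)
Your proposal is correct and follows essentially the same route as the paper's proof in the appendix. You reduce to the unconditional measure $\Prob_x$, certify that the natural block decomposition of $\BDC_{1-\alpha}(x)$ lies in $\mathcal{NE}_\ind$ via two Chernoff bounds, and lower-bound the agreement probability by $\frac12+\beta(\alpha)-o_b(1)$ on the at least $B/10$ blocks with $\Delta(x^{(i)})\ge\sqrt b$ and by $\frac12-o_b(1)$ elsewhere. The paper lower-bounds $\Prob(A_\loc=1)$ and runs two separate Hoeffding arguments with an extra case split on $|\mathcal I_x^c|$, whereas your single Hoeffding bound on the full sum of independent $[0,1]$-valued $A_\loc$ terms is a slight streamlining of the same argument.
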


We observe that the equality of \eqref{eq:regular_alignment_property_statement} is an immediate consequence of \cref{defn:planted_prob_msr}. To establish the inequality of \eqref{eq:regular_alignment_property_statement}, it suffices to show that uniformly over all typical $x \in \{0,1\}^{N}$, it holds that
\begin{align} \label{eq:regular_alignment_aligned_set}
    \Prob_x\!\left( \mathcal A(x)^c \right) = e^{-\Omega(N)}.
\end{align}
Indeed, the desired result would then readily follow via
\begin{align*}
    & \Prob_x\!\left( \{0,1\}^M \setminus \mathcal G(x) \mid \{0,1\}^M \right) \stackrel{\eqref{eq:good_set_defn}}{=} \Prob_x\!\left( \mathcal A(x)^c \cap \{0,1\}^M \mid \{0,1\}^M \right) \\
    & \qquad = \Prob_x\!\left( \mathcal A(x)^c \mid \{0,1\}^M \right) \stackrel{\eqref{eq:V_random_msr_bayes}}{\leq} \Theta( N^{1/2} ) \Prob_x\!\left( \mathcal A(x)^c \right) \stackrel{\eqref{eq:regular_alignment_aligned_set}}{\leq} \Theta( N^{1/2} ) e^{-\Omega(N)} = e^{-\Omega(N)}.
\end{align*}
The remainder of the proof of \cref{prop:regular_alignment_property} involves routine techniques from probabilistic combinatorics. The work lies not in developing novel ideas, but in adapting these arguments to the framework and definitions introduced in \cref{subsec:defns_and_notation} (which are necessary to carry out the argument of \cref{subsec:alignment_impossibility_ind_Y}). For this reason, we defer the proof to \cref{app:RAP_proof}.

\subsection{Failure of biased alignment under the null model} \label{subsec:alignment_impossibility_ind_Y}

\cref{prop:regular_alignment_property} shows that, for any typical ambient string $x$, an $x$-random string lies in $\mathcal G(x)$ with overwhelming probability. In this subsection we prove the complementary statement under the null model. Specifically, for $(X',Y) \in \{0,1\}^{N} \times \{0,1\}^M$ drawn from the null model, we show that with overwhelming probability,
\begin{align*}
    Y \notin \mathcal G(X').
\end{align*}
This result produces the desired separation, which we exploit in \cref{subsec:proof_of_thm} to prove \cref{thm:weak_law}. 

As usual, we decompose
\begin{align*}
    X' = \big( X'^{(1)}, \dots, X'^{(B)} \big)
\end{align*}
via equipartitioning $X'$ into $B$ contiguous blocks of length $b$. Since $Y$ is uniformly distributed on $\{0,1\}^M$, we heuristically expect that for any fixed induced near-equipartition $\big(Y^{(1)},\dots,Y^{(B)}\big)\in\mathcal{NE}_\ind(Y)$, the average
\begin{align*}
    \frac{1}{B}\sum_{i=1}^B A_{\loc}\big(X'^{(i)},Y^{(i)}\big)
\end{align*}
should concentrate near $1/2$. The difficulty is to make such a bound uniform over all induced near-equipartitions of $Y$. Indeed, the class $\mathcal{NE}_{\ind}(Y)$ is too large for the most naive approaches (for instance, a union bound over an $\epsilon$-net defined in terms of block endpoints) to yield guarantees of sufficient strength. To overcome this, we show that induced total alignment can be approximated by standardized total alignment, in which we restrict most blocks to have exactly the same average length $\alpha b$ at the expense of permitting more blocks which deviate from this length. 

We begin by establishing that this approach can yield a bound of the desired form.

\begin{proposition}[Standardized alignment bound] \label{prop:total_alignment_2_bd}
    It holds with probability $\geq 1-e^{-\Omega(N)}$ that
    \begin{align*}
        T_\std\!\left( X', Y \right) < \frac{1+\beta^\star(\alpha)}{2}.
    \end{align*}
\end{proposition}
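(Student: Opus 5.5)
The plan is a union bound over the small family $\mathcal{NE}_\std(Y)$, combined with exponential concentration for each fixed standardized partition. The structure of $\mathcal{NE}_\std$ gives a lot of independence. Only at most $3\gamma B$ blocks deviate from length $\alpha b$, so a standardized near-equipartition is determined by two pieces of data: the set $D\subseteq[B]$ of deviant indices, with $|D|\le 3\gamma B$, and the lengths $|y^{(i)}|\in\{0,\dots,b\}$ for $i\in D$. Hence
\begin{align*}
    |\mathcal{NE}_\std(Y)| \le \sum_{k\le 3\gamma B}\binom{B}{k}(b+1)^{k} \le \exp\bigl(B\,[h(3\gamma)+3\gamma\log(b+1)]\bigr).
\end{align*}
Since $\gamma=b^{-\epsilon}$, the exponent per block is $o_b(1)$, namely $O(b^{-\epsilon}\log b)$. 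Note that this family depends only on the length $M$, not on $Y$ itself, so the union is over deterministic cut patterns.

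Fix one such cut pattern $P$. The blocks $X'^{(i)}$ are i.i.d. uniform, and they are independent of $Y$. Conditioned on $Y$, the events $\maj(X'^{(i)})$ are independent across $i$. Each $X'^{(i)}$ has even length or odd length $b$, and its majority bit, with ties broken to $1$, equals a fixed bit with probability at most $\tfrac12+O(b^{-1/2})$. Consequently, conditional on $Y$,
\begin{align*}
    \E\bigl[A_\loc(X'^{(i)},Y^{(i)})\mid Y\bigr]\le (1\land\delta\Delta(Y^{(i)}))\bigl(\tfrac12+O(b^{-1/2})\bigr)\le \tfrac12+O(b^{-1/2}).
\end{align*}
The summands lie in $[0,1]$ and are conditionally independent, so Hoeffding's inequality gives
\begin{align*}
    \Prob\Bigl(\tfrac1B\textstyle\sum_i A_\loc(X'^{(i)},Y^{(i)})\ge \tfrac12+\tfrac{\beta^\star}{2}\Bigm| Y\Bigr)\le \exp\bigl(-2B(\beta^\star/2-O(b^{-1/2}))^2\bigr).
\end{align*}
I will use the crude bound $\le 1/2+O(b^{-1/2})$ and not exploit the fact that $\delta\Delta$ is typically less than $1$. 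The target threshold is $(1+\beta^\star)/2=\tfrac12+\beta^\star/2$, which leaves a gap of order $\beta^\star(\alpha)$, and this gap does not depend on $b$.

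Combining the two steps gives a failure probability of at most $\exp\bigl(B[o_b(1)-\beta^{\star2}/2+O(b^{-1/2})]\bigr)$. The main point to check is that $b$ can be taken large enough, depending only on $\alpha$, that $h(3\gamma)+3\gamma\log(b+1)+O(b^{-1/2})<\beta^\star(\alpha)^2/4$. Since $\beta^\star(\alpha)$ is independent of $b$, this holds for large fixed $b$. The result is then $e^{-\Omega(B)}=e^{-\Omega(N)}$ because $B=N/b$ with $b$ fixed.

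The one delicate point is the tie-breaking bias, together with the uniformity of the per-block bias over block contents $y^{(i)}$, including empty or short deviant blocks. I would handle these by bounding $\Prob(\maj(X'^{(i)})=c)\le \tfrac12+\Prob(\text{tie})=\tfrac12+O(b^{-1/2})$ uniformly in $c$, which covers every case.
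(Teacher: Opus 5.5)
Your proposal is correct and follows essentially the same route as the paper. Both bound $|\mathcal{NE}_\std(Y)|\le \binom{B}{3\gamma B}(b+1)^{3\gamma B}=e^{B\cdot o_b(1)}$, apply Hoeffding to each fixed standardized near-equipartition with a $b$-independent gap of order $\beta^\star(\alpha)$ above $1/2$, and then take a union bound with $b$ large. Your explicit conditioning on $Y$, which works because the cut patterns are deterministic given $M$, together with your uniform $\tfrac12+O(b^{-1/2})$ bound on the tie-breaking bias, makes rigorous the step the paper treats loosely around \eqref{eq:hoeffding_issue}. It also yields an exponent of essentially $\beta^\star(\alpha)^2/2$ rather than the paper's conservative $\beta^\star(\alpha)^2/4$.
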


\begin{proof}
    It readily follows from \cref{defn:near_equipartitions} that
    \begin{align} \label{eq:ne_2_size_bd}
        \left|\mathcal{NE}_\std\!\left( Y \right) \right| \leq \binom{B}{3\gamma B} (b+1)^{3\gamma B} \stackrel{\eqref{eq:window_parameters}}{\approx} \exp\left( B \cdot h(3b^{-\epsilon}) + B \cdot 3b^{-\epsilon} \log (b+1) \right) = e^{B \cdot o_b(1)}.
    \end{align}
    Furthermore, \cref{defn:alignment_fcns} also readily implies that for any $\big(Y^{(1)}, \dots, Y^{(B)}\big) \in \mathcal{NE}_\std(Y)$,
    \begin{align}
        & \Prob\left( \frac{1}{B} \sum_{i=1}^B A_\loc\big( X'^{(i)}, Y^{(i)} \big) \geq \frac{1+\beta^\star(\alpha)}{2} \right) \nonumber \\
        & \leq \Prob\left( \frac{1}{B} \sum_{i=1}^B \1\left\{\maj(X'^{(i)}) = \maj(Y^{(i)}), \Delta(Y^{(i)}) > 0 \right\} \geq \frac{1+\beta^\star(\alpha)}{2} \right) \stackrel{\text{(Hoeffding)}}{\leq} \exp\left( - B \cdot \frac{\beta^\star(\alpha)^2}{4} \right). \label{eq:ne_2_hoeffding_bd}
    \end{align}
    We note that the application of Hoeffding's inequality in \eqref{eq:ne_2_hoeffding_bd} is conservative and involves a denominator of $4$ in the exponential (rather than $2$) due to subtleties arising from ties giving $\Delta(Y^{(i)}) = 0$, and thus
    \begin{align} \label{eq:hoeffding_issue}
        \Prob\left( \maj(X'^{(i)}) = \maj(Y^{(i)}), \Delta(Y^{(i)}) > 0 \right) \neq 1/2.
    \end{align}
    It is readily confirmed that as $b \to \infty$, the LHS of \eqref{eq:hoeffding_issue} tends to $1/2$ and the proportion of bad blocks is vanishing, so our Hoeffding invocation is justified. A union bound and these observations now yield that
    \begin{align}
        & \Prob\left( T_\std\!\left( X', Y \right) \geq \frac{1+\beta^\star(\alpha)}{2} \right) \leq \sum_{(Y^{(1)}, \dots, Y^{(B)}) \in \mathcal{NE}_\std(Y)} \Prob\left( \frac{1}{B} \sum_{i=1}^B A_\loc\big( X'^{(i)}, Y^{(i)} \big) \geq \frac{1+\beta^\star(\alpha)}{2} \right) \nonumber \\
        & \qquad \stackrel{ \eqref{eq:ne_2_hoeffding_bd}}{\leq} \left|\mathcal{NE}_\std\!\left( Y \right) \right| \cdot \exp\left( - B \cdot \frac{\beta^\star(\alpha)^2}{4} \right) \stackrel{\eqref{eq:ne_2_size_bd}}{\leq } e^{B \cdot o_b(1)} \exp\left( - B \cdot \frac{\beta^\star(\alpha)^2}{4} \right) \nonumber \\
        & \qquad \stackrel{\text{($b$ large)}}{\leq} \exp\left( - B \cdot \frac{\beta^\star(\alpha)^2}{8} \right) = e^{-\Omega(N)}. \label{eq:ne_2_exp_bd}
    \end{align}
    This establishes the desired guarantee.
\end{proof}

As the statement of \cref{prop:regular_alignment_property} and the definition of the good set $\mathcal G(X)$ were phrased with respect to induced total alignment, our next task is to relate standardized total alignment to induced total alignment. Towards this end, we introduce the following standardization algorithm, which will serve as our key link between the two different kinds of near-equipartitions of \cref{defn:near_equipartitions}.

\begin{definition}[Standardization algorithm] \label{defn:standardization_algorithm}
    For a string $y \in \{0,1\}^M$, say we are given an induced near-equipartition $\big(y^{(1)}, \dots, y^{(B)}\big) \in \mathcal{NE}_\ind(y)$, and denote the collection of indices corresponding to \textit{exceptional} blocks of this induced near-equipartition via
    \begin{align*}
        \mathcal I_{\left(y^{(1)}, \dots, y^{(B)}\right)}^\exc := \left\{ i \in [B] : |y^{(i)}| \notin \left[ (1-\delta)\alpha b, (1+\delta)\alpha b \right] \right\}.
    \end{align*}
    The \emph{standardization algorithm} defines a corresponding map
    \begin{align} \label{eq:standardization_algo_map}
        \varphi_{y}: \mathcal{NE}_\ind(y) \to \mathcal{NE}_\std(y).
    \end{align}
    The algorithm proceeds sequentially through the strings $y^{(i)}$, constructing $\big(\tilde{y}^{(1)}, \dots, \tilde{y}^{(B)}\big) \in \mathcal{NE}_\std(y)$ sequentially. Say that the algorithm has processed $\big(y^{(1)}, \dots, y^{(i)}\big)$ and has constructed $\big(\tilde{y}^{(1)}, \dots, \tilde{y}^{(i)}\big)$ so that
    \begin{align} \label{eq:standarziation_algo_key_cond}
        y^{(1)} \cdots y^{(i)} = \tilde{y}^{(1)} \cdots \tilde{y}^{(i)}.
    \end{align}
    The next strings $\tilde{y}^{(k)}$ for $k \geq i+1$ are constructed via the following procedure. We continue reading strings $y^{(i+1)}, y^{(i+2)}, \dots$ and stop when we either
    \begin{enumerate}
        \item[(a)] first hit a string $y^{(k)}$ for which $k \in \mathcal I_{\left(y^{(1)}, \dots, y^{(B)}\right)}^\exc$,

        \item[(b)] progress through $b^\epsilon$ strings without hitting such a string $y^{(k)}$ for which $k \in \mathcal I_{\left(y^{(1)}, \dots, y^{(B)}\right)}^\exc$, or

        \item[(c)] reach the final string $y^{(B)}$ of the induced near-equipartition $\big(y^{(1)}, \dots, y^{(B)}\big)$.
    \end{enumerate}
    Take the corresponding sequence of strings $\big(y^{(i+1)}, \dots, y^{(j)}\big)$ that we have just progressed through. We construct $\big(\tilde{y}^{(i+1)}, \dots, \tilde{y}^{(j)}\big)$ as follows.
    \begin{itemize}
        \item If $j \in \mathcal I_{\left(y^{(1)}, \dots, y^{(B)}\right)}^\exc$, then let $\tilde{y}^{(j)} = y^{(j)}$. If $j > i+1$, then for the remaining indices $i+1, \dots, j-1$, let $|\tilde{y}^{(k)}| = \alpha b$ for all $k \in \{ i+1, \dots, j-2 \}$ and set $\tilde{y}^{(j-1)}$ accordingly so that
        \begin{align} \label{eq:standarziation_algo_key_cond_j}
            y^{(1)} \cdots y^{(j)} = \tilde{y}^{(1)} \cdots \tilde{y}^{(j)}
        \end{align}
        potentially forcing $|\tilde{y}^{(j-1)}| \neq \alpha b$.

        \item If $j \notin \mathcal I_{\left(y^{(1)}, \dots, y^{(B)}\right)}^\exc$, then let $|\tilde{y}^{(k)}| = \alpha b$ for all $k \in \{ i+1 \dots, j-1 \}$ (this set is empty if $j = i+1$) and set $\tilde{y}^{(j)}$ accordingly so that
        \begin{align*}
            y^{(1)} \cdots y^{(j)} = \tilde{y}^{(1)} \cdots \tilde{y}^{(j)},
        \end{align*}
        potentially forcing $|\tilde{y}^{(j)}| \neq \alpha b$.
    \end{itemize}
    We continue similarly until we have constructed all of $\varphi_{y}\big(y^{(1)}, \dots, y^{(B)}\big) = \big(\tilde{y}^{(1)}, \dots, \tilde{y}^{(B)}\big)$.
\end{definition}

It is clear from \cref{defn:standardization_algorithm} that the map \eqref{eq:standardization_algo_map} preserves the key inductive condition \eqref{eq:standarziation_algo_key_cond}. We now confirm that it is well-defined (i.e., maps to $\mathcal{NE}_\std(y)$). We let
\begin{align*}
    \big(y^{(i+1)},\ldots,y^{(j)}\big)
\end{align*}
denote a maximal consecutive sequences of blocks processed between successive stopping points of the standardization algorithm. In the case where $j \in \mathcal I_{\left(y^{(1)}, \dots, y^{(B)}\right)}^\exc$, it follows from \cref{defn:standardization_algorithm} that $|(j-1)-(i+1)+1| \leq b^\epsilon$ and that for all $k \in \{ i+1, \dots, j-1 \}$, it holds that
\begin{align*}
    |y^{(k)}| \in \left[ (1-\delta)\alpha b, (1+\delta)\alpha b \right].
\end{align*}
Thus, the total amount that we shift the index of the last endpoint as we construct the mapping of strings
\begin{align*}
    \big(y^{(i+1)}, \dots, y^{(j-2)}\big) \to \big(\tilde{y}^{(i+1)}, \dots, \tilde{y}^{(j-2)}\big)
\end{align*}
is readily observed to be bounded by
\begin{align} \label{eq:total_shift}
    (b^\epsilon-2) \cdot \delta \alpha b \leq b^\epsilon \cdot \delta \alpha b \stackrel{\eqref{eq:window_parameters}}{=} \alpha b^{1/2+2\epsilon} \stackrel{\text{($b$ large)}}{\leq} \left( \alpha \land (1-\alpha) \right)b.
\end{align}
Since $j-1 \notin \mathcal I_{\left(y^{(1)}, \dots, y^{(B)}\right)}^\exc$, we may thus define $\tilde{y}^{(j-1)}$ in such a way (i.e., by contracting or extending $y^{(j-1)}$ accordingly) that $|\tilde{y}^{(j-1)}| \in \{0, \dots, b\}$ and \eqref{eq:standarziation_algo_key_cond_j} holds. The analysis for the case in which $j \notin \mathcal I_{\left(y^{(1)}, \dots, y^{(B)}\right)}^\exc$ is effectively identical. Finally, it readily follows that the number of strings $\tilde{y}^{(k)}$ such that $|\tilde{y}^{(k)}| \neq \alpha b$ is at most
\begin{align*}
    \left( \underbrace{2\gamma}_{2 \cdot \big|\mathcal I_{(y^{(1)}, \dots, y^{(B)})} \big| \text{ for bad strings of } (y^{(1)}, \dots, y^{(B)})} + \underbrace{b^{-\epsilon}}_{\text{possible modification every $b^\epsilon$ strings}} \right) B \stackrel{\eqref{eq:window_parameters}}{=} 3\gamma B.
\end{align*}
Altogether, this discussion demonstrates that
\begin{align*}
    \varphi_{y}\big(y^{(1)}, \dots, y^{(B)}\big) = \big(\tilde{y}^{(1)}, \dots, \tilde{y}^{(B)}\big) \in \mathcal{NE}_\std(y),
\end{align*}
and we conclude that the standardization algorithm of \cref{defn:standardization_algorithm} is indeed well-defined. 

The standardization algorithm of \cref{defn:standardization_algorithm} should be thought of as an approximation algorithm in the following sense. We will use it to produce a uniform approximation-type bound of the form
\begin{align} \label{eq:approximation_bd}
    \frac{1}{B}\sum_{i=1}^B A_\loc\big(X'^{(i)}, Y^{(i)}\big) \leq \frac{1}{B} \sum_{i=1}^B A_\loc\big(X'^{(i)}, \tilde{Y}^{(i)}\big) + \frac{\beta^\star(\alpha)}{2} \stackrel{\eqref{eq:total_alignment_second_kind_defn}}{\leq} T_\std(X', Y) + \frac{\beta^\star(\alpha)}{2},
\end{align}
where $\big( Y^{(1)}, \dots, Y^{(B)} \big) \in \mathcal{NE}_\ind(Y)$, we have that
\begin{align*}
    \varphi_{Y}\big( Y^{(1)}, \dots, Y^{(B)} \big) = \big( \Tilde{Y}^{(1)}, \dots, \Tilde{Y}^{(B)} \big) \in \mathcal{NE}_\std(Y),
\end{align*}
and the bound \eqref{eq:approximation_bd} holds simultaneously over all $\big( Y^{(1)}, \dots, Y^{(B)} \big) \in \mathcal{NE}_\ind(Y)$ with probability at least $1 - e^{-\Omega(N)}$. Then we have that
\begin{align}
    T_\ind(X', Y) & \stackrel{\eqref{eq:total_alignment_first_kind_defn}}{=} \sup_{\left( Y^{(1)}, \dots, Y^{(B)} \right) \in \mathcal{NE}_\ind(Y)} \frac{1}{B}\sum_{i=1}^B A_\loc\big(X'^{(i)}, Y^{(i)}\big) \nonumber \\
    & \stackrel{\eqref{eq:approximation_bd}}{\leq} T_\std(X', Y) + \frac{\beta^\star(\alpha)}{2} \stackrel{\text{(\cref{prop:total_alignment_2_bd})}}{<} \frac{1}{2} + \beta^\star(\alpha) \label{eq:T_1_probable_bound}
\end{align}
with probability $\geq 1 - e^{-\Omega(N)}$, producing the desired distinction with \cref{prop:regular_alignment_property}. The key idea in establishing \eqref{eq:approximation_bd} is reducing the extremal behavior of the mapping $\varphi_{Y}$ to the sizes of fluctuations from the simple random walk of contiguous substrings of $Y$. We proceed to formally establish this observation.

\begin{definition}[Biased stretch] \label{defn:biased_stretch}
    A \emph{biased stretch} of $y \in \{0,1\}^M$ is a contiguous substring of $y$ of length at most $b^{1+\epsilon}$ which has a contiguous (sub-)substring $z$ of length at most $b^{1/2+2\epsilon}$ such that $\Delta(z) \geq b^{1/2-2\epsilon}$. A contiguous substring of $y$ of length at most $b^{1+\epsilon}$ that is not biased is said to be an \emph{unbiased stretch}.
\end{definition}
The importance of \cref{defn:biased_stretch} is illustrated by the following key observation. Let $y \in \{0,1\}^M$. Fix an induced near-equipartition $\big(y^{(1)}, \dots, y^{(B)}\big) \in \mathcal{NE}_\ind(y)$. Let $\big(y^{(i+1)}, \dots, y^{(j)}\big)$ denote a sequence of these strings that is transformed into the corresponding sequence of strings $\big(\tilde{y}^{(i+1)}, \dots, \tilde{y}^{(j)}\big)$ when the standardization algorithm is invoked on $\big(y^{(1)}, \dots, y^{(B)}\big)$. It holds for any $k \in \{i+1, \dots, j \}$ that the symmetric difference of $y^{(k)}$ and $\tilde{y}^{(k)}$, where we regard $y^{(k)}$ and $\tilde{y}^{(k)}$ as substrings of $y$ (i.e., we do not compare them bit-by-bit), is exactly given by two contiguous substrings of $y^{(i+1)} \cdots y^{(j)}$. Specifically, it holds that
\begin{itemize}
    \item one contiguous substring is a (possibly empty) \textit{prefix} $\mathcal P\big(y^{(k)}, \tilde{y}^{(k)}\big)$ of either $y^{(k)}$ or $\tilde{y}^{(k)}$;

    \item the other contiguous substring is a (possibly empty) \textit{suffix} $\mathcal S\big(y^{(k)}, \tilde{y}^{(k)}\big)$ of either $y^{(k)}$ or $\tilde{y}^{(k)}$.
\end{itemize}
We may now bound the size of the prefix and the suffix via
\begin{align} \label{eq:sym_diff_bd}
    \left| \mathcal P\big(y^{(k)}, \tilde{y}^{(k)}\big) \right| \lor \left| \mathcal S\big(y^{(k)}, \tilde{y}^{(k)}\big) \right| \stackrel{\text{(\cref{defn:standardization_algorithm}, Cond. (b))}}{\leq} \delta \alpha b \cdot b^\epsilon \stackrel{\eqref{eq:window_parameters}}{=} \alpha \cdot b^{-1/2+\epsilon} \cdot b^{1+\epsilon} \leq b^{1/2+2\epsilon}.
\end{align}
Now, if we further have that the contiguous substring $y^{(i+1)} \cdots y^{(j)}$ of $y$ is an unbiased stretch, then we may use \cref{defn:biased_stretch} to bound the corresponding difference of local alignment values via
\begin{align}
    & \left| A_\loc\big(X'^{(k)}, y^{(k)}\big) - A_\loc\big(X'^{(k)}, \tilde{y}^{(k)}\big) \right| \leq \delta \left| \sum_{j=1}^{|y^{(k)}|} \left( 2(y^{(k)})_j - 1 \right) - \sum_{j=1}^{|\tilde{y}^{(k)}|} \left( 2(\tilde{y}^{(k)})_j - 1 \right) \right| \label{eq:local_alignment_bd_casework} \\
    & \qquad \leq \delta \left( \Delta\left( \mathcal P\big(y^{(k)}, \tilde{y}^{(k)}\big) \right) + \Delta\left( \mathcal S\big(y^{(k)}, \tilde{y}^{(k)}\big) \right) \right) \nonumber \\
    & \qquad \stackrel{\text{($y^{(i+1)} \cdots y^{(j)}$ unbiased stretch)}, \ \eqref{eq:sym_diff_bd}}{\leq} \delta \left( b^{1/2-2\epsilon} + b^{1/2-2\epsilon} \right) \stackrel{\eqref{eq:window_parameters}}{=} 2b^{-1/2+\epsilon} \cdot b^{1/2-2\epsilon} = 2b^{-\epsilon}, \label{eq:local_alignment_bd}
\end{align}
where the inequality of \eqref{eq:local_alignment_bd_casework} follows by routine casework and the definition of local alignment. This casework is based on, as per \eqref{eq:local_alignment_defn}, 
\begin{itemize}
    \item whether the local alignment term $A_\loc\big(X'^{(k)}, y^{(k)}\big)$ is $0$, $\delta \Delta\big(y^{(k)}\big)$, or $1$;
    \item whether the local alignment term $A_\loc\big(X'^{(k)}, \tilde{y}^{(k)}\big)$ is $0$, $\delta \Delta\big(\tilde{y}^{(k)}\big)$, or $1$.
\end{itemize}
We note that if exactly one of the two local alignment terms is $0$, then the definition of local alignment \eqref{eq:local_alignment_defn} implies that the two sums considered in this initial bound have opposite signs, from which the validity of the bound readily follows. As this final bound can be made arbitrarily small by choosing $b$ large, the identity \eqref{eq:local_alignment_bd} suggests a method via which we may relate induced near-equipartitions $\mathcal{NE}_\ind(Y)$ with their mappings under the standardization algorithm map $\varphi_{Y}$. Our strategy will thus be to show that the number of biased stretches of $Y$ is typically modest, which we combine with \eqref{eq:local_alignment_bd} to prove \eqref{eq:approximation_bd} uniformly over all $\big( Y^{(1)}, \dots, Y^{(B)} \big) \in \mathcal{NE}_\ind(Y)$.

\begin{proposition}[Few biased stretches] \label{prop:few_biased_stretches}
    With probability $\geq 1 - e^{-\Omega(N)}$, $Y$ has $\leq M \cdot b^{-(1+2\epsilon)}$ biased stretches.
\end{proposition}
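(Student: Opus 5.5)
The plan is to reduce the count of biased stretches to a count of short, highly fluctuating windows of $Y$. A window is a cheap first-moment object, and the indicators attached to windows depend on each other only over a bounded range. Since $b$ is fixed while $N\to\infty$, any deviation of order a constant depending on $b$ will cost probability $e^{-\Omega(N)}$ by Hoeffding. Throughout I identify a stretch with its pair (starting position, length) in $Y$, as the definition of biased stretch is phrased in terms of substrings of $y$.

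\textbf{Step 1 (bad starting positions).} Call a position $p\in[M]$ \emph{bad} if there is some $L\le b^{1/2+2\epsilon}$ such that the window $z=Y_{p:p+L-1}$ satisfies $\Delta(z)\ge b^{1/2-2\epsilon}$. For a fixed $L$, Hoeffding's inequality for the Rademacher walk gives
\begin{align*}
\Prob\left(\Delta(Y_{p:p+L-1})\ge b^{1/2-2\epsilon}\right)\le 2\exp\left(-\frac{b^{1-4\epsilon}}{2L}\right)\le 2\exp\left(-\tfrac12 b^{1/2-6\epsilon}\right).
\end{align*}
A union bound over $L$ then gives
\begin{align*}
q:=\Prob(p \text{ bad})\le 2b^{1/2+2\epsilon}\exp\left(-\tfrac12 b^{1/2-6\epsilon}\right).
\end{align*}
With $\epsilon=1/24$ we have $1/2-6\epsilon=1/4>0$, so $q$ decays super-polynomially in $b$. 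In particular $q\le \tfrac12 b^{-(3+4\epsilon)}$ for $b$ large.

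\textbf{Step 2 (from bad positions to biased stretches).} Every biased stretch contains a bad window $z$, and the starting position $p$ of $z$ is bad. Given $p$, the stretch has its start in $[p-b^{1+\epsilon},p]$ and its length in $[1,b^{1+\epsilon}]$. Hence each bad position accounts for at most $(b^{1+\epsilon}+1)^2\le 2b^{2+2\epsilon}$ biased stretches. It therefore suffices to show that, with probability $\ge 1-e^{-\Omega(N)}$, the number of bad positions is at most $\tfrac12 M b^{-(3+4\epsilon)}$.

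\textbf{Step 3 (concentration despite dependence).} This is the only point needing care, because the indicators $\1\{p \text{ bad}\}$ are dependent. However, the event that $p$ is bad depends only on $Y_p,\dots,Y_{p+K-1}$ with $K:=\lceil b^{1/2+2\epsilon}\rceil$. So I partition $[M]$ into the $K$ residue classes modulo $K$. Within a fixed class the indicators depend on disjoint blocks of $Y$ and are therefore i.i.d. Bernoulli with mean at most $q$. Each class has about $M/K$ elements, and the threshold for that class is $\tfrac12 b^{-(3+4\epsilon)}\cdot M/K$. This exceeds the mean by at least $t\,M/K$ with $t:=\tfrac14 b^{-(3+4\epsilon)}$, a positive constant once $b$ is fixed. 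Hoeffding's inequality bounds the probability of exceeding it by $\exp(-2t^2M/K)=e^{-\Omega(N)}$. A union bound over the $K=O_b(1)$ classes preserves $e^{-\Omega(N)}$.

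Combining the steps, the number of biased stretches is at most $2b^{2+2\epsilon}\cdot\tfrac12 M b^{-(3+4\epsilon)}=Mb^{-(1+2\epsilon)}$. The main technical choice is to count via bad starting positions rather than directly via stretches, which would be far more correlated. The exponent bookkeeping in Step 1 is what forces $\epsilon$ small ($6\epsilon<1/2$), consistent with $\epsilon=1/24$.
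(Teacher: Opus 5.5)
Your proposal is correct and takes essentially the same route as the paper: Hoeffding for each short window, concentration over families of disjoint (hence independent) windows, which costs only $e^{-\Omega(N)}$ because $b$ is fixed, and then multiplication by the $O(b^{2+2\epsilon})$ stretches that can contain a given window. The differences are minor: you union-bound over window lengths first, so your families are indexed by offset alone rather than by (length, offset), and you use Hoeffding where the paper uses the KL form of Chernoff. One small fix is needed in Step 3: to get the margin $t=\tfrac14 b^{-(3+4\epsilon)}$ you need $q\le \tfrac14 b^{-(3+4\epsilon)}$, not $\tfrac12 b^{-(3+4\epsilon)}$, and this holds for large $b$ by the super-polynomial decay of $q$.
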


\begin{proof}
    It holds for any contiguous substring $Z$ of $Y$ of length $\ell \leq b^{1/2+2\epsilon}$ that
    \begin{align} \label{eq:one_biased_stretch_bd}
        \Prob\left( \Delta(Z) \geq b^{1/2-2\epsilon} \right) \stackrel{\text{(Hoeffding)}}{\leq} 2\exp\left( - \frac{2\left( b^{1/2-2\epsilon} \right)^2}{4\ell} \right) \stackrel{(\ell \leq b^{1/2+2\epsilon})}{\leq} 2\exp\left( - \frac{b^{1/2-6\epsilon}}{2} \right).
    \end{align}
    So for any $M/\ell$ disjoint contiguous substrings $Z(i)$ of $Y$, each with length at most $\ell \leq b^{1/2+2\epsilon}$,
    \begin{align}
        & \Prob\left( \frac{1}{M/\ell} \sum_{i=1}^{M/\ell} \1\left\{ \Delta\left( Z(i) \right) \geq b^{1/2-2\epsilon} \right\} \geq b^{-(4+8\epsilon)} \right) \label{eq:Y_equipartition_biased_stretches_bd_ind} \\
        & \quad \stackrel{\text{(Chernoff)}}{\leq} \exp\left( -\frac{M}{\ell} \cdot \KL\left( b^{-(4+8\epsilon)} \doubleline \frac{1}{M/\ell} \sum_{i=1}^{M/\ell} \Prob\left( \Delta\left( Z(i) \right) \geq b^{1/2-2\epsilon} \right) \right) \right) \nonumber \\
        & \quad \stackrel{(\ell \leq b^{1/2+2\epsilon}), \eqref{eq:one_biased_stretch_bd},\text{($b$ large)}}{\leq} \exp\left( -\frac{M}{b^{1/2+2\epsilon}} \cdot  \KL\left( b^{-(4+8\epsilon)} \doubleline 2e^{-b^{1/2-6\epsilon}/2} \right) \right) = e^{-\Omega(N)}. \label{eq:Y_equipartition_biased_stretches_bd}
    \end{align}
    We can handle (i.e., correspond to an indicator summand in \eqref{eq:Y_equipartition_biased_stretches_bd_ind}) all contiguous substrings of $Y$ of length $\ell \leq b^{1/2+2\epsilon}$ using 
    \begin{align*}
        b^{1/2+2\epsilon} \cdot b^{1/2+2\epsilon} = b^{1+4\epsilon}
    \end{align*}
    bounds of the form \eqref{eq:Y_equipartition_biased_stretches_bd}. Specifically, for each candidate length $\ell \leq b^{1/2+2\epsilon}$ of the contiguous substrings, we take some $j \leq \ell$ and sequentially equipartition the substring of $Y$ starting at position $j$ into contiguous substrings of length $\ell$ until there are strictly fewer than $\ell$ bits left. This produces at most $M/\ell$ contiguous substrings of length $\ell$, which we then correspond to the indicator summands studied in \eqref{eq:Y_equipartition_biased_stretches_bd}. Furthermore, any contiguous substring of $Y$ of length $\leq b^{1/2+2\epsilon}$ is contained in at most 
    \begin{align*}
        b^{1+\epsilon} \cdot b^{1+\epsilon} = b^{2\left( 1+\epsilon \right)}
    \end{align*}
    biased stretches. This bound is observed by fixing the length of a candidate biased stretch to be at most $b^{1+\epsilon}$, then counting the number of candidate biased stretches with said length that contain the contiguous substring of $Y$. Altogether, a union bound over the conditions for all of the bounds \eqref{eq:Y_equipartition_biased_stretches_bd} that we consider yields that with probability
    \begin{align*}
        \geq 1 - b^{1+4\epsilon} \cdot e^{-\Omega(N)} = 1-e^{-\Omega(N)},
    \end{align*}
    there are at most
    \begin{align*}
        \underbrace{b^{1+4\epsilon}}_{\text{num. bds. of form } \eqref{eq:Y_equipartition_biased_stretches_bd}} \cdot \underbrace{M \cdot b^{-(4+8\epsilon)}}_{\text{upper bd. from } \eqref{eq:Y_equipartition_biased_stretches_bd}} \cdot \underbrace{b^{2\left( 1+\epsilon \right)}}_{\text{upper bd. on num. biased stretches for a substring}} = M \cdot b^{-(1+2\epsilon)}
    \end{align*}
    biased stretches in $Y$.
\end{proof}

We now put everything together to derive the desired uniform approximation-type bound \eqref{eq:approximation_bd}. Let $y \in \{0,1\}^M$ be such a string with at most $M \cdot b^{-(1+2\epsilon)}$ biased stretches. We fix an arbitrary induced near-equipartition 
\begin{align*}
    \big(y^{(1)}, \dots, y^{(B)}\big) \in \mathcal{NE}_\ind(y),
\end{align*}
and we invoke the standardization algorithm on it to get
\begin{align*}
    \varphi_y\big(y^{(1)}, \dots, y^{(B)}\big) = \big(\tilde{y}^{(1)}, \dots, \tilde{y}^{(B)}\big) \in \mathcal{NE}_\std(y).
\end{align*}
We proceed with the bound, where we note that the second summation below (and all similar sums of this form) ranges over the contiguous block-intervals
\begin{align*}
    \big(y^{(i+1)},\ldots,y^{(j)}\big)
\end{align*}
produced by the standardization algorithm of \cref{defn:standardization_algorithm}, i.e., the maximal consecutive sequences of blocks between successive stopping points of the algorithm. These intervals form a partition of the $B$ blocks of the fixed induced near-equipartition. We have that 
\begin{align}
    & \frac{1}{B} \sum_{i=1}^B A_\loc\big(X'^{(i)}, y^{(i)} \big) = \frac{1}{B} \sum_{(y^{(i+1)}, \dots, y^{(j)})} \sum_{k=i+1}^j A_\loc\big(X'^{(k)}, y^{(k)}\big) \nonumber \\
    & \quad \stackrel{\eqref{eq:local_alignment_defn}, \eqref{eq:local_alignment_bd}}{\leq} \frac{1}{B} \left[ \sum_{\substack{(y^{(i+1)},\dots,y^{(j)}):\\ y^{(i+1)}\cdots y^{(j)}\text{ is a}\\ \text{biased stretch}}} \sum_{k=i+1}^j 1 + \sum_{\substack{(y^{(i+1)},\dots,y^{(j)}):\\ y^{(i+1)}\cdots y^{(j)}\text{ is an}\\ \text{unbiased stretch}}} \sum_{k=i+1}^j \left( A_\loc\big(X'^{(k)}, \tilde{y}^{(k)}\big) + 2b^{-\epsilon} \right) \right] \nonumber \\
    & \quad \stackrel{\text{(\cref{defn:standardization_algorithm}, Cond. (b))}}{\leq} \frac{1}{B} \left[ B \cdot 2b^{-\epsilon} + \sum_{\substack{(y^{(i+1)},\dots,y^{(j)}):\\ y^{(i+1)}\cdots y^{(j)}\text{ is a}\\ \text{biased stretch}}} b^\epsilon + \sum_{\substack{(y^{(i+1)},\dots,y^{(j)}):\\ y^{(i+1)}\cdots y^{(j)}\text{ is an}\\ \text{unbiased stretch}}} \sum_{k=i+1}^j A_\loc\big(X'^{(k)}, \tilde{y}^{(k)}\big) \right] \nonumber \\
    & \quad \leq \frac{1}{B} \left[ B \cdot 2b^{-\epsilon} + b^\epsilon \cdot M \cdot b^{-(1+2\epsilon)} + \sum_{k=1}^B A_\loc\big(X'^{(k)}, \tilde{y}^{(k)}\big) \right] \nonumber \\
    & \quad \leq 2b^{-\epsilon} + \alpha b^{1+\epsilon} \cdot b^{-(1+2\epsilon)} + \frac{1}{B}\sum_{k=1}^B A_\loc\big(X'^{(k)}, \tilde{y}^{(k)}\big)\stackrel{\eqref{eq:total_alignment_second_kind_defn}, \text{($b$ large)}}{\leq} T_\std(X', y) + \frac{\beta^\star(\alpha)}{2}. \label{eq:standardization_algo_bd}
\end{align}
The bound \eqref{eq:standardization_algo_bd} holds uniformly over the choice of induced near-equipartition $(y^{(1)}, \dots, y^{(B)}) \in \mathcal{NE}_\ind(y)$ and holds irrespective of the choice of the string $y \in \{0,1\}^M$ with $\leq M \cdot b^{-(1+2\epsilon)}$ biased stretches. Thus, \cref{prop:few_biased_stretches} and the preceding discussion together imply that with probability $\geq 1 - e^{-\Omega(N)}$, the inequality \eqref{eq:T_1_probable_bound} holds, i.e., that
\begin{align} \label{eq:Y_not_in_good_set}
    \Prob\left( Y \notin \mathcal G\left( X' \right) \right) \geq 1 - e^{-\Omega(N)}.
\end{align}

\subsection{\texorpdfstring{Proof of \cref{thm:weak_law}}{Proof of Theorem \ref{thm:weak_law}}} \label{subsec:proof_of_thm}

We are finally ready to prove the first strict inequality of \cref{thm:weak_law}. Towards this end, we first define the event
\begin{align} \label{eq:event_E_defn}
    E_N := \left\{ X' \text{ is typical} \right\} \cap \left\{ Y \notin \mathcal G(X') \right\}.
\end{align}
Then we may write
\begin{align}
    & \E\left[ Z_{X', Y} \cdot \1\{E_N\} \right] \stackrel{\eqref{eq:event_E_defn}}{=} \E\left[ \left( \sum_{S \in \binom{[N]}{M}} \1\left\{ X'|_S = Y \right\} \right) \1\left\{ X' \text{ is typical} \right\} \1\left\{ Y \notin \mathcal G(X') \right\} \right] \nonumber \\
    & \quad = \E\left[ \1\left\{ X' \text{ is typical} \right\} \left( \mathop{\sum_{S \in \binom{[N]}{M}:}}_{X'|_S \notin \mathcal G(X')} \1\left\{ X'|_S = Y \right\} \right) \1\left\{ Y \notin \mathcal G(X') \right\} \right] \nonumber \\
    & \quad \stackrel{\text{(Fubini)}}{=} \E_{X'} \left[ \1\left\{ X' \text{ is typical} \right\} \cdot \E_Y \left[ \mathop{\sum_{S \in \binom{[N]}{M}:}}_{X'|_S \notin \mathcal G(X')} \1\left\{ X'|_S = Y \right\} \right] \right] \nonumber \\
    & \quad = \E_{X'}\left[ \1\left\{ X' \text{ is typical} \right\} \cdot 2^{-M} \cdot \left| \left\{ S \in \binom{[N]}{M}: X'|_S \notin \mathcal G(X') \right\} \right| \right] \nonumber \\
    & \quad \stackrel{\text{(\cref{prop:regular_alignment_property})}}{\leq} \E_{X'}\left[ \1\left\{ X' \text{ is typical} \right\} \cdot 2^{-M} \cdot \binom{N}{M} \cdot e^{-\Omega(N)} \right] \nonumber \\
    & \quad \leq \binom{N}{M}e^{-\Omega(N)} \cdot 2^{-M} = e^{N\left( \fnullann(\alpha) - \Omega(1) \right)}. \label{eq:Z_indicator_bd}
\end{align}
Markov's inequality thus implies that (with a weaker universal constant implicit in the $\Omega(\cdot)$ term than in the corresponding $\Omega(\cdot)$ term in the final expression of \eqref{eq:Z_indicator_bd})
\begin{align*}
    \Prob\left( Z_{X', Y} \cdot \1\{E_N\} < e^{N\left( \fnullann(\alpha) - \Omega(1) \right)} \right) \geq 1 - e^{-\Omega(N)}.
\end{align*}
The discussion after \cref{defn:typical_W} and \eqref{eq:Y_not_in_good_set} together imply $\Prob(E_N) \geq 1 - e^{-\Omega(N)}$, so we conclude that
\begin{align} \label{eq:null_quantitative}
    \Prob\left( Z_{X', Y} < e^{N\left( \fnullann(\alpha) - \Omega(1) \right)} \right) \geq 1 - e^{-\Omega(N)},
\end{align}
from which the weak law proves the strict inequality in \eqref{eq:weak_law_null_bounds}. Furthermore, after shrinking implicit constants if necessary, we may take both $\Omega(\cdot)$ terms in \eqref{eq:null_quantitative} to be governed by the same positive constant.

\smallskip

We complete the proof of \cref{thm:weak_law} via proving the strict inequality of \eqref{eq:weak_law_planted_bounds}. In the rest of \cref{subsec:proof_of_thm}, we let all $\Omega(\cdot)$ terms specifically denote the corresponding implicit constant as guaranteed in \eqref{eq:null_quantitative}. We say that $x \in \{0,1\}^{N}$ is \emph{hoarded} if
\begin{align} \label{eq:hoarded_string_condition}
     \left| \mathcal H(x) \right| := \left| \left\{ y \in \{0,1\}^M : Z_{x,y} \geq e^{N\left( \fnullann(\alpha) - \Omega(1) \right)} \right\} \right| \leq 2^M e^{-\Omega(N)/2}.
\end{align}
It then follows that
\begin{align*}
    e^{-\Omega(N)} & \stackrel{\eqref{eq:null_quantitative}}{\geq} \Prob\left( Z_{X', Y} \geq e^{N\left( \fnullann(\alpha) - \Omega(1) \right)} \right) \\
    & = \Prob\left( Z_{X', Y} \geq e^{N\left( \fnullann(\alpha) - \Omega(1) \right)} \mid X' \text{ hoarded} \right) \cdot \Prob\left( X' \text{ hoarded} \right) \\
    & \quad + \Prob\left( Z_{X', Y} \geq e^{N\left( \fnullann(\alpha) - \Omega(1) \right)} \mid X' \text{ not hoarded} \right) \cdot \Prob\left( X' \text{ not hoarded} \right) \\
    & \stackrel{\eqref{eq:hoarded_string_condition}}{\geq} 2^{-M} \cdot 2^M e^{-\Omega(N)/2} \cdot \Prob\left( X' \text{ not hoarded} \right) = e^{-\Omega(N)/2} \cdot \Prob\left( X' \text{ not hoarded} \right)
\end{align*}
from which we conclude that
\begin{align} \label{eq:hoarded_prob_bd}
    \Prob\left( X' \text{ not hoarded} \right) \leq e^{-\Omega(N)/2}.
\end{align}
Altogether, we have that, letting $X$ play the role of $X'$ above when invoking \eqref{eq:hoarded_string_condition} and \eqref{eq:hoarded_prob_bd} and recalling that $\Prob(Y = y \mid X) = Z_{X,y}/\binom{N}{M}$ for each $y \in \{0,1\}^M$,
\begin{align}
    & \Prob\left( \frac{1}{N} \log Z_{X, Y} > \fnullann(\alpha) + \frac{\Omega(1)}{4} \right) \nonumber \\
    & \quad \geq \Prob\left( X \text{ hoarded} \right) \cdot \left[ 1 - \Prob\left( \frac{1}{N} \log Z_{X, Y} \leq \fnullann(\alpha) + \frac{\Omega(1)}{4} \ \Bigg| \ X \text{ hoarded} \right) \right] \nonumber \\
    & \quad \stackrel{\eqref{eq:hoarded_prob_bd}}{\geq} \left( 1 - e^{-\Omega(N)/2} \right) \cdot \Biggl[ 1 - \Prob\left( Z_{X, Y} \leq e^{N\left( \fnullann(\alpha) + \Omega(1)/4 \right)}, Y \notin \mathcal H(X) \ \Bigg| \ X \text{ hoarded} \right) \nonumber \\
    & \qquad \qquad \qquad \qquad  \qquad \qquad \qquad - \Prob\left( Z_{X, Y} \leq e^{N\left( \fnullann(\alpha) + \Omega(1)/4 \right)}, Y \in \mathcal H(X) \ \Bigg| \ X \text{ hoarded} \right) \Bigg] \nonumber \\ 
    & \ \ \ \stackrel{\eqref{eq:hoarded_string_condition}}{\geq} \left( 1 - e^{-\Omega(N)/2} \right) \cdot \left( 1 - \underbrace{2^M \cdot e^{N\left( \fnullann(\alpha) - \Omega(1) \right)}\frac{1}{\binom{N}{M}}}_{\text{bound on $Y \notin \mathcal H(X)$ term}} - \underbrace{2^Me^{-\Omega(N)/2} \cdot e^{N\left(\fnullann(\alpha) + \Omega(1)/4 \right)} \frac{1}{\binom{N}{M}}}_{\text{bound on $Y \in \mathcal H(X)$ term}} \right) \label{eq:planted_gap_lower_bd_main_step} \\
    & \quad = \left( 1 - e^{-\Omega(N)} \right) \cdot \left( 1 - e^{-\Omega(N)} e^{N \fnullann(\alpha)} \frac{2^M}{\binom{N}{M}} - e^{-\Omega(N)/4} e^{N \fnullann(\alpha)} \frac{2^M}{\binom{N}{M}}\right) \nonumber \\
    & \quad = \left( 1 - e^{-\Omega(N)} \right) \cdot \left( 1 - e^{-\Omega(N) + o(N)} - e^{-\Omega(N)/4 + o(N)} \right) \to 1.\label{eq:planted_gap_lower_bd}
\end{align}
We note that \eqref{eq:planted_gap_lower_bd_main_step} specifically follows by, in each case, bounding the number of length-$M$ subsequences of $X$ that satisfy the conditions of the corresponding event and then multiplying by the probability $\binom{N}{M}^{-1}$ of $Y$ being any particular such length-$M$ subsequence. We conclude in conjunction with the weak limit of \eqref{eq:weak_law_planted_bounds} that $\fpl(\alpha) > \fnullann(\alpha)$. This completes the proof of \cref{thm:weak_law}.

\subsection{\texorpdfstring{Consequences of \cref{thm:weak_law}}{Consequences of Theorem \ref{thm:weak_law}}}

We now turn to the consequences of \cref{thm:weak_law} for uniformly random codes over the deletion channel.

\subsubsection{Positive rate for uniform codes under the deletion channel} \label{sec:coding_applications}

With \cref{thm:weak_law} in hand, we now prove \cref{cor:pos_mutual_info}, settling \cite[Conjecture 3]{pernice2024mutual}.

\begin{proof}[Proof of \cref{cor:pos_mutual_info}]
    The $p=0$ case follows directly from \eqref{eq:unif-mut-inf-limit}, so we are free to fix $p \in (0,1)$. We set $\alpha = 1-p$. In particular, we recall from \eqref{eq:unif-mut-inf-limit} that the largest rate achievable via uniform random codebooks admits the representation
    \begin{align} \label{eq:c_unif_defn}
        C_{\unif}(p) = \lim_{N \to \infty} \frac{1}{N} I\left( X; \BDC_p(X) \right) \stackrel{\eqref{eq:del-chann-cap-connection}}{=} \alpha \log 2 - h(\alpha) + \fpl(\alpha),
    \end{align}
    where the last equality follows since the normalized log-partition function $\frac{1}{N}\log Z_{X,Y}$ is uniformly bounded. Thus, for fixed $\alpha \in (0,1)$, it holds that
    \begin{align}
        0 & \stackrel{\eqref{eq:weak_law_planted_bounds}}{<} \fpl(\alpha) - \fnullann(\alpha) = \fpl(\alpha) - \left( h(\alpha) - \alpha\log 2 \right) \stackrel{\eqref{eq:c_unif_defn}}{=} C_{\unif}(1-\alpha). \label{eq:C_unif_positivity}
    \end{align}
    We conclude that $d_\unif^* = 1$, as desired.
\end{proof}

\subsubsection{Explicit capacity lower bound} \label{sec:numerics}

We now revisit the proof of \cref{thm:weak_law}, keeping track of constants in order to obtain an explicit strictly positive lower bound on $C_{\unif}(p)$. Towards this end, for $\alpha \in (0,1)$, we define the quantity
\begin{align*}
    \kappa(\alpha) := \left\lceil \frac{1920^{96}}{\alpha^{24} \beta(\alpha)^{96} (1-\alpha)^{12}} \right\rceil.
\end{align*}
We stress that we do not attempt to optimize the resulting bound; our objective is simply to extract a fully explicit positive capacity lower bound for uniform random codebooks in the likely deletion regime.
\begin{theorem}[Explicit capacity lower bound] \label{thm:quantitative_capacity_bound}
    For $p \in (0, 1)$, we have that
    \begin{align*}
        C_{\unif}(p) \geq \frac{\left( \beta(1-p) \right)^3}{51200 \cdot \left( \kappa(1-p) \right)^5} > 0.
    \end{align*}
\end{theorem}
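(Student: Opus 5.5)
The bound will come from rerunning the proof of \cref{thm:weak_law} with every implicit constant made explicit, taking $b=\kappa(\alpha)$ with $\alpha=1-p$ and $\epsilon=1/24$. By \eqref{eq:C_unif_positivity}, $C_{\unif}(p)=\fpl(\alpha)-\fnullann(\alpha)$. The argument in \cref{subsec:proof_of_thm} shows $\fpl(\alpha)\ge \fnullann(\alpha)+c/4$, where $c$ is the rate constant in \eqref{eq:null_quantitative}. So it suffices to show that $c$ can be taken to be $\beta(\alpha)^3/(12800\,\kappa(\alpha)^5)$ or larger. The constant $c$ is the minimum of the exponential rates coming from three sources: the typicality of $X'$ (Hoeffding after \cref{defn:typical_W}), \eqref{eq:Y_not_in_good_set}, and \cref{prop:regular_alignment_property}. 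Markov's inequality then costs a factor of $2$, and the hoarding step in \eqref{eq:hoarded_string_condition}--\eqref{eq:planted_gap_lower_bd} costs another factor of $4$.

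The steps are as follows.

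First, I will fix $b=\kappa(\alpha)$ and check that every ``$b$ large'' condition used so far holds at this value. These conditions are:
\begin{itemize}
\item the bound \eqref{eq:total_shift}, i.e.\ $\alpha b^{1/2+2\epsilon}\le(\alpha\wedge(1-\alpha))b$;
\item $e^{B o_b(1)}\le e^{B\beta^\star{}^2/8}$ in \eqref{eq:ne_2_exp_bd}, which requires $h(3b^{-\epsilon})+3b^{-\epsilon}\log(b+1)\le \beta^{\star2}/8$;
\item $2b^{-\epsilon}+\alpha b^{-\epsilon}\le\beta^\star/2$ in \eqref{eq:standardization_algo_bd};
\item the CLT estimate \eqref{eq:typical_block_cond} and the Hoeffding justification near \eqref{eq:hoeffding_issue}, handled with Berry--Esseen at an explicit rate.
\end{itemize}
With $\epsilon=1/24$, each condition reduces to a polynomial inequality of the form $b^{\epsilon}\gtrsim \beta^{-4}$ or $b\gtrsim(1-\alpha)^{-12}$, and the exponents $96,24,12$ in $\kappa$ are designed to absorb these.

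Second, I will record the explicit exponents.
\begin{itemize}
\item \cref{prop:total_alignment_2_bd} gives rate $B\beta^{\star2}/8=N\beta^2/(12800\,b)$.
\item \cref{prop:few_biased_stretches} gives rate $\frac{M}{b^{1/2+2\epsilon}}\KL(\cdot\|\cdot)$, where the KL term is at least of order $b^{-4-8\epsilon}$. This is the dominant loss and is where a power like $b^{-5}$ appears.
\item Typicality gives rate $2N/(25b)$.
\item \cref{prop:regular_alignment_property} has a rate that I will extract from \cref{app:RAP_proof}. I expect it to be of order $\beta^2\alpha/b$ or a comparable power of $b$.
\end{itemize}

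Third, I will take the minimum of these rates, divide by the Markov and hoarding losses, and bound the result below by $\beta^3/(12800\kappa^5)\cdot 4$. This gives $C_{\unif}(p)\ge\beta^3/(51200\kappa^5)$. The factor $\beta^3$ is there to cover the product of $\beta^2$ from Hoeffding and a $\beta$-dependent factor from the regular alignment step.

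The main obstacle is the second step, specifically making the appendix proof of \cref{prop:regular_alignment_property} and the Chernoff--KL bound in \cref{prop:few_biased_stretches} fully quantitative. For the KL term, I will use $\KL(q\|r)\ge q\log(q/r)-q$ with $q=b^{-4-8\epsilon}$ and $r=2e^{-b^{1/2-6\epsilon}/2}$. I will also use the fact that $b\ge 1920^{96}$ forces $b^{1/2-6\epsilon}=b^{1/4}$ to dominate $\log b$. For the regular alignment step, I will track the Hoeffding and Chernoff constants through the appendix in the same way, lower-bounding the per-block bias by $\beta(\alpha)$ via a Berry--Esseen error at most $C/\sqrt{\alpha b}$. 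This step is what forces the $\alpha^{-24}$ factor in $\kappa$.
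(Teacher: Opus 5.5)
Your proposal follows the paper's proof essentially step for step. Both set $\epsilon=1/24$, check every ``$b$ large'' condition at $b=\kappa(\alpha)$, make the exponential rates from \cref{prop:regular_alignment_property}, \cref{prop:total_alignment_2_bd} and \cref{prop:few_biased_stretches} explicit, bound them all below crudely by $\beta^3/(12800\kappa^5)$, and divide by the factor $4$ lost in the hoarding step. Your accounting of the typicality rate and the Markov halving is slightly more careful than the paper's, and the ``$\cdot 4$'' in your third step should be a division by $4$.
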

Since the proof of \cref{thm:quantitative_capacity_bound} mainly consists of verifying a collection of routine inequalities and parameter constraints, we defer it to \cref{app:quantitative_cap_bd_proof}.

\section{Annealed Free Energy Under the Planted Model} \label{sec:annealed_free_planted}

In this section, we derive an exact formula for the \emph{annealed free energy} of the planted model,
\begin{align*}
    \fplann(\alpha) = \lim_{N\to\infty}\frac{1}{N} \log \E \left[ Z_{X, Y} \right],
\end{align*}
which by Jensen gives an upper bound on $\fpl(\alpha)$. We note that \cite{pernice2024mutual} gave an efficient algorithm to approximate $\fplann(\alpha)$ to any desired precision. Here we improve on their result by proving \cref{thm:main-annealed-intro} and giving an exact asymptotic formula.

It will be convenient for us to identify sets $S \in \binom{[N]}{M}$ with functions $\sigma\in \Sigma = \Sigma_{N,M}$ such that $\text{Im}(\sigma)=S.$ Moreover, for $\sigma,\tau \in \Sigma$, we use the notations
\begin{align*}
    I(\sigma,\tau) &= \{j \in [M]: \sigma_j = \tau_j\}; \\
    \brac{\sigma,\tau} &= \sum_{j=1}^M \1\{\sigma_j = \tau_j\} = |I(\sigma,\tau)|.
\end{align*}
Given a string $X \in \{0,1\}^N$, we use the notation
\begin{align*}
    X_\sigma := X|_{\text{Im}(\sigma)}.
\end{align*}
Note that we have
\begin{align*}
    \E\left[ Z_{X, Y} \right] = \frac{1}{\binom{N}{M}} \sum_{\sigma,\tau\in \Sigma} \Prob\left(X_\sigma = X_\tau \right) = \frac{1}{\binom{N}{M}} \sum_{\sigma,\tau \in\Sigma} 2^{-(M - \brac{\sigma,\tau})} = \frac{1}{\binom{N}{M}2^M}\sum_{\sigma,\tau \in\Sigma}  2^{\brac{\sigma,\tau}}.
\end{align*}
Hence, to prove \cref{thm:main-annealed-intro}, it suffices to show that, letting
\begin{align*}
    \overline{Z} &= \sum_{\sigma,\tau \in\Sigma}  2^{\brac{\sigma,\tau}},
\end{align*}
we have
\begin{align}\label{eq:Z-bar-lim}
    \lim_{M\to\infty} \frac{1}{N}\log \overline{Z} &= -\log x_\alpha\;-\;\alpha\log y_\alpha.
\end{align}
The remainder of \cref{sec:annealed_free_planted} proves \eqref{eq:Z-bar-lim} and hence \cref{thm:main-annealed-intro}. We begin from the elementary identity
\[
2^{\brac{\sigma,\tau}}
=\sum_{J\subseteq I(\sigma,\tau)} 1
=\sum_{\ell=0}^{M}\ \sum_{J\in\binom{[M]}{\ell}} \1\{J\subseteq I(\sigma,\tau)\}.
\]
Summing over $\sigma,\tau$ yields
\begin{equation}\label{eq:expandJ}
\overline{Z} =\sum_{\ell=0}^M\ \sum_{J\in\binom{[M]}{\ell}} \sum_{\sigma,\tau\in\Sigma}\1\{J\subseteq I(\sigma,\tau)\}.
\end{equation}
Fix $\ell$ and $J=\{j_1<\cdots<j_\ell\}$. If $J\subseteq I(\sigma,\tau)$ then
$\sigma_{j_r}=\tau_{j_r}$ for each $r$, and these common values form an increasing $\ell$-tuple
$K=\{i_1<\cdots<i_\ell\}\subseteq [N]$. Thus
\begin{align*}
    \1\{J\subseteq I(\sigma,\tau)\} =\sum_{K\in\binom{[N]}{\ell}}\1\{\sigma_{j_r}=\tau_{j_r}=i_r \text{ for all } r \in [\ell]\}.
\end{align*}
Insert this in \eqref{eq:expandJ} and exchange sums to obtain
\begin{equation}\label{eq:expandJK}
    \overline{Z} =\sum_{\ell=0}^M\ \sum_{1\le j_1<\cdots<j_\ell\le M}\ \sum_{1\le i_1<\cdots<i_\ell\le N} \Bigl|\{\sigma\in\Sigma:\sigma_{j_r}=i_r \text{ for all } r \in [\ell] \}\Bigr|^2,
\end{equation}
where the square comes from the independence of $\sigma$ and $\tau$ given the constraints.

\subsection{Counting constrained increasing maps}
Fix $j_1<\cdots<j_\ell$ and $i_1<\cdots<i_\ell$. Set the convenient boundary values
\begin{align*}
    i_0:=0,\quad j_0:=0,\qquad i_{\ell+1}:=N+1,\quad j_{\ell+1}:=M+1.
\end{align*}
For each $k\in[\ell+1]$, consider the interval of \emph{positions} $\{j_{k-1}+1,\dots,j_k-1\}$ of size $j_k-j_{k-1}-1$, whose values must lie strictly between $i_{k-1}$ and $i_k$. There are exactly $i_k-i_{k-1}-1$ available integers in $\{i_{k-1}+1,\dots,i_k-1\}$, and choosing which of those occupy the $j_k-j_{k-1}-1$ slots uniquely determines $\sigma$ on that interval by monotonicity. Therefore,
\begin{align*}
    \Bigl|\{\sigma\in\Sigma_{N,M}:\sigma_{j_r}=i_r \text{ for all } r \in [\ell] \}\Bigr| =\prod_{k=1}^{\ell+1}\binom{i_k-i_{k-1}-1}{\,j_k-j_{k-1}-1\,}.
\end{align*}
Plugging into \eqref{eq:expandJK} gives the following explicit form.

\begin{lemma}[Gap product formula]\label{lem:gap}
For all $1\le M\le N$,
\begin{align*}
    \overline{Z} =\sum_{\ell=0}^{M}\ \sum_{1\le j_1<\cdots<j_\ell\le M}\ \sum_{1\le i_1<\cdots<i_\ell\le N} \ \prod_{k=1}^{\ell+1}\binom{i_k-i_{k-1}-1}{\,j_k-j_{k-1}-1\,}^{\!2},
\end{align*}
with the boundary convention $i_0=j_0=0$ and $i_{\ell+1}=N+1$, $j_{\ell+1}=M+1$.
\end{lemma}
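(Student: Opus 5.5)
The identity follows from three elementary steps, essentially the computation preceding the statement made rigorous; I will carry them out in order.

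\textbf{Step 1: expanding $2^{\brac{\sigma,\tau}}$.} Write
\[
2^{\brac{\sigma,\tau}}=\sum_{J\subseteq I(\sigma,\tau)}1 ,
\]
which is the binomial identity $\sum_{\ell}\binom{n}{\ell}=2^n$ with $n=|I(\sigma,\tau)|$. Sum over $(\sigma,\tau)\in\Sigma^2$ and exchange the finite sums to obtain \eqref{eq:expandJ}. For a fixed $J=\{j_1<\cdots<j_\ell\}$, the event $J\subseteq I(\sigma,\tau)$ forces $\sigma_{j_r}=\tau_{j_r}$ for every $r$. Since $\sigma$ is strictly increasing, the common values $i_r:=\sigma_{j_r}$ satisfy $i_1<\cdots<i_\ell$. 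Hence the indicator splits as a sum over $K=\{i_1<\cdots<i_\ell\}$ of indicators of events that are pairwise disjoint, because $K$ is determined by $\sigma$. The empty case $\ell=0$ contributes $|\Sigma|^2$, consistent with the formula.

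\textbf{Step 2: factorization.} For fixed $(J,K)$, the constraint ``$\sigma_{j_r}=i_r$ and $\tau_{j_r}=i_r$ for all $r$'' is a product constraint on the pair $(\sigma,\tau)$. Therefore the number of admissible pairs is the square of
\[
c(J,K):=|\{\sigma:\sigma_{j_r}=i_r\ \forall r\}| .
\]
This gives \eqref{eq:expandJK}.

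\textbf{Step 3: counting $c(J,K)$.} This is the only step needing care. Use the boundary conventions $i_0=j_0=0$ and $i_{\ell+1}=N+1$, $j_{\ell+1}=M+1$. A strictly increasing $\sigma$ with the prescribed pinned values decomposes into independent pieces. On the position block $\{j_{k-1}+1,\dots,j_k-1\}$, $\sigma$ is a strictly increasing map into $\{i_{k-1}+1,\dots,i_k-1\}$. Conversely, any choice of such increasing maps on the blocks glues to a strictly increasing $\sigma$, since each block's values lie strictly between the neighbouring pinned values. An increasing map from a set of size $j_k-j_{k-1}-1$ into a set of size $i_k-i_{k-1}-1$ is the same as a subset of the target of that size. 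So the number of choices for block $k$ is $\binom{i_k-i_{k-1}-1}{j_k-j_{k-1}-1}$, and $c(J,K)$ is the product of these over $k\in[\ell+1]$. The convention $\binom{a}{b}=0$ when $b>a$ or $a<0$ automatically kills infeasible $(J,K)$; for example, $i_r<j_r$ yields a zero factor. The boundary blocks are handled by the same rule: when $\ell=0$ the product is $\binom{N}{M}$.

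Substituting into \eqref{eq:expandJK} yields the lemma. I do not expect any genuine obstacle. The only point needing care is the bijection in Step 3, in particular checking that the boundary conventions correctly encode both the first block, with values in $[1,i_1-1]$, and the last block, with values in $[i_\ell+1,N]$.
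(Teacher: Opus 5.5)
Your proposal is correct and follows essentially the same route as the paper: you expand $2^{\brac{\sigma,\tau}}$ over subsets $J\subseteq I(\sigma,\tau)$, split by the common pinned values $K$, square the count because the constraints on $\sigma$ and $\tau$ are imposed separately, and count the constrained increasing maps block by block as a product of binomial coefficients. Your extra checks (disjointness of the $K$-events, the $\ell=0$ term giving $\binom{N}{M}^2$, and zero binomials automatically removing infeasible $(J,K)$) are correct refinements of details the paper leaves implicit.
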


\subsection{From gaps to compositions}
For each $k\in[\ell+1]$ define the positive integers
\[
a_k:=i_k-i_{k-1}\in\N_{>0},\qquad b_k:=j_k-j_{k-1}\in\N_{>0}.
\]
Then $\sum_{k=1}^{\ell+1}a_k=N+1$ and $\sum_{k=1}^{\ell+1}b_k=M+1$, and
\[
\binom{i_k-i_{k-1}-1}{j_k-j_{k-1}-1}=\binom{a_k-1}{b_k-1}.
\]
Conversely, any $(a_1,b_1),\dots,(a_{\ell+1},b_{\ell+1})\in(\N_{>0}^2)^{\ell+1}$ with those sum constraints
uniquely determines $(i_1,\dots,i_\ell)$ and $(j_1,\dots,j_\ell)$ by partial summation.
Thus, \cref{lem:gap} can be rewritten as a sum over \emph{block compositions}. We let
\begin{align*}
    w(a,b):=\binom{a-1}{b-1}^2\quad\text{for }a\ge 1,\ 1\le b\le a, \qquad w(a,b):=0 \text{ otherwise},
\end{align*}
and for each $\ell \ge 0$ we define
\begin{align*}
    \mathcal{S}_{\ell}(N,M) := \left\{(a_1,b_1),\dots,(a_{\ell+1},b_{\ell+1})\in(\N_{>0}^2)^{\ell+1}: \sum_{k=1}^{\ell+1}a_k=N+1,\ \sum_{k=1}^{\ell+1}b_k=M+1 \right\}.
\end{align*}
Then writing $L=\ell+1$, we have
\begin{equation}\label{eq:composition-sum}
\overline{Z}
=\sum_{L=1}^{M+1} \ \sum_{(X_1,\dots,X_L)\in \mathcal{S}_{L-1}(N,M)}\ \prod_{k=1}^{L} w(X_k).
\end{equation}
For a block tuple $(X_1,\dots,X_L)$ define its empirical measure
\[
\mu = \frac{1}{L}\sum_{k=1}^L \delta_{X_k}.
\]
Let $\mathcal{M}_L$ denote the set of all empirical measures of the form above (i.e., $L$ atoms of weight $1/L$,
allowing repetitions). For $\mu\in\mathcal{M}_L$, let $\calS_{L-1}(\mu)$ be the set of (ordered) tuples
with empirical measure $\mu$. Then we can rewrite
\begin{equation}\label{eq:Z-by-type}
\overline{Z}
=\sum_{L=1}^{M+1}\ \sum_{\mu\in\mathcal{M}_L:\, \E_\mu[a]=\frac{N+1}{L},\ \E_\mu[b]=\frac{M+1}{L}}
|\calS_{L-1}(\mu)|\ \exp\!\Bigl(L\ \E_{\mu}[\log w(a,b)]\Bigr).
\end{equation}
(The mean constraints are forced because $\sum_k X_k=(N+1,M+1)$.)

\subsection{Exponential scale of the entropy term}

Write 
\begin{align*}
    n_{a,b} = n_{a,b}(\mu) := L\,\mu(a,b)\in\{0,1,\dots,L\}
\end{align*}
for the multiplicity of the value $(a,b)$ under $\mu$.
Then the number of tuples with empirical measure $\mu$ is the multinomial coefficient
\begin{equation}\label{eq:multinomial}
|\calS_{L-1}(\mu)|
=\frac{L!}{\prod_{(a,b)} n_{a,b}!}.
\end{equation}
Recall the Shannon entropy
\begin{align*}
    H(\mu) := -\sum_{a,b}\mu(a,b)\log\mu(a,b), \qquad 0\log0 := 0.
\end{align*}

\subsection{Entropy approximation and reduction to a variational problem}

Define the truncated-to-$L\le M$ sum
\begin{equation}\label{eq:Z-by-type-leM}
\overline{Z}^{\le M}
:=\sum_{L=1}^{M}\
\sum_{\mu\in\mathcal{M}_L:\, \E_\mu[a]=\frac{N+1}{L},\ \E_\mu[b]=\frac{M+1}{L}}
|\calS_{L-1}(\mu)|\ \exp\!\Bigl(L\ \E_{\mu}[\log w(a,b)]\Bigr),
\end{equation}
as well as the entropy-replaced sum
\begin{equation}\label{eq:Z-by-type-tilde}
\widetilde{Z}^{\le M}
:=\sum_{L=1}^{M}\
\sum_{\mu\in\mathcal{M}_L:\, \E_\mu[a]=\frac{N+1}{L},\ \E_\mu[b]=\frac{M+1}{L}}
\exp\!\Bigl(L H(\mu)+L\ \E_{\mu}[\log w(a,b)]\Bigr).
\end{equation}

\subsubsection{Uniform Stirling decomposition}

\begin{lemma}[Uniform Stirling decomposition]\label{lem:stirling-uniform}
Let $L\ge 1$ and $\mu\in\mathcal M_L$ with multiplicities $n_{a,b}:=L\mu(a,b)$.
Let $D:=\{(a,b):n_{a,b}\ge 1\}$ and $s:=|D|$.
Then
\begin{equation}\label{eq:stirling-decomp}
\log|\calS_{L-1}(\mu)|
= L H(\mu)
+\frac12\log(2\pi L)-\frac12\sum_{(a,b)\in D}\log(2\pi n_{a,b})
+\rho_L(\mu),
\end{equation}
where the remainder satisfies the explicit bounds
\begin{equation}\label{eq:rho-bounds}
-\sum_{(a,b)\in D}\frac{1}{12\,n_{a,b}}
\ \le\ \rho_L(\mu)\ \le\ \frac{1}{12L}.
\end{equation}
\end{lemma}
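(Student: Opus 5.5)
The plan is to apply Stirling's formula with Robbins' explicit error bounds to every factorial in the multinomial coefficient \eqref{eq:multinomial}. Recall that for every integer $n\ge 1$,
\[
\log n! = n\log n - n + \tfrac12\log(2\pi n) + r_n,\qquad \frac{1}{12n+1}< r_n < \frac{1}{12n}.
\]
Factorials $n_{a,b}!$ with $n_{a,b}=0$ equal $1$ and can be discarded. The product in \eqref{eq:multinomial} therefore runs only over $D$, so
\[
\log|\calS_{L-1}(\mu)| = \log L! - \sum_{(a,b)\in D}\log n_{a,b}!.
\]

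Next, substitute the Stirling expansion into each term. The linear terms cancel exactly, since $\sum_{(a,b)\in D} n_{a,b}=L$ gives $-L+\sum_D n_{a,b}=0$. For the main terms, write $n_{a,b}=L\mu(a,b)$ and use $\sum_D \mu(a,b)=1$:
\[
L\log L-\sum_D n_{a,b}\log n_{a,b} = -L\sum_D\mu(a,b)\log\mu(a,b) = L H(\mu).
\]
The square-root terms produce exactly $\tfrac12\log(2\pi L)-\tfrac12\sum_D\log(2\pi n_{a,b})$. Setting $\rho_L(\mu):=r_L-\sum_D r_{n_{a,b}}$ then yields \eqref{eq:stirling-decomp} as an identity.

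For the bounds \eqref{eq:rho-bounds}, the remainders satisfy $0<r_n<1/(12n)$. Dropping the negative sum gives the upper bound $\rho_L(\mu)< r_L<1/(12L)$. Dropping $r_L>0$ gives the lower bound $\rho_L(\mu)> -\sum_D r_{n_{a,b}}\ge -\sum_D 1/(12n_{a,b})$.

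None of these steps is a real obstacle. The only care needed is to note that Robbins' bounds hold for all $n\ge1$, including $n=1$, so the estimates are uniform with no condition on the size of $L$ or of the multiplicities. One should also state explicitly that zero multiplicities are excluded through $D$, so that $\log n_{a,b}$ and $1/n_{a,b}$ are always well defined.
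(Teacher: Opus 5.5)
Your proposal is correct and follows essentially the same route as the paper: apply two-sided Stirling bounds to $\log L! - \sum_{(a,b)\in D}\log n_{a,b}!$, use $L\log L - \sum_D n_{a,b}\log n_{a,b} = LH(\mu)$, and collect the square-root terms and remainders. Your use of Robbins' remainder $0<r_n<1/(12n)$ is just a slightly sharper form of the explicit bounds the paper cites, and it yields the same two-sided estimate on $\rho_L(\mu)$.
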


\begin{proof}
Apply the explicit Stirling bounds (valid for all integers $k\ge 1$),
\[
\sqrt{2\pi k}\Big(\frac{k}{e}\Big)^k \le k! \le
\sqrt{2\pi k}\Big(\frac{k}{e}\Big)^k e^{\frac{1}{12k}},
\]
to $\log|\calS_{L-1}(\mu)|=\log L!-\sum_{(a,b)\in D}\log(n_{a,b}!)$, and use
$L\log L-\sum_{a,b}n_{a,b}\log n_{a,b}=L H(\mu)$.
\end{proof}

\subsubsection{Support bound and subexponential number of types}

For each $M$, we define the alphabet $\mathcal A_M := \left\{ (a,b)\in\N_{>0}^2:1\le b\le a\le N+1 \right\}$, with
\begin{align*}
    K_M:=|\mathcal A_M|=\sum_{a=1}^{N+1} a=\frac{(N+1)(N+2)}{2}.
\end{align*}
Let $\mathcal I_M$ denote the set of pairs $(L,\mu)$ that appear in \eqref{eq:Z-by-type-leM}, i.e.,
\begin{align*}
    \mathcal I_M:=\Bigl\{(L,\mu):L\in\{1,\dots,M\},\
    \mu\in\mathcal M_L,\ \supp(\mu)\subseteq\mathcal A_M,\
    \E_\mu[a]=\tfrac{N+1}{L},\ \E_\mu[b]=\tfrac{M+1}{L}\Bigr\}.
\end{align*}
\begin{lemma}[Support bound]\label{lem:support-bound}
Let $(L,\mu)\in \mathcal I_M$, and write $D:=\supp(\mu)$. Then
\begin{equation}\label{eq:support-explicit}
    |D|
    \;\le\;
    \frac12\Big(\sqrt[3]{3(N+1)}+2\Big)^2.
\end{equation}
\end{lemma}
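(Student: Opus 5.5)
The bound will come from the first-moment constraint alone: a measure whose mean of $a$ is small cannot be spread over many distinct atoms. Since $(L,\mu)\in\mathcal I_M$, every atom of $\mu$ carries mass at least $1/L$. This gives
\[
\frac{N+1}{L}=\E_\mu[a]=\sum_{(a,b)\in D}\mu(a,b)\,a\ \ge\ \frac1L\sum_{(a,b)\in D}a,
\]
that is, $\sum_{(a,b)\in D}a\le N+1$. So it suffices to show that any set $D$ of distinct pairs with $1\le b\le a$ and $\sum_{(a,b)\in D}a\le N+1$ satisfies \eqref{eq:support-explicit}.

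\textbf{Extremal configuration.} For each $a$ there are exactly $a$ admissible pairs $(a,1),\dots,(a,a)$. To maximize $|D|$ under the budget $\sum a\le N+1$, one should greedily take all pairs with the smallest values of $a$: replacing a pair with larger $a$ by an unused pair with smaller $a$ keeps the budget and preserves $|D|$. Concretely, if $|D|=s$ and we list the first-coordinate multiset of $D$ in increasing order, the $t$-th smallest value is at least the $t$-th smallest value in the sequence $1,2,2,3,3,3,\dots$ (value $a$ repeated $a$ times), because at most $a$ pairs share first coordinate $a$.

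\textbf{Cost lower bound.} Choose $m$ with $\binom{m}{2}< s\le\binom{m+1}{2}$ (i.e.\ $s$ pairs fill values $1,\dots,m-1$ completely and part of level $m$). The minimal budget is then
\[
\sum_{a=1}^{m-1}a^2\ \ge\ \frac{(m-1)^3}{3}.
\]
Hence $(m-1)^3\le 3(N+1)$, so $m\le \sqrt[3]{3(N+1)}+1$. Consequently
\[
s\le \frac{m(m+1)}{2}\le \frac{(m+1)^2}{2}\le \frac12\Big(\sqrt[3]{3(N+1)}+2\Big)^2.
\]

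The only delicate step is making the exchange/ordering argument rigorous; the comparison with the sequence $1,2,2,3,3,3,\dots$ handles it cleanly via a pointwise (sorted) domination. The inequality $\sum_{a=1}^{m-1}a^2\ge\int_0^{m-1}x^2\,dx$ then supplies the constant $3$.
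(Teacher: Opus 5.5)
Your proof is correct and follows the paper's argument essentially step for step. The steps match as follows: the budget $\sum_{(a,b)\in D} a\le N+1$ comes from each atom appearing at least once; your $m$ is the paper's $t$, the smallest integer with $t(t+1)/2\ge s$; the cost bound is $\sum_{a=1}^{m-1}a^2\ge (m-1)^3/3$; and the conclusion is $s\le (m+1)^2/2$. Your sorted comparison with the sequence $1,2,2,3,3,3,\dots$ is a clean way to justify the paper's bare assertion that any $s$ admissible pairs have total cost at least $\sum_{a=1}^{t-1}a^2$.
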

\begin{proof}
 Each distinct symbol $(a,b)$ in $D$ appears at least once, hence
\begin{align} \label{eq:cost_bd}
    \sum_{(a,b)\in\supp(\mu)} a \ \le\ \sum_{k=1}^L a_k \ =\ N+1.
\end{align}
For each $a\ge 1$ there are exactly $a$ choices of $b\in\{1,\dots,a\}$, and each such candidate pair $(a,b)$ has ``cost" $a$. We respectively denote the number of pairs with cost $\le t$ and the total cost of all of them via
\begin{align*}
    & N(t):=\sum_{a=1}^t a=\frac{t(t+1)}2;
    & C(t):=\sum_{a=1}^t a^2=\frac{t(t+1)(2t+1)}{6}.
\end{align*}
If $s=|D|$, let $t$ be the smallest integer with $N(t)\ge s$, so that $N(t-1) < s$. Then any set of $s$ pairs has total cost at least $C(t-1)$. Hence 
\begin{align*}
    C(t-1) \leq \sum_{(a,b)\in\supp(\mu)} a \stackrel{\eqref{eq:cost_bd}}{\leq} N+1.
\end{align*}
Using $C(u)\ge u^3/3$ for $u\ge 1$ gives $(t-1)^3/3\le N+1$, so $t\le \sqrt[3]{3(N+1)}+1$. Finally,
\begin{align*}
    s\le N(t)=\frac{t(t+1)}{2}\le \frac{(t+1)^2}{2}\le \frac12\Big(\sqrt[3]{3(N+1)}+2\Big)^2,
\end{align*}
producing the desired bound.
\end{proof}

\begin{lemma}[Subexponential number of relevant types]\label{lem:subexp-types}
We have $\lim_{M\to\infty}\frac{1}{N}\log|\mathcal I_M|=0$.
\end{lemma}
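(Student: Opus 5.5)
We will bound $|\mathcal I_M|$ by counting, for each $L\le M$, the admissible empirical measures $\mu\in\mathcal M_L$ supported in $\mathcal A_M$. The count is organized through the support $D=\supp(\mu)$. By \cref{lem:support-bound}, every admissible $\mu$ satisfies $|D|\le s_M:=\frac12(\sqrt[3]{3(N+1)}+2)^2=O(N^{2/3})$. A measure $\mu\in\mathcal M_L$ is determined by two pieces of data: its support $D\subseteq\mathcal A_M$, and the vector of multiplicities $(n_{a,b})_{(a,b)\in D}$. Each multiplicity is a positive integer at most $L\le M$. Hence
\[
|\mathcal I_M|\;\le\;\sum_{L=1}^{M}\ \sum_{s=0}^{s_M}\binom{K_M}{s}\,(M+1)^{s}\;\le\;M\,(s_M+1)\,K_M^{s_M}(M+1)^{s_M}.
\]

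We then take logarithms, using $K_M=(N+1)(N+2)/2=O(N^2)$ and $M\le N$:
\[
\log|\mathcal I_M|\;\le\;\log M+\log(s_M+1)+s_M\bigl(\log K_M+\log(M+1)\bigr)\;=\;O\bigl(N^{2/3}\log N\bigr).
\]
Dividing by $N$, the right-hand side tends to $0$. The lower bound $\log|\mathcal I_M|\ge 0$ is immediate, since $\mathcal I_M$ is nonempty: for instance $L=1$ with $\mu=\delta_{(N+1,M+1)}$ is admissible. This gives the claimed limit.

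There is no real obstacle here. The only substantive input is the sublinear support bound of \cref{lem:support-bound}. Without it, the crude count $K_M^{L}$ over all $L$-tuples would be exponential in $N$. The one point to check is that the support and multiplicities genuinely determine $\mu$, which holds because the atoms of $\mu$ all have weight $1/L$ and $L$ is fixed in the outer sum. We do not need to use the mean constraints $\E_\mu[a]=\frac{N+1}{L}$, $\E_\mu[b]=\frac{M+1}{L}$ beyond their role in \cref{lem:support-bound}, since dropping them only enlarges the count.
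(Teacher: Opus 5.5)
Your proposal is correct and follows essentially the same route as the paper: both use \cref{lem:support-bound} to cap the support size at $s_M=O(N^{2/3})$, then count supports via $\binom{K_M}{s}$ and multiplicity vectors, getting $\log|\mathcal I_M|=O(N^{2/3}\log N)$. The differences are cosmetic. You bound the multiplicity count crudely by $(M+1)^s$ instead of the paper's exact $\binom{L-1}{s-1}$, and you explicitly check that $\mathcal I_M$ is nonempty, which gives $\log|\mathcal I_M|\ge 0$; the paper leaves that side implicit.
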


\begin{proof}
By \cref{lem:support-bound}, any feasible $\mu\in\mathcal M_L$
has $|\supp(\mu)|\le s_M$ where $s_M$ is the RHS of \eqref{eq:support-explicit}.
For each $s\le s_M$, the number of choices of a support set $D\subseteq\mathcal A_M$ of size $s$ is $\binom{K_M}{s}$,
and given $D$, the number of positive multiplicity vectors $(n_x)_{x\in D}$ with $\sum_{x\in D}n_x=L$
is $\binom{L-1}{s-1}$. Hence the number of possible $\mu$ satisfies
\begin{align*}
    |\mathcal I_M| \leq \sum_{L=1}^{M}\sum_{s=1}^{\lfloor s_M\rfloor}\binom{K_M}{s}\binom{L-1}{s-1}.
\end{align*}
Using $\binom{K}{s}\le (eK/s)^s$ and the subadditivity of $\log$ yields
\[
\log|\mathcal I_M|\ \le\ s_M\cdot O(\log M),
\]
and since $s_M=O(M^{2/3})$ while $N=M/\alpha$, we get $(1/N)\log|\mathcal I_M|\to 0$.
\end{proof}

\subsubsection{Entropy replacement and sum-to-sup}

\begin{proposition}[Entropy replacement at scale $N$]\label{prop:entropy-replacement}
Fix $\alpha \in (0,1)$. There exists an explicit deterministic $\delta_M=\delta_M(\alpha)$ with
$\delta_M/N\to 0$ such that
\begin{equation}\label{eq:SandStilde-sandwich}
e^{-\delta_M}\,\widetilde{Z}^{\le M}\ \le\ \overline{Z}^{\le M}\ \le\ e^{\delta_M}\,\widetilde{Z}^{\le M}.
\end{equation}
\end{proposition}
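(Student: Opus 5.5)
The plan is to compare $\overline{Z}^{\le M}$ and $\widetilde{Z}^{\le M}$ term by term. Both sums run over the same index set $\mathcal I_M$, and both summands carry the same factor $\exp(L\,\E_\mu[\log w])$. So it suffices to find a deterministic $\delta_M = o(N)$ such that, uniformly over $(L,\mu)\in\mathcal I_M$,
\begin{align*}
    \bigl|\log|\calS_{L-1}(\mu)| - L H(\mu)\bigr| \le \delta_M .
\end{align*}
Summing the resulting pointwise inequalities $e^{-\delta_M}\exp(LH(\mu)+\cdots)\le |\calS_{L-1}(\mu)|\exp(\cdots)\le e^{\delta_M}\exp(LH(\mu)+\cdots)$ over $\mathcal I_M$ then gives \eqref{eq:SandStilde-sandwich} immediately. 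No appeal to \cref{lem:subexp-types} is needed at this stage; that lemma will only enter later, in the sum-to-sup step.

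For the pointwise bound I will use \cref{lem:stirling-uniform}. It writes the discrepancy as
\begin{align*}
    \tfrac12\log(2\pi L)-\tfrac12\sum_{(a,b)\in D}\log(2\pi n_{a,b})+\rho_L(\mu).
\end{align*}
I will bound each of the three pieces using $s=|D|\le s_M$ from \cref{lem:support-bound}, where $s_M=\tfrac12(\sqrt[3]{3(N+1)}+2)^2=O(N^{2/3})$.
\begin{itemize}
    \item The first term satisfies $0\le \tfrac12\log(2\pi L)\le \tfrac12\log(2\pi M)$.
    \item Since $1\le n_{a,b}\le L\le M$, the middle sum is nonnegative and at most $\tfrac{s}{2}\log(2\pi M)\le \tfrac{s_M}{2}\log(2\pi M)$.
    \item By \eqref{eq:rho-bounds}, $|\rho_L(\mu)|\le \tfrac{1}{12}+\tfrac{s}{12}\le \tfrac{1+s_M}{12}$.
\end{itemize}
Combining these, I set
\begin{align*}
    \delta_M := \tfrac{1+s_M}{2}\log(2\pi M)+\tfrac{1+s_M}{12},
\end{align*}
which is explicit and deterministic. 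Since $s_M=O(M^{2/3})$ and $N=M/\alpha$, we get $\delta_M=O(M^{2/3}\log M)=o(N)$.

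The only genuine obstacle is uniformity. A naive Stirling estimate gives error terms of order $|\supp(\mu)|\log L$, and without control this could be as large as $L\sim M$. That would destroy the $o(N)$ bound. The whole point is that the mean constraint $\E_\mu[a]=(N+1)/L$ forces the support to be small, and this is exactly what \cref{lem:support-bound} provides. So the key step is feeding that support bound into the Stirling remainder uniformly in both $L$ and $\mu$. Everything else is bookkeeping: the condition $n_{a,b}\ge1$ on the support is what keeps the $1/(12n_{a,b})$ terms bounded by $1$ each.
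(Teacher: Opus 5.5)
Your proposal is correct and follows the paper's proof essentially line for line. It makes the same term-by-term comparison through \cref{lem:stirling-uniform} and feeds in the support bound $s\le s_M$ from \cref{lem:support-bound} uniformly in $(L,\mu)$, so you arrive at exactly the paper's $\delta_M=\tfrac{s_M+1}{2}\log(2\pi M)+\tfrac{s_M+1}{12}=O(N^{2/3}\log N)=o(N)$.
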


\begin{proof}
Fix an admissible summand $(L,\mu)$ in \eqref{eq:Z-by-type-leM} and write $D=\{n_{a,b}\ge 1\}$, $s=|D|$.
By \cref{lem:stirling-uniform},
\begin{align*}
    \log|\calS_{L-1}(\mu)|=LH(\mu)+\Delta_L(\mu)
\end{align*}
with $\Delta_L(\mu)$ given by \eqref{eq:stirling-decomp}.
Since $1\le n_{a,b}\le L\le M$ on $D$, \eqref{eq:rho-bounds} implies
\begin{align*}
    |\Delta_L(\mu)| \le \frac{s+1}{2}\log(2\pi M)+\frac{s}{12}+\frac{1}{12}.
\end{align*}
By \cref{lem:support-bound}, $s\le s_M:=\frac12(\sqrt[3]{3(N+1)}+2)^2$ for every feasible $\mu$, so $|\Delta_L(\mu)|\le \delta_M$ with
\begin{align*}
    \delta_M:=\frac{s_M+1}{2}\log(2\pi M)+\frac{s_M}{12}+\frac{1}{12}, \qquad \frac{\delta_M}{N}\to 0.
\end{align*}
Thus, it holds uniformly over all admissible $(L,\mu)$ that
\begin{align*}
    e^{-\delta_M}e^{LH(\mu)}\le |\calS_{L-1}(\mu)|\le e^{\delta_M}e^{LH(\mu)}.
\end{align*}
Multiplying by $e^{L\E_\mu[\log w]}$ and summing gives \eqref{eq:SandStilde-sandwich}.
\end{proof}

\begin{proposition}[Sum to supremum]\label{prop:sum-to-sup}
We have
\begin{equation}\label{eq:bar-sup}
\frac{1}{N}\log \overline{Z}^{\le M}
=\sup_{(L,\mu)\in\mathcal I_M}\ \frac{L}{N}\Big(H(\mu)+\E_\mu[\log w(a,b)]\Big)\ +\ o(1).
\end{equation}
\end{proposition}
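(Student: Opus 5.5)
The plan is the standard sum-versus-max sandwich, run on $\widetilde{Z}^{\le M}$ and then transferred to $\overline{Z}^{\le M}$ using \cref{prop:entropy-replacement}.

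By \cref{prop:entropy-replacement},
\begin{align*}
    \frac1N\log\overline{Z}^{\le M}=\frac1N\log\widetilde{Z}^{\le M}+O\!\left(\frac{\delta_M}{N}\right),
\end{align*}
and $\delta_M/N\to0$. So it suffices to prove \eqref{eq:bar-sup} with $\widetilde{Z}^{\le M}$ in place of $\overline{Z}^{\le M}$.

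The sum \eqref{eq:Z-by-type-tilde} runs exactly over the index set $\mathcal I_M$. Indeed, any $\mu$ with $\mu(a,b)>0$ and $w(a,b)=0$ contributes $e^{-\infty}=0$ and may be discarded. Any feasible atom has $1\le b\le a\le N+1$, since $\sum_k a_k=N+1$. Hence every nonzero term has $\supp(\mu)\subseteq\mathcal A_M$.

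Write $F(L,\mu):=L\bigl(H(\mu)+\E_\mu[\log w]\bigr)$. Each summand equals $e^{F(L,\mu)}$, and all summands are nonnegative, so
\begin{align*}
    \max_{(L,\mu)\in\mathcal I_M}e^{F(L,\mu)}\;\le\;\widetilde{Z}^{\le M}\;\le\;|\mathcal I_M|\max_{(L,\mu)\in\mathcal I_M}e^{F(L,\mu)}.
\end{align*}
Taking $\frac1N\log$ and invoking \cref{lem:subexp-types}, which gives $\frac1N\log|\mathcal I_M|\to0$, yields
\begin{align*}
    \frac1N\log\widetilde{Z}^{\le M}=\sup_{\mathcal I_M}\frac{F}{N}+o(1).
\end{align*}
This is the claim.

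Two small points remain.

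\textbf{Nonemptiness.} $\mathcal I_M$ must contain a term with $F>-\infty$, so that the logarithms are finite. Take $L=1$ with the single atom $(N+1,M+1)$; it has $w=\binom{N}{M}^2>0$.

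\textbf{Finiteness of the supremum.} The supremum is attained because $\mathcal I_M$ is finite.

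The only step carrying real content is the polynomial-in-support control of $|\mathcal I_M|$ and of the Stirling error. Both are already supplied by \cref{lem:support-bound}, \cref{lem:subexp-types} and \cref{prop:entropy-replacement}. I therefore expect no genuine obstacle beyond carefully justifying that zero-weight atoms can be dropped and that the $o(1)$ terms are uniform in $(L,\mu)$, which they are since $\delta_M$ and $|\mathcal I_M|$ are deterministic and independent of $(L,\mu)$.
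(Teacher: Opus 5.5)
Your proposal is correct and follows the paper's proof essentially verbatim. Both use the sandwich $\max e^{F}\le \widetilde{Z}^{\le M}\le |\mathcal I_M|\max e^{F}$ together with \cref{lem:subexp-types}, and then transfer from $\widetilde{Z}^{\le M}$ to $\overline{Z}^{\le M}$ via \cref{prop:entropy-replacement}. Your extra remarks, discarding atoms with $w=0$ and ensuring nonemptiness via the single atom $(N+1,M+1)$ at $L=1$, are valid details that the paper leaves implicit.
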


\begin{proof}
Let $\widetilde T_M(L,\mu):=\exp(LH(\mu)+L\E_\mu[\log w])$. Since all summands are nonnegative,
\begin{align*}
    \sup_{(L,\mu)\in\mathcal I_M}\widetilde T_M(L,\mu)
    \ \le\ \widetilde{Z}^{\le M}\ \le\ |\mathcal I_M|\cdot \sup_{(L,\mu)\in\mathcal I_M}\widetilde T_M(L,\mu).
\end{align*}
Taking logs, dividing by $N$, and using \cref{lem:subexp-types} yields
\[
\frac{1}{N}\log\widetilde{Z}^{\le M}
=\sup_{(L,\mu)\in\mathcal I_M} \frac{L}{N}\Big(H(\mu)+\E_\mu[\log w]\Big)+o(1).
\]
Combine with \cref{prop:entropy-replacement} and $\delta_M/N\to 0$ to replace $\widetilde{Z}^{\le M}$ by $\overline{Z}^{\le M}$.
\end{proof}

\subsection{The limiting variational formula}

For $\rho\in(0,1)$ define
\begin{align}
    \Phi(\rho) :=\sup\Big\{H(\nu)+\E_{\nu}[\log w(a,b)]:\; &\supp(\nu)\subseteq \{(a,b):1\le b\le a\}, \label{eq:Phi-def}\\
    &\qquad \ |\supp(\nu)|<\infty,\ \E_\nu[a]=\tfrac{1}{\alpha\rho},\ \E_\nu[b]=\tfrac{1}{\rho}\Big\}.\nonumber 
\end{align}

\begin{lemma}[Bounded-atom correction]\label{lem:one-atom-correction}
Fix $\alpha \in (0,1)$. The following holds for all sufficiently large $M$. Let $L\in\{1,\dots,M\}$ and let $(X_1,\dots,X_L) \in \mathcal S_{L-1}(N,M)$, with $X_k=(a_k,b_k)$ and empirical measure $\mu=\frac1L\sum_{k=1}^L\delta_{X_k}$. Then there exists an index $k_\star$ with
\begin{equation}\label{eq:bounded-good-atom}
    2\le b_{k_\star} \leq a_{k_\star}.
\end{equation}
Define a modified tuple $\left( X'_1,\dots,X'_L \right)$ by setting
\begin{align*}
    & X'_{k_\star} := \left(a_{k_\star}-1,b_{k_\star}-1\right);
    & X'_k := X_k \text{ for } k\neq k_\star,
\end{align*}
and let $\widehat\mu$ be its empirical measure. Then
\begin{align*}
    & \E_{\widehat\mu}[a]=\frac{N}{L};
    & \E_{\widehat\mu}[b]=\frac{M}{L}.
\end{align*}
Moreover, writing $K_M:=\frac{(N+1)(N+2)}2$ for the alphabet size, we have
\begin{equation}\label{eq:correction-entropy-bound}
    \left| H(\widehat\mu) - H(\mu) \right| \leq \frac{1}{L} \log(K_M-1)+h\left( 1/L \right),
\end{equation}
and
\begin{equation}\label{eq:correction-logw-bound}
    \Big|\E_{\widehat\mu}[\log w]-\E_{\mu}[\log w]\Big| \leq \frac{1}{L}\max_{2\le b\le a\le N+1}\Big|\log w(a,b)-\log w(a-1,b-1)\Big|.
\end{equation}
In particular, uniformly over $L\in\{1,\dots,M\}$, as $M\to\infty$,
\begin{align} \label{eq:one-atom-correction-uniform-bd}
    \frac{L}{N}\Big(H(\widehat\mu)+\E_{\widehat\mu}[\log w]\Big)
    = \frac{L}{N}\Big(H(\mu)+\E_\mu[\log w]\Big)+o(1).
\end{align}
\end{lemma}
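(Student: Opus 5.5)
Existence of $k_\star$ comes from counting. Since $L\le M$ and $\sum_k b_k=M+1>L$, some $b_k\ge 2$. Any block with $w(X_k)>0$ has $b_k\le a_k$, and nonzero weight is the only case relevant to the sum. If we want the statement for arbitrary tuples in $\mathcal S_{L-1}(N,M)$, we restrict to tuples with all $b_k\le a_k$, which are exactly the contributing ones. The mean identities are immediate: subtracting $(1,1)$ from one atom changes the totals from $(N+1,M+1)$ to $(N,M)$, so dividing by $L$ gives $\E_{\widehat\mu}[a]=N/L$ and $\E_{\widehat\mu}[b]=M/L$.

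For the entropy bound \eqref{eq:correction-entropy-bound}, note that $\mu$ and $\widehat\mu$ differ by moving mass $1/L$ from the atom $X_{k_\star}$ to $X'_{k_\star}$, so $\|\mu-\widehat\mu\|_{TV}\le 1/L$. Both measures are supported in the alphabet $\mathcal A_M$ of size $K_M$. The Fannes--Audenaert type continuity bound $|H(\mu)-H(\widehat\mu)|\le \epsilon\log(K-1)+h(\epsilon)$ for $\epsilon=\|\mu-\widehat\mu\|_{TV}\le 1/L$ then applies, using monotonicity of $\epsilon\log(K-1)+h(\epsilon)$ for $\epsilon\le 1-1/K$. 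Alternatively, one can argue directly with a coupling: write both measures as mixtures with common part $(1-1/L)\mu_0$ and use concavity together with the grouping rule, giving $|H(\mu)-H(\widehat\mu)|\le \frac1L\log K_M+h(1/L)$, which suffices. For \eqref{eq:correction-logw-bound}, $\E_{\widehat\mu}[\log w]-\E_\mu[\log w]=\frac1L(\log w(X'_{k_\star})-\log w(X_{k_\star}))$ exactly. Here $w(a-1,b-1)>0$ since $1\le b-1\le a-1$.

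The uniform estimate \eqref{eq:one-atom-correction-uniform-bd} follows by multiplying both errors by $L/N$. The entropy term gives $\frac{L}{N}(\frac1L\log K_M+h(1/L))\le \frac{\log K_M}{N}+\frac{L\,h(1/L)}{N}$. Since $K_M=O(N^2)$ and $L\,h(1/L)\le 1+\log L\le 1+\log M$, this term is $O(\log N/N)=o(1)$. The $\log w$ term contributes $\frac1N\max_{2\le b\le a\le N+1}|\log w(a,b)-\log w(a-1,b-1)|$. Using
\[
\binom{a-1}{b-1}\Big/\binom{a-2}{b-2}=\frac{a-1}{b-1}\in\bigl[1,\,N\bigr],
\]
we get $|\log w(a,b)-\log w(a-1,b-1)|=2\log\frac{a-1}{b-1}\le 2\log N$. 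Hence this contribution is $O(\log N/N)=o(1)$, uniformly in $L$.

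The only slightly delicate point is the entropy continuity. If the constant in a quoted Fannes-type inequality does not match \eqref{eq:correction-entropy-bound} exactly, the elementary mixture or grouping argument above still gives a bound of the same order, and that is all \eqref{eq:one-atom-correction-uniform-bd} requires.
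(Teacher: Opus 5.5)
Your proof is correct and follows the paper's own argument: counting via $L\le M<\sum_k b_k$ to find $k_\star$, the shift of the totals by $(1,1)$, Fannes--Audenaert for the entropy (your mixture alternative actually gives the sharper $|H(\widehat\mu)-H(\mu)|\le h(1/L)$, since the two measures differ only in a point-mass component), the exact one-atom change for $\log w$, and multiplication by $L/N$. Two of your additions tighten the paper's write-up. First, the existence of $k_\star$ with $b_{k_\star}\le a_{k_\star}$ does require restricting to tuples with all $b_k\le a_k$; for example, $X_1=(1,M)$, $X_2=(N,1)$ lies in $\mathcal S_1(N,M)$ and has no such index, which is harmless because the lemma is only applied to $\mu$ with $\supp(\mu)\subseteq\mathcal A_M$. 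Second, your ratio $\binom{a-1}{b-1}/\binom{a-2}{b-2}=\frac{a-1}{b-1}\le N$ supplies the $O(\log M)$ bound that the paper only asserts.
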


\begin{proof}
The existence of an atom satisfying \eqref{eq:bounded-good-atom} follows since $L \le M$ and
\begin{align*}
    \sum_{k=1}^{L} b_k = M+1.
\end{align*}
Replacing $(a,b)$ by $(a-1,b-1)$
decreases the total sums by $(1,1)$, hence changes $(N+1,M+1)$ to $(N,M)$, giving the mean identities. For \eqref{eq:correction-entropy-bound}, note that $\mu$ and $\widehat\mu$ are supported on an alphabet of size $\le K_M$ and differ in total variation by at most $1/L$ (one count is moved from one symbol to another), so the explicit Fannes--Audenaert bound gives the stated inequality. Since only one atom is changed, \eqref{eq:correction-logw-bound} is immediate. Finally, using $K_M=O(M^2)$, 
\begin{align*}
    \max_{2\le b\le a\le N+1}|\log w(a,b)-\log w(a-1,b-1)|=O(\log M),
\end{align*}
and $h(1/L)\ll (\log L)/L$, we obtain
\begin{align*}
    \frac{L}{N}\left| H(\widehat\mu)-H(\mu) \right|
    &\le \frac{1}{N}\log(K_M-1)+\frac{L}{N}h(1/L)=o(1), \\
    \frac{L}{N}\Big|\E_{\widehat\mu}[\log w]-\E_\mu[\log w]\Big|
    &\le \frac{1}{N}\max_{2\le b\le a\le N+1}\Big|\log w(a,b)-\log w(a-1,b-1)\Big|=o(1),
\end{align*}
which yields \eqref{eq:one-atom-correction-uniform-bd}.
\end{proof}
\begin{theorem}[Variational formula for $\overline{Z}$]\label{thm:main-variational}
Fix $\alpha \in (0,1)$ and define
\[
\mathsf R(\alpha):=\sup_{\rho\in(0,1)}\ \alpha\rho\,\Phi(\rho).
\]
Then we have
\begin{equation}\label{eq:rate-full}
\lim_{M\to\infty}\frac{1}{N}\log \overline{Z}=\mathsf R(\alpha).
\end{equation}

\end{theorem}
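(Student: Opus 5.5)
We need to prove $\lim \frac1N\log\overline Z=\mathsf R(\alpha)$. The plan is to establish matching lower and upper bounds by passing from the finite-$M$ supremum in \cref{prop:sum-to-sup} to the limiting variational problem $\Phi$. Two preliminary reductions come first.

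\textbf{Step 1: discard the term $L=M+1$.} In \eqref{eq:composition-sum}, the term $L=M+1$ forces every $b_k=1$, so every weight equals $w(a,1)=1$. Its contribution is therefore the number of compositions of $N+1$ into $M+1$ parts, namely $\binom{N}{M}$. Exponentially this is $e^{Nh(\alpha)}$. This is at most $\overline Z^{\le M}$ up to subexponential factors: take $L=M$, merge two unit atoms into one atom $(a,2)$, and note $w\ge1$. Hence $\frac1N\log\overline Z=\frac1N\log\overline Z^{\le M}+o(1)$, and \cref{prop:sum-to-sup} reduces the problem to showing that
\[
\sup_{(L,\mu)\in\mathcal I_M}\frac LN\bigl(H(\mu)+\E_\mu[\log w]\bigr)\to\mathsf R(\alpha).
\]

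\textbf{Step 2: upper bound.} Use \cref{lem:one-atom-correction} to replace $\mu$ by $\widehat\mu$, whose means are exactly $N/L$ and $M/L$, at cost $o(1)$ uniformly in $L$. Set $\rho:=L/M\in(0,1]$. Then $\E_{\widehat\mu}[b]=1/\rho$ and $\E_{\widehat\mu}[a]=1/(\alpha\rho)$, and $\widehat\mu$ has finite support. It is therefore feasible for $\Phi(\rho)$, giving
\[
\frac LN\bigl(H+\E\log w\bigr)\le\alpha\rho\,\Phi(\rho)+o(1).
\]
The boundary case $\rho=1$ forces $\widehat\mu=\delta_{(1/\alpha,1)}$, with value $0$. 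That value is $\le\mathsf R(\alpha)$ by a limit argument, or can be handled directly since its contribution is subexponential. This yields $\limsup\le\mathsf R(\alpha)$.

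\textbf{Step 3: lower bound.} Fix $\rho\in(0,1)$ and $\eta>0$. Choose a finitely supported $\nu$ that is feasible for $\Phi(\rho)$ and within $\eta$ of optimal. We realise $\nu$ approximately by an admissible pair $(L,\mu)$ with $L\approx\rho M$, as follows.
\begin{itemize}
\item Perturb $\nu$ so that its support contains at least two points with distinct $a$ and at least two with distinct $b$. Then small rational adjustments of the weights can fix the two mean constraints exactly, while changing $H$ and $\E\log w$ by $o(1)$; both are continuous in the weights on a fixed finite support.
\item Round $L\nu$ to integer multiplicities. Put the leftover discrepancy of $O(1)$ in $a$ and $b$ (after matching $N+1$ and $M+1$) into one extra atom $(a_0,b_0)$ with $a_0-b_0$ and $b_0$ bounded. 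This changes the normalised objective by $O(\log M/N)$.
\end{itemize}
The resulting $(L,\mu)$ lies in $\mathcal I_M$, and its value is $\alpha\rho\,(\Phi(\rho)-\eta)+o(1)$. Letting $\eta\to0$ and taking the supremum over $\rho$ gives $\liminf\ge\mathsf R(\alpha)$.

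\textbf{Main obstacle.} The main difficulty is this rounding step. We must keep the exact integer sum constraints $\sum a_k=N+1$, $\sum b_k=M+1$ with $b_k\le a_k$, while ensuring the correction atom's weight $w$ does not blow up. I would handle this by choosing $L=\lfloor\rho M\rfloor$ minus a constant. That leaves nonnegative slack in both coordinates with the $a$-slack at least the $b$-slack, since $1/(\alpha\rho)>1/\rho$. Absorbing the slack into a few atoms then changes the objective by $O(\log N)$, as $\log w(a,b)=O(a)$ stays controlled for bounded perturbations.
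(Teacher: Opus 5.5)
Your overall architecture matches the paper's. The lower bound realises a near-optimal finitely supported $\nu$ as an admissible type with $L\approx\rho M$ and repairs $O(1)$ atoms. The upper bound uses \cref{lem:one-atom-correction} with $\rho=L/M$. Your lower-bound repair, which reserves a constant number of slack atoms, validly replaces the paper's buffer of zero-cost atoms $(1,1),(2,1),(2,2)$. Step~1 is also correct and slicker than the paper's treatment. Merging the first two atoms is at most $N$-to-one and lands in $L=M$ tuples of weight $(s-1)^2\ge 1$, so $\overline Z^{(M+1)}\le N\,\overline Z^{\le M}$.

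The gap is the boundary case $\rho=L/M=1$ in Step~2. Your claim that $\rho=1$ forces $\widehat\mu=\delta_{(1/\alpha,1)}$ is false. The constraint $\E_{\widehat\mu}[b]=1$ forces $b\equiv 1$, so $\log w\equiv 0$. But the $a$-marginal is only required to have mean $1/\alpha$. Its entropy can be as large as that of the geometric law on $\{1,2,\dots\}$ with mean $1/\alpha$, namely $h(\alpha)/\alpha$. The normalised value $\frac{L}{N}H(\widehat\mu)$ can therefore be as large as $h(\alpha)>0$.

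Your fallback, that this contribution is subexponential, contradicts your own Step~1. That step shows the $L=M$ terms total at least $\binom NM/N=e^{Nh(\alpha)-o(N)}$. So Step~1 does not remove the difficulty; it relocates the $L=M+1$ mass to $L=M$, which is exactly the $\rho=1$ case, where $\Phi$ is not even defined.

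To close the upper bound you must show $\mathsf R(\alpha)\ge h(\alpha)$, i.e.\ that $\sup_{\rho<1}\alpha\rho\Phi(\rho)$ reaches the largest possible $\rho=1$ value. This is not a technicality: since $\overline Z\ge\binom NM$, the theorem itself implies $\mathsf R(\alpha)\ge h(\alpha)$.

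The paper supplies this step by an explicit construction, which it uses for the $L=M+1$ term. The same bound also covers the $L=M$ case that its own Step~2 glosses by writing $\rho\in(0,1]$. For $\rho\in(1/2,1)$, take $B_\rho\in\{1,2\}$ with $\E B_\rho=1/\rho$. Take an independent geometric $C_\rho\ge 0$ with mean $(1-\alpha)/(\alpha\rho)$, and set $A_\rho=B_\rho+C_\rho$. Using $w\ge 1$ and a truncation, $\Phi(\rho)\ge H(B_\rho)+H(C_\rho)$. Letting $\rho\uparrow 1$ gives $\alpha\rho\,\Phi(\rho)\to$ at least $h(\alpha)$.

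Your ``limit argument'' would need precisely this, or an equivalent concavity-of-entropy perturbation of a $\rho=1$ law into $\rho<1$. As written, the upper bound is incomplete.
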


\begin{proof}
We prove matching lower and upper bounds.

\medskip\noindent\emph{Lower bound.}
Fix $\rho\in(0,1)$ and $\eta>0$, and choose $\nu$ with finite support such that
$\E_\nu[a]=1/(\alpha\rho)$, $\E_\nu[b]=1/\rho$ and $H(\nu)+\E_\nu[\log w]\ge \Phi(\rho)-\eta$.
Let $D=\supp(\nu)$. For each large $M$, set $L:=\lfloor \rho M\rfloor$ and choose integers $n_x\ge 0$ for $x\in D$ with
$\sum_{x\in D}n_x=L$ and
\begin{align} \label{eq:n_x_condition}
    |n_x/L-\nu(x)|\le |D|/L.
\end{align}
Let $\mu_M^{(0)} \in \mathcal M_L$ be the corresponding empirical measure. Then as $M \to \infty$,
\begin{align*}
    & H(\mu_M^{(0)})\to H(\nu);
    & \E_{\mu_M^{(0)}}[\log w]\to \E_\nu[\log w].
\end{align*}
The mean constraints for $\overline{Z}^{\le M}$ require $\E_\mu[a]=\frac{N+1}{L}$ and $\E_\mu[b]=\frac{M+1}{L}$,
whereas $\E_{\mu_M^{(0)}}[a]=\frac{A_0}{L}$ and $\E_{\mu_M^{(0)}}[b]=\frac{B_0}{L}$ with
\begin{align*}
    & A_0=\sum_{x \in D} n_x a(x);
    & B_0=\sum_{x \in D} n_x b(x).
\end{align*}
Since $\mu_M^{(0)}\to\nu$, we have $A_0=N+O(1)$ and $B_0=M+O(1)$. Here, the $O(1)$ terms follow from \eqref{eq:n_x_condition}.

We now adjust $\mu_M^{(0)}$ into a new empirical measure $\mu_M^{(1)}\in\mathcal M_L$ satisfying the \emph{exact} constraints by changing only $O(1)$ atoms. We reserve a bounded buffer of the three symbols
\begin{align*}
    x_0:=(1,1),\qquad x_1:=(2,1),\qquad x_2:=(2,2), \qquad\text{for which}\qquad \log w(x_i)=0.
\end{align*}
More precisely, choose a fixed constant $C$ large enough that $C>|\Delta a|+|\Delta b|$ for all sufficiently large $M$, where
\begin{align*}
    (\Delta a,\Delta b):=(N+1-A_0,M+1-B_0)\in\Z^2.
\end{align*}
When constructing the counts $n_x$ above, we instead require $\sum_{x\in D} n_x=L-3C$, and then adjoin $C$ copies each of $x_0,x_1,x_2$. This changes only $O(1)$ atoms, so still $\mu_M^{(0)}\to\nu$, and hence still $A_0=N+O(1)$ and $B_0=M+O(1)$. Now we can increase $(A_0,B_0)$ by $(1,0)$ (resp.\ $(1,1)$) by replacing one copy of $x_0$ by $x_1$ (resp.\ $x_2$), and decrease them by reversing these moves; performing $O(1)$ such moves yields exact means. Since only $O(1)$ atoms are changed and $\log w(x_i)=0$, we still have
\begin{align*}
    & H(\mu_M^{(1)})=H(\mu_M^{(0)})+o(1);
    & \E_{\mu_M^{(1)}}[\log w]=\E_{\mu_M^{(0)}}[\log w]+o(1).
\end{align*}
Therefore,
\[
H(\mu_M^{(1)})+\E_{\mu_M^{(1)}}[\log w]\ \ge\ \Phi(\rho)-\eta+o(1).
\]
Applying \eqref{eq:bar-sup} to the admissible pair $(L,\mu_M^{(1)})$ gives
\[
\liminf_{M\to\infty}\frac{1}{N}\log \overline{Z}^{\le M}
\ \ge\ \liminf_{M\to\infty}\frac{L}{N}\Big(H(\mu_M^{(1)})+\E_{\mu_M^{(1)}}[\log w]\Big)
\ \ge\ \alpha\rho(\Phi(\rho)-\eta).
\]
Since $\overline Z\ge \overline Z^{\le M}$, it follows that
\[
\liminf_{M\to\infty}\frac{1}{N}\log \overline{Z}
\ \ge\ \alpha\rho(\Phi(\rho)-\eta).
\]
Let $\eta\downarrow 0$ and take the supremum over $\rho\in(0,1)$ to obtain
\begin{equation}\label{eq:liminf-full}
\liminf_{M\to\infty}\frac{1}{N}\log \overline{Z}\ge \mathsf R(\alpha).
\end{equation}

\medskip\noindent\emph{Upper bound.}
Write
\begin{align*}
    \overline{Z}=\overline{Z}^{\le M}+\overline{Z}^{(M+1)},
\end{align*}
where
\begin{align*}
    \overline{Z}^{(M+1)}:=\sum_{(X_1,\dots,X_{M+1})\in \mathcal{S}_{M}(N,M)}\ \prod_{k=1}^{M+1} w(X_k).
\end{align*}
By \eqref{eq:bar-sup}, there exists $(L,\mu)\in\mathcal I_M$ such that
\[
\frac{1}{N}\log \overline{Z}^{\le M}
\le \frac{L}{N}\Big(H(\mu)+\E_\mu[\log w]\Big)+o(1).
\]
Set $\rho:=L/M\in(0,1]$. Since $L\le M$ and $\sum_{k=1}^{L} b_k=M+1$, there exists an atom with $b\ge 2$, so \cref{lem:one-atom-correction} applies. Thus there is $\widehat\mu\in\mathcal M_L$ with
\[
\E_{\widehat\mu}[a]=\frac{N}{L}=\frac{1}{\alpha\rho},\qquad \E_{\widehat\mu}[b]=\frac{M}{L}=\frac{1}{\rho},
\]
and, by \eqref{eq:one-atom-correction-uniform-bd},
\[
\frac{L}{N}\Big(H(\widehat\mu)+\E_{\widehat\mu}[\log w]\Big)
=\frac{L}{N}\Big(H(\mu)+\E_\mu[\log w]\Big)+o(1).
\]
Since $\widehat\mu$ is finitely supported, it is admissible in the definition of $\Phi(\rho)$, hence
\[
\frac{L}{N}\Big(H(\mu)+\E_\mu[\log w]\Big)
\le \alpha\rho\,\Phi(\rho)+o(1)
\le \mathsf R(\alpha)+o(1).
\]
Therefore
\begin{equation}\label{eq:limsup-full-leM}
    \limsup_{M\to\infty}\frac{1}{N}\log \overline{Z}^{\le M}\le \mathsf R(\alpha).
\end{equation}

It remains to control $\overline{Z}^{(M+1)}$. If $(X_1,\dots,X_{M+1})\in\mathcal S_M(N,M)$, then necessarily $b_k=1$ for every $k$, since each $b_k\ge 1$ and $\sum_{k=1}^{M+1} b_k=M+1$. Hence $w(X_k)=w(a_k,1)=1$ for all $k$, and
\begin{align*}
    \overline{Z}^{(M+1)}
    =\bigl|\{(a_1,\dots,a_{M+1})\in\N_{>0}^{M+1}:\ a_1+\cdots+a_{M+1}=N+1\}\bigr|
    =\binom{N}{M}.
\end{align*}
Therefore
\begin{equation}\label{eq:limsup-full-Mplus1-exact}
    \lim_{M\to\infty}\frac{1}{N}\log \overline{Z}^{(M+1)}=h(\alpha).
\end{equation}
To compare this with $\mathsf R(\alpha)$, fix $\rho\in(1/2,1)$ and set $\delta_\rho:=\rho^{-1}-1$. Let $B_\rho\in\{1,2\}$ satisfy
\begin{align*}
    & \Prob(B_\rho=2)=\delta_\rho; \\
    & \Prob(B_\rho=1)=1-\delta_\rho,
\end{align*}
so that $\E[B_\rho]=1/\rho$. Let $C_\rho$ be a geometric random variable on $\{0,1,2,\dots\}$ with mean
\begin{align*}
    m_\rho:=\frac{1-\alpha}{\alpha\rho},
\end{align*}
and independent of $B_\rho$, and define $A_\rho:=B_\rho+C_\rho$. Then $A_\rho\ge B_\rho$ almost surely and
\begin{align*}
    & \E[A_\rho]=\E[B_\rho]+\E[C_\rho]=\frac{1}{\rho}+\frac{1-\alpha}{\alpha\rho}=\frac{1}{\alpha\rho}; \\
    & \E[B_\rho]=\frac{1}{\rho}.
\end{align*}
By truncating $C_\rho$ and adjusting the top atom, we may approximate the law of $(A_\rho,B_\rho)$ by finitely supported admissible laws with the same moments and entropy arbitrarily close to $H(A_\rho,B_\rho)$. Since $w(a,b)\ge 1$, it follows that
\begin{align*}
    \Phi(\rho)\ge H(A_\rho,B_\rho)=H(B_\rho)+H(C_\rho).
\end{align*}
Now $H(B_\rho)\to 0$ as $\rho\uparrow 1$, while
\begin{align*}
    H(C_\rho)=\bigl(m_\rho+1\bigr)h\!\left(\frac{1}{m_\rho+1}\right) \longrightarrow \frac{1}{\alpha}h(\alpha)
\end{align*}
as $\rho\uparrow 1$. Therefore
\begin{align*}
    \mathsf R(\alpha)=\sup_{\rho\in(0,1)}\alpha\rho\,\Phi(\rho)
    \ge \limsup_{\rho\uparrow 1} \alpha\rho\bigl(H(B_\rho)+H(C_\rho)\bigr)
    =h(\alpha).
\end{align*}
Together with \eqref{eq:limsup-full-Mplus1-exact}, this yields
\begin{equation}\label{eq:limsup-full-Mplus1}
    \limsup_{M\to\infty}\frac{1}{N}\log \overline{Z}^{(M+1)}\le \mathsf R(\alpha).
\end{equation}
Finally, using $\overline{Z}=\overline{Z}^{\le M}+\overline{Z}^{(M+1)}$ and combining \eqref{eq:liminf-full}, \eqref{eq:limsup-full-leM}, and \eqref{eq:limsup-full-Mplus1}, we conclude that
\[
\lim_{M\to\infty}\frac{1}{N}\log\overline{Z}= \mathsf R(\alpha),
\]
which is \eqref{eq:rate-full}.
\end{proof}

\subsection{Solving the variational problem}\label{subsec:solve-variational}

By \cref{thm:main-variational}, it remains to evaluate the supremum
\begin{align*}
    \mathsf R(\alpha)=\sup_{\rho\in(0,1)}\alpha\rho\,\Phi(\rho),
\end{align*}
where $\Phi(\rho)$ was defined in \eqref{eq:Phi-def}. We solve this double optimisation (over $\nu$ and over $\rho$) via Lagrange multipliers in four steps: first the inner problem for fixed~$\rho$, then a closed-form evaluation of the partition function, then the outer problem in~$\rho$, and finally the resulting algebraic system.

\begin{lemma}[Exponential family optimizer]\label{lem:inner-optimizer}
Fix $\rho\in(0,1).$ The supremum defining $\Phi(\rho)$ is attained by a unique probability measure $\nu^*$ of the form
\begin{equation}\label{eq:inner-optimizer-form}
    \nu^*(a,b)\;=\;\frac{w(a,b)\,x^a\,y^b}{Z(x,y)},\qquad a\ge 1,\;1\le b\le a,
\end{equation}
where $x,y>0$ are the unique positive solutions of the moment equations $\E_{\nu^*}[a]=1/(\alpha\rho)$ and $\E_{\nu^*}[b]=1/\rho$, and
\begin{align*}
    Z(x,y)\;:=\;\sum_{a\ge 1}\sum_{b=1}^{a}\binom{a-1}{b-1}^{2}\,x^{a}\,y^{b}
\end{align*}
is the partition function. The optimal value is
\begin{equation}\label{eq:Phi-from-Z}
    \Phi(\rho)\;=\;\log Z(x,y)\;-\;\frac{1}{\alpha\rho}\,\log x\;-\;\frac{1}{\rho}\,\log y.
\end{equation}
\end{lemma}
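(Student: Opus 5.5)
The plan is the standard maximum-entropy / Gibbs variational argument, applied to the reference weight $w$ on the countable alphabet $\mathcal D:=\{(a,b):1\le b\le a\}$. For any $x,y>0$ with $Z(x,y)<\infty$, define the tilted law $\nu_{x,y}(a,b)=w(a,b)x^ay^b/Z(x,y)$. For every finitely supported $\nu$ on $\mathcal D$ with the prescribed means, write the identity
\begin{align*}
H(\nu)+\E_\nu[\log w] = \log Z(x,y) - \E_\nu[a]\log x - \E_\nu[b]\log y - \KL(\nu\,\|\,\nu_{x,y}).
\end{align*}
This follows by expanding $\KL(\nu\|\nu_{x,y})=\sum\nu\log\nu-\sum\nu\log w-\E_\nu[a]\log x-\E_\nu[b]\log y+\log Z$. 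Since the means are fixed at $1/(\alpha\rho)$ and $1/\rho$, the first three terms depend only on $(x,y)$. Hence $\Phi(\rho)\le \log Z(x,y)-\frac{1}{\alpha\rho}\log x-\frac1\rho\log y$ for every admissible $(x,y)$.

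If $(x,y)$ is chosen so that $\nu_{x,y}$ itself satisfies the moment equations, then truncating $\nu_{x,y}$ to finite support and correcting the means with a bounded buffer of atoms (as in the lower bound of \cref{thm:main-variational}) produces finitely supported admissible laws with $\KL\to0$. More precisely, entropy and $\E[\log w]$ converge because the tails of $\nu_{x,y}$ decay geometrically, so $\log w=O(a)$ is integrable. This gives equality in \eqref{eq:Phi-from-Z}. Attainment and uniqueness of the maximizer in the (closure of the) class follow from strict concavity of $H$ together with the $\KL$ identity, since $\KL=0$ forces $\nu=\nu_{x,y}$. Here I interpret ``attained'' in the natural sense of the supremum over all laws with finite entropy and the given means. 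The bound above already shows that nothing exceeds the value at $\nu^*$.

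The remaining work, which I expect to be the main obstacle, is the existence and uniqueness of $(x,y)$ solving the moment equations. I would use the exponential-family structure. On the interior $\Omega$ of the convergence domain, $\psi(s,t):=\log Z(e^s,e^t)$ is strictly convex, since its Hessian is the covariance of $(a,b)$ under $\nu_{x,y}$, which is nondegenerate because $\mathcal D$ is not contained in a line. So $\nabla\psi$ is injective, which settles uniqueness. For existence, I would show that $\nabla\psi(\Omega)$ is the full interior of the convex hull of the possible means, namely $\{(m_a,m_b):m_a>m_b>1\}$. The target point $(1/(\alpha\rho),1/\rho)$ lies in this set because $\alpha,\rho\in(0,1)$. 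The cleanest route is steepness: the gradient blows up at the boundary of $\Omega$. To establish steepness I would use a closed form for $Z$, which the paper computes via Vandermonde in any case. Summing $\sum_a\binom{a-1}{b-1}^2x^a$ gives a rational or algebraic expression whose singular locus is explicit, for instance $\sqrt{(1-x-xy)^2-4x^2y}$-type behaviour. From this one checks that $\E[a]\to\infty$ at the singular boundary.

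Alternatively, existence can be obtained by maximizing the strictly concave dual $(s,t)\mapsto m_as+m_bt-\psi(s,t)$. This function is coercive on $\Omega$ because $(m_a,m_b)$ is interior. Its maximizer is interior if $\psi$ is steep, and the first-order condition is exactly the pair of moment equations. Verifying steepness, or equivalently ruling out a maximizer on the boundary of $\Omega$, is the step requiring the explicit form of $Z$.
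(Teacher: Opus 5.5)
Your proposal is correct and follows essentially the same route as the paper: the exponential-family form of $\nu^*$, the identity $H(\nu)+\E_\nu[\log w]=\log Z(x,y)-\E_\nu[a]\log x-\E_\nu[b]\log y-\KL(\nu\,\|\,\nu_{x,y})$ for global optimality and uniqueness, truncation of $\nu^*$ plus a small mean-correcting mixture to reach finitely supported admissible laws, and exponential-family theory for existence of $(x,y)$, where the closed form of $Z$ supplies the key input (it shows $Z=\infty$ on the boundary of its domain, so the family is regular and hence steep, and the gradient map is onto $\{1<m_b<m_a\}$). The only differences are that you prove this surjectivity directly (via steepness or coercivity of the dual) and make uniqueness of $(x,y)$ explicit through strict convexity of $\log Z(e^s,e^t)$, whereas the paper cites Wainwright--Jordan's theorem for minimal families and leaves injectivity implicit; your remark that ``attained'' must be read over laws that are not finitely supported is a fair clarification of the statement.
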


\begin{proof}
    We first solve the inner variational problem for fixed $\rho$ by Lagrange multipliers over the space of probability measures satisfying the moment constraints, which identifies the optimizer $\nu^*$ and shows that it has the exponential family form \eqref{eq:inner-optimizer-form}. The functional 
    \begin{align*}
        \nu\mapsto H(\nu)+\E_\nu[\log w]
    \end{align*}
    is strictly concave on the convex set of probability measures satisfying the moment constraints since $H$ is strictly concave and $\E_\nu[\log w]$ is linear in $\nu$. Furthermore, for any probability measure $\nu$ satisfying these constraints, it follows from $w(a,b) \leq 4^a$ that
    \begin{align*}
        \E_\nu[\log w] \leq (\log 4)\,\E_\nu[a] = (\log 4)/(\alpha\rho) < \infty
    \end{align*}
    and, letting $(A,B)$ be a random vector with law $\nu$,
    \begin{align*}
        H(\nu) = H(A,B) = H(A) + H(B \mid A) \leq H(A) + \E\left[ \log A \right] < \infty,
    \end{align*}
    so the evaluation of the functional at any such $\nu$ is finite. Introducing Lagrange multipliers $\lambda_0$ for the normalization constraint $\sum_{a,b}\nu(a,b)=1$, $\lambda_1$ for $\E_\nu[a]=1/(\alpha\rho)$, and $\lambda_2$ for $\E_\nu[b]=1/\rho$, the first-order condition
    \begin{align*}
        \frac{\partial}{\partial\nu(a,b)} \Bigl[H(\nu)+\E_\nu[\log w]-\lambda_0\bigl(\textstyle\sum\nu-1\bigr) -\lambda_1\bigl(\E_\nu[a]-1/(\alpha\rho)\bigr)-\lambda_2\bigl(\E_\nu[b]-1/\rho\bigr)\Bigr]=0
    \end{align*}
    gives $-\log\nu(a,b)-1+\log w(a,b)-\lambda_0-\lambda_1 a-\lambda_2 b=0$, hence
    \begin{align*}
        \nu^*(a,b)=\frac{w(a,b)\,e^{-\lambda_1 a}\,e^{-\lambda_2 b}}{Z'},
    \end{align*}
    where $Z'=e^{1+\lambda_0}$ ensures normalization. It can be shown that
    \begin{enumerate}
        \item the Lagrange multiplier equations admit a solution, i.e., corresponding values $\lambda_0^*$, $\lambda_1^*$, and $\lambda_2^*$ for which the normalization and coordinate mean constraints are satisfied;
        \item the corresponding measure $\nu^*$ is the unique global maximizer of the objective functional.
    \end{enumerate}
    We do this in \cref{app:lagrange-multipliers-addendum}. Letting $\left( \lambda_1^*, \lambda_2^* \right)$ denote corresponding such values, setting the values $x:=e^{-\lambda_1^*}$ and $y:=e^{-\lambda_2^*}$ yields the form \eqref{eq:inner-optimizer-form} with $Z'=Z(x,y)$. 
    
    \medskip
    
    To evaluate the optimal value, we substitute this optimizer into the objective and then approximate it by admissible measures. Since 
    \begin{align*}
        \log\nu^*(a,b)=\log w(a,b)+a\log x+b\log y-\log Z,
    \end{align*}
    substituting $\nu^*$ into the functional to get $H(\nu^*)+\E_{\nu^*}[\log w]$ yields that
    \begin{align*}
        H(\nu^*)+\E_{\nu^*}[\log w] =-\E_{\nu^*}[\log\nu^*]+\E_{\nu^*}[\log w] =\log Z-(\log x)\E_{\nu^*}[a]-(\log y)\E_{\nu^*}[b],
    \end{align*}
    which is \eqref{eq:Phi-from-Z}. Though $\nu^*$ itself does not have finite support, we may approximate it via a sequence of finitely supported probability measures satisfying the moment constraints. Specifically, we define
    \begin{align*}
        & \mu_n(\cdot) := \nu^*\left( \cdot \mid E_n \right);
        & E_n := \left\{ (a,b) \in \N_{>0}^2: b \leq a \leq n \right\}.
    \end{align*}
    It directly follows from this construction that
    \begin{align*}
        & \E_{\mu_n}[a] \to \E_{\nu^*}[a] = 1/(\alpha \rho);
        & \E_{\mu_n}[b] \to \E_{\nu^*}[b] = 1/\rho.
    \end{align*}
    Therefore, we may construct a sequence of probability measures $\nu_n$ supported on the three tuples
    \begin{align*}
        (1,1), (2/ \lfloor \alpha \rho \rfloor, 1), (2/ \lfloor \alpha \rho \rfloor, 2/ \lfloor \alpha \rho \rfloor)
    \end{align*}
    such that there exist values $\epsilon_n \to 0$ for which the mixture probability measure
    \begin{align*}
        \nu_n^* := (1-\epsilon_n)\mu_n + \epsilon_n \nu_n
    \end{align*}
    satisfies the mean constraints $\E_{\nu_n}[a] = 1/(\alpha \rho)$ and $\E_{\nu_n}[b] = 1/\rho$. Then, we have that 
    \begin{align*}
        & \left( H(\nu^*)+\E_{\nu^*}[\log w] \right) - \left( H(\nu_n)+\E_{\nu_n}[\log w] \right) \stackrel{\eqref{eq:objective-functional-opt-relation}}{=} \KL\left( \nu_n \doubleline \nu^* \right) \\
        & \qquad \leq (1-\epsilon_n) \KL\left( \mu_n \doubleline \nu^* \right) + \epsilon_n \KL\left( \nu_n \doubleline \nu^* \right) = -(1-\epsilon_n)\log \nu^*(E_n) + \epsilon_n \KL\left( \nu_n \doubleline \nu^* \right) \ll 1.
    \end{align*}
    This yields \eqref{eq:Phi-from-Z}.
\end{proof}

\begin{lemma}[Closed-form partition function]\label{lem:partition-closed-form}
    For all $x,y>0$ with $(1-x-xy)^2>4x^2 y$,
    \begin{equation}\label{eq:partition-closed-form}
        Z(x,y)\;=\;\frac{xy}{\sqrt{(1-x-xy)^{2}-4x^{2}y}}.
    \end{equation}
    Moreover, $Z(x,y) = \infty$ for all $x, y > 0$ that fail to satisfy this condition.
\end{lemma}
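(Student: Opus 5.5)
We organize the double sum by the difference $c:=a-b\ge 0$ and substitute $a=b+c$. Writing $k=b-1\ge 0$, the sum becomes
\begin{align*}
    Z(x,y)=\sum_{k\ge 0}\sum_{c\ge 0}\binom{k+c}{k}^{2}x^{k+c+1}y^{k+1}
    = xy\sum_{k,c\ge 0}\binom{k+c}{k}^{2}(xy)^{k}x^{c}.
\end{align*}
Since all terms are nonnegative, Tonelli lets us reorder freely, and $Z(x,y)$ is either finite or $+\infty$. The problem therefore reduces to the classical bivariate generating function
\begin{align*}
    F(u,v):=\sum_{k,c\ge 0}\binom{k+c}{k}^{2}u^{k}v^{c},
\end{align*}
evaluated at $u=xy$ and $v=x$.

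To compute $F$, we use Vandermonde's identity:
\begin{align*}
    \binom{k+c}{k}^{2}=\binom{k+c}{k}\binom{k+c}{c}=\sum_{j}\binom{k}{j}\binom{c}{j}\binom{k+c}{k}\cdot(\text{correction}).
\end{align*}
More cleanly, we use the standard route via the diagonal. First sum over $c$ for fixed $n=k+c$. This gives $\sum_n\sum_k\binom{n}{k}^2u^kv^{n-k}$. The inner sum is the coefficient extraction $[t^0]\,(1+ut)^n(1+v/t)^n$, i.e. the constant term of $\bigl((1+ut)(1+v/t)\bigr)^n$. Summing the geometric series in $n$ gives
\begin{align*}
    F=[t^0]\frac{1}{1-(1+ut)(1+v/t)}=[t^0]\frac{t}{t-(t+ut^2)(1)-\dots}.
\end{align*}
Concretely, the expression is $\frac{-t}{ut^{2}+(u+v-1)t+v}$, and its constant term is obtained by residues at the small root of the quadratic. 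This yields the known closed form
\begin{align*}
    F(u,v)=\frac{1}{\sqrt{(1-u-v)^{2}-4uv}}.
\end{align*}

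Alternatively, one can prove the closed form more elementarily. Expand $(1-u-v)^2-4uv$ and verify the power series identity by checking that the Legendre-type recurrence satisfied by $s_n(u,v):=\sum_k\binom nk^2u^kv^{n-k}$ matches the one for $[z^n]\bigl((1-z(u+v))^2-4uvz^2\bigr)^{-1/2}$; both are three-term recurrences with the same initial data. Substituting $u=xy$ and $v=x$ gives $(1-x-xy)^2-4x^2y$ and hence \eqref{eq:partition-closed-form} as an identity of formal power series with nonnegative coefficients.

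The analytic step, which I expect to be the main obstacle, is to transfer the formal identity to actual convergence and to prove divergence outside the region. Since all coefficients are nonnegative, Pringsheim's theorem applies. The series $\sum_n s_n(u,v)$ in $n$, with $u,v>0$ fixed, has $s_n\approx(\sqrt u+\sqrt v)^{2n}$ up to polynomial factors, because the largest term in $\sum_k\binom nk^2u^kv^{n-k}$ dominates on the exponential scale. The series therefore converges iff $\sqrt u+\sqrt v<1$, and diverges when $\sqrt u+\sqrt v=1$ because of the $n^{-1/2}$ polynomial correction. Factoring,
\begin{align*}
    (1-u-v)^2-4uv=\bigl(1-(\sqrt u+\sqrt v)^2\bigr)\bigl(1-(\sqrt u-\sqrt v)^2\bigr),
\end{align*}
and the second factor is positive whenever $u,v<1$.

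We then check that, for $u,v>0$, the condition $(1-u-v)^2>4uv$ together with the branch $1-u-v>0$ is equivalent to $\sqrt u+\sqrt v<1$. The other branch has $u+v>1$ and hence $\sqrt u+\sqrt v>1$, so there the series diverges and $Z=\infty$. This shows that the stated condition must be read together with the branch $1-x-xy>0$. I would phrase the proof so that on the convergent region the identity holds by Abel/monotone convergence, and on its complement $Z=\infty$ holds by the growth estimate for $s_n$.
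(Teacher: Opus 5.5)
Your argument is correct in substance, and its algebraic core is the paper's. With $u=xy$ and $v=x$, the constant term $[t^0]\bigl((v+ut)(1+1/t)\bigr)^n$ equals $[t^n]\bigl(x(1+t)(1+yt)\bigr)^n$. That is exactly the diagonal coefficient the paper extracts before summing the geometric series and taking the residue at the small root of the quadratic.

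The kernel you actually wrote, $(1+ut)(1+v/t)$, is a slip. Its constant term is $\sum_k\binom{n}{k}^2(uv)^k$, not $\sum_k\binom{n}{k}^2u^kv^{n-k}$. Your next expression, $-t/(ut^2+(u+v-1)t+v)$, does come from the correct kernel $(v+ut)(1+1/t)$, so nothing downstream breaks. Still, fix that display and drop the garbled Vandermonde line.

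You genuinely depart from the paper in the analytic part, and here your route gains something. The paper proves divergence only when $D\le 0$: it uses monotonicity of $Z$ in $x$ and the blow-up of $x'y/\sqrt{D(x',y)}$ as $x'\uparrow(1+\sqrt y)^{-2}$.

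Your growth estimate can be made precise by comparing with $\bigl(\sum_k\binom{n}{k}u^{k/2}v^{(n-k)/2}\bigr)^2$:
$$(\sqrt u+\sqrt v)^{2n}/(n+1)^2\le s_n\le(\sqrt u+\sqrt v)^{2n}.$$
This shows $Z=\infty$ on the whole complement of $\{x(1+\sqrt y)^2<1\}$. That complement includes the branch $1-x-xy<0$, $x(1-\sqrt y)^2>1$, where $D>0$ yet the series diverges. For example, at $x=10$, $y=1/100$ the hypothesis holds, but the $b=1$ terms $\sum_a x^a y$ alone diverge.

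So you are right that the first assertion of the lemma is false as literally stated. The correct hypothesis is $x(1+\sqrt y)^2<1$. This is what the paper's factorization of $D$ and its description of $\Omega$ in the appendix actually reflect. Likewise, a circle $|t|=r$ with $|f(t)|<|t|$ exists only in that region, so the paper's contour step silently assumes it.

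The one unproved point in your sketch is the boundary $\sqrt u+\sqrt v=1$. The lower bound above only gives $n^{-2}$ decay there, so your asserted $n^{-1/2}$ correction needs a local-CLT or Legendre-asymptotics argument. The cheaper fix is the paper's device: $F(u,v)\ge F(su,sv)=1/\sqrt{D(su,sv)}$, where $D(su,sv)=\bigl(1-s(\sqrt u+\sqrt v)^2\bigr)\bigl(1-s(\sqrt u-\sqrt v)^2\bigr)\to 0^+$ as $s\uparrow 1$.
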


\begin{proof}
We begin by rewriting the partition function using a shift of indices together with Vandermonde's identity. Shifting indices $m:=a-1$, $k:=b-1$ gives
\begin{equation}\label{eq:vandermonde-rewrite}
Z(x,y)=xy\sum_{m=0}^{\infty}x^{m}\sum_{k=0}^{m}\binom{m}{k}^{2}y^{k}.
\end{equation}
For fixed $m$, the inner sum equals $[t^{m}]\,(1+t)^{m}(1+yt)^{m}$. Indeed,
\begin{align*}
    & (1+t)^m=\sum_{i}\binom{m}{i}t^i;
    & (1+yt)^m=\sum_{j}\binom{m}{j}y^j t^j,
\end{align*}
and extracting the coefficient of $t^m$ from their product gives
\begin{align*}
    \sum_{k=0}^m \binom{m}{k}\binom{m}{m-k}y^k = \sum_{k}\binom{m}{k}^2 y^k.
\end{align*}
Hence, we have that
\begin{equation}\label{eq:diagonal-sum}
Z(x,y)=xy\sum_{m=0}^{\infty}[t^{m}]\,\bigl(x(1+t)(1+yt)\bigr)^{m}.
\end{equation}

\medskip We now evaluate the resulting expression via residue computations. We write $f(t):=x(1+t)(1+yt)$ and consider the ``diagonal sum''
\begin{align*}
    \Sigma:=\sum_{m\ge 0}[t^m]\,f(t)^m.
\end{align*}
Since we have that
\begin{align*}
    [t^m]f(t)^m=\frac{1}{2\pi i}\oint f(t)^m\,t^{-m-1}\,dt,
\end{align*}
summing the geometric series inside the integral yields
\begin{align*}
    \Sigma\;=\;\frac{1}{2\pi i}\oint\frac{dt}{t-f(t)},
\end{align*}
valid for a contour encircling the origin inside the region of convergence. The equation $t=f(t)$, i.e., $t=x(1+t)(1+yt)$, rearranges to
\begin{equation}\label{eq:diagonal-quadratic}
    Q(t):=xy\,t^{2}+(x+xy-1)\,t+x=0,
\end{equation}
a quadratic in $t$ with discriminant
\begin{align} \label{eq:closed-form-partition-fn-discriminant}
    D = D(x,y) := (1-x-xy)^{2}-4x^{2}y = \left( 1 - x(1+\sqrt{y})^2 \right)\left( 1 - x(1-\sqrt{y})^2 \right).
\end{align}
The two roots are
\begin{align*}
    t_{\pm}=\frac{(1-x-xy)\pm\sqrt{D}}{2xy},
\end{align*}
and $t_{-}$ is the small root enclosed by the contour. Since $f'(t)=x(1+y)+2xyt$, we have
\begin{align*}
    1-f'(t_{-})=1-x(1+y)-2xy\,t_{-} =1-x-xy-\bigl[(1-x-xy)-\sqrt{D}\bigr]=\sqrt{D}.
\end{align*}
In particular, $t - f(t)$ is increasing at $t_- \neq 0$, so there exist values of $t$ for which $|f(t)| < t$, and such a contour exists (e.g., via a circle with the appropriate radius). The residue at the simple pole $t_{-}$ of $1/(t-f(t))$ is thus $1/\sqrt{D}$, giving $\Sigma=1/\sqrt{D}$, so $Z=xy/\sqrt{D}$.

\medskip Finally, we show that once the discriminant condition fails, the series defining $Z(x,y)$ diverges. It is clear that $Z(x,y)$ is increasing in $x$. We fix $x,y > 0$ such that $D(x,y) \leq 0$. Taking $0 < x' < x$ small enough so that $D(x',y) > 0$ (which can be done, observed via the latter expression of \eqref{eq:closed-form-partition-fn-discriminant}), it holds that 
\begin{align*}
    \frac{x'y}{\sqrt{D(x',y)}} \leq Z(x,y).
\end{align*}
Choosing $x'$ arbitrarily close to $x_y := (1+\sqrt{y})^{-2} > 0$, for which $D(x_y, y) = 0$, implies the result.
\end{proof}

\begin{proposition}[Unique interior maximizer and normalization]\label{prop:Z-equals-one}
Let
\[
g(\rho):=\alpha\rho\,\Phi(\rho), \qquad \rho\in(0,1).
\]
Then \(g\) attains its maximum at a unique point \(\rho^*\in(0,1)\). If
\((x^*,y^*)\) denotes the pair of parameters from \cref{lem:inner-optimizer}
corresponding to \(\rho=\rho^*\), then
\begin{equation}\label{eq:Z-equals-one}
    Z(x^*,y^*)=1,
\end{equation}
and consequently
\begin{equation}\label{eq:R-explicit}
    \mathsf R(\alpha)\;=\;-\log x^*\;-\;\alpha\log y^*.
\end{equation}
\end{proposition}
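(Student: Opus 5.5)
The plan is to differentiate $g$ by an envelope argument, show that $g'(\rho)=\alpha\log Z(x(\rho),y(\rho))$, and then show that $g$ is strictly concave with $g'$ changing sign inside $(0,1)$.

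\textbf{Reformulation and envelope step.} By \cref{lem:inner-optimizer} and \eqref{eq:Phi-from-Z}, for each $\rho\in(0,1)$,
\[
g(\rho)=\alpha\rho\log Z(x,y)-\log x-\alpha\log y,
\qquad x=x(\rho),\ y=y(\rho).
\]
I would regard $\Phi$ as a function $\Psi(c_1,c_2)$ of the moment constraints $c_1=\mathbb E[a]$ and $c_2=\mathbb E[b]$, so that $\Phi(\rho)=\Psi(1/(\alpha\rho),1/\rho)$. By \eqref{eq:Phi-from-Z},
\[
\Psi(c_1,c_2)=\inf_{x,y>0}\bigl\{\log Z(x,y)-c_1\log x-c_2\log y\bigr\}.
\]
This is the Legendre dual of the cumulant function $\log Z(e^{s},e^{t})$; at the infimum the moment equations are exactly the first-order conditions. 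Convexity of the cumulant function then gives $\partial_{c_1}\Psi=-\log x$ and $\partial_{c_2}\Psi=-\log y$. For this I need $(x(\rho),y(\rho))$ to depend smoothly on $\rho$, which I would get from the implicit function theorem. Its input is that the Hessian of $\log Z(e^s,e^t)$ is the covariance matrix of $(a,b)$ under $\nu^*$, and this matrix is nondegenerate because $\nu^*$ is not supported on a line. The chain rule then gives
\[
\Phi'(\rho)=\frac{\log x}{\alpha\rho^2}+\frac{\log y}{\rho^2},
\]
and therefore
\[
g'(\rho)=\alpha\Phi+\alpha\rho\Phi'=\alpha\log Z(x(\rho),y(\rho)).
\]

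\textbf{Strict concavity.} Write $g(\rho)=\alpha\rho\,\Psi\bigl((1/\alpha,1)/\rho\bigr)$. With $s=1/\rho$ this is $\alpha\,\Psi(s/\alpha,s)/s$, i.e. $\alpha$ times the perspective function $s\Psi(c/s)$ evaluated at $s=\rho$ and $c=(1/\alpha,1)$. The function $\Psi$ is concave, being a supremum of a concave functional under linear constraints. Hence $t\,\Psi(c/t)$ is concave in $t$, and so $g$ is concave in $\rho$. For strictness I would argue directly that $\rho\mapsto\log Z(x(\rho),y(\rho))$ is strictly decreasing. Differentiating the moment equations and using the positive-definite covariance matrix, $\frac{d}{d\rho}\log Z$ comes out as a negative-definite quadratic form evaluated at the nonzero vector $(1/\alpha,1)$. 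This shows $g'$ is strictly decreasing, which gives uniqueness of any interior maximizer.

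\textbf{Boundary behaviour (main obstacle).} It remains to show that $g'$ is positive near $0$ and negative near $1$, i.e. $Z>1$ as $\rho\downarrow0$ and $Z<1$ as $\rho\uparrow1$.
\begin{itemize}
\item As $\rho\uparrow1$, $\mathbb E[b]\to1$ forces $y\to0$. Then \eqref{eq:partition-closed-form} gives $Z\approx xy/(1-x)\to0$, so $g'<0$.
\item As $\rho\downarrow0$, both means blow up, forcing $(x,y)$ toward the singular curve $D(x,y)=0$, i.e. $x\to(1+\sqrt y)^{-2}$, where $Z=xy/\sqrt D\to\infty$. So $g'>0$.
\end{itemize}
The delicate point is justifying this approach to the boundary curve rigorously from the moment equations; I would do it by computing $\mathbb E[a]$ and $\mathbb E[b]$ explicitly via $x\partial_x\log Z$ and $y\partial_y\log Z$ using \eqref{eq:partition-closed-form}.

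\textbf{Conclusion.} Continuity of $g'$ and strict monotonicity then give a unique zero $\rho^*\in(0,1)$ of $g'$. This $\rho^*$ is the unique maximizer of $g$, and at it $\log Z(x^*,y^*)=0$, which is \eqref{eq:Z-equals-one}. Substituting $Z(x^*,y^*)=1$ into the expression for $g$ yields \eqref{eq:R-explicit}.
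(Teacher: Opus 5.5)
Your envelope computation $g'(\rho)=\alpha\log Z(x(\rho),y(\rho))$ is the same as the paper's. Your Legendre-duality and implicit-function-theorem argument also supplies the differentiability of $\rho\mapsto(x(\rho),y(\rho))$, which the paper's bare appeal to the envelope theorem leaves implicit.

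For uniqueness you take a genuinely different route. The paper solves the moment equations in closed form, $x(\rho)=\frac{(1-\alpha)(2-2\alpha+\alpha\rho)}{2-\alpha\rho}$ and $y(\rho)=\frac{\alpha^2(2-\rho)(1-\rho)}{(1-\alpha)(2-2\alpha+\alpha\rho)}$. It substitutes these to get $Z(x(\rho),y(\rho))$ explicitly and reads off the endpoint limits $\infty$ and $0$. It then gets a unique crossing because $Z=1$ reduces to the quadratic $2\alpha^2\rho^2+(4\alpha-5\alpha^2-4)\rho+2\alpha^2=0$, whose roots have product $1$. You instead show $\frac{d}{d\rho}\log Z(x(\rho),y(\rho))=-\rho^{-3}\mathbf{c}^{\top}\Sigma_\rho^{-1}\mathbf{c}<0$, where $\mathbf{c}=(1/\alpha,1)$ and $\Sigma_\rho$ is the covariance of $(a,b)$ under $\nu^*$. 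This is correct and structural: it holds for any weight whose exponential family has nondegenerate covariance, so uniqueness and strict concavity of $g$ need no algebra specific to $w$. The paper's route buys explicit $x(\rho)$, $y(\rho)$, an explicit equation for $\rho^*$, and trivially rigorous endpoint behaviour.

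The endpoint step is where your sketch needs care. For $\rho\downarrow0$, $\mathbb{E}[a]=(1-x-xy)/D\to\infty$ does force $D\to0$. But $D\to0$ alone does not give $Z=xy/\sqrt{D}\to\infty$. $D$ can also tend to $0$ along sequences approaching the corner $(x,y)=(1,0)$ of the domain, where $xy\to0$ as well and $Z$ has no definite limit. To exclude this you must use the fixed ratio $\mathbb{E}[b]/\mathbb{E}[a]=\frac{(1-x)^2-xy(1+x)}{1-x-xy}=\alpha$:
\begin{itemize}
\item near that corner the ratio tends to $0$;
\item at the point $x=(1+s)^{-2}$, $y=s^2$ of the curve $D=0$ it extends continuously to $s/(1+s)$.
\end{itemize}
This locates the limit at $x\to(1-\alpha)^2$, $y\to\alpha^2/(1-\alpha)^2$, with $xy\to\alpha^2>0$. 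Similarly, at $\rho\uparrow1$ it is $\mathbb{E}[a]\to1/\alpha$, not $\mathbb{E}[b]\to1$ alone, that forces $y\to0$ and keeps $x\to1-\alpha$ away from $1$. Since you will compute the moments via $x\partial_x\log Z$ and $y\partial_y\log Z$ anyway, solving the two equations outright, as the paper does, is the quickest way to close this step.
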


\begin{proof}
For each \(\rho\in(0,1)\), let \(x=x(\rho)\) and \(y=y(\rho)\) be the unique
positive parameters given by \cref{lem:inner-optimizer}. By \eqref{eq:Phi-from-Z},
\begin{equation}\label{eq:g-rho-from-Z}
    g(\rho)=\alpha\rho\log Z(x,y)-\log x-\alpha\log y.
\end{equation}

We first compute \(g'(\rho)\). The Lagrange multipliers for the inner problem are
\[
\alpha_1=-\log x
\qquad\text{and}\qquad
\alpha_2=-\log y,
\]
corresponding respectively to the constraints
\[
\E_{\nu^*}[a]=\frac{1}{\alpha\rho},
\qquad
\E_{\nu^*}[b]=\frac{1}{\rho}.
\]
By the envelope theorem,
\[
\frac{d\Phi}{d\rho}
=
\alpha_1\frac{d(1/(\alpha\rho))}{d\rho}
+
\alpha_2\frac{d(1/\rho)}{d\rho}
=
\frac{\frac1\alpha\log x+\log y}{\rho^2}.
\]
Therefore,
\begin{align}
g'(\rho)
&=
\alpha\Phi(\rho)+\alpha\rho\,\Phi'(\rho) \notag\\
&=
\alpha\Bigl[\log Z-\frac{\frac1\alpha\log x+\log y}{\rho}\Bigr]
+
\alpha\frac{\frac1\alpha\log x+\log y}{\rho}
=
\alpha\log Z(x(\rho),y(\rho)).
\label{eq:g-prime-alpha-logZ}
\end{align}

Next we compute \(x(\rho)\), \(y(\rho)\), and \(Z(x(\rho),y(\rho))\) explicitly.
By \cref{lem:partition-closed-form},
\[
Z(x,y)=\frac{xy}{\sqrt{D(x,y)}},
\qquad
D(x,y):=(1-x-xy)^2-4x^2y.
\]
Hence
\[
\log Z(x,y)=\log x+\log y-\frac12\log D(x,y).
\]
On the other hand, by the exponential family form of $\nu^*,$ we have
\[
\E_{\nu^*}[a]=x\partial_x\log Z(x,y),
\qquad
\E_{\nu^*}[b]=y\partial_y\log Z(x,y).
\]
A direct computation gives
\[
x\partial_x\log Z(x,y)=\frac{1-x-xy}{D(x,y)},
\qquad
y\partial_y\log Z(x,y)=\frac{1-2x-xy+x^2(1-y)}{D(x,y)}.
\]
Therefore the moment constraints are equivalent to
\begin{equation}\label{eq:moment-system-a}
    \frac{1-x-xy}{D(x,y)}=\frac{1}{\alpha\rho},
\end{equation}
and
\begin{equation}\label{eq:moment-system-b}
    \frac{1-2x-xy+x^2(1-y)}{D(x,y)}=\frac{1}{\rho}.
\end{equation}
Solving \eqref{eq:moment-system-a}--\eqref{eq:moment-system-b} yields
\begin{equation}\label{eq:x-rho-explicit}
    x(\rho)=\frac{(1-\alpha)(2-2\alpha+\alpha\rho)}{2-\alpha\rho},
\end{equation}
and
\begin{equation}\label{eq:y-rho-explicit}
    y(\rho)=\frac{\alpha^2(2-\rho)(1-\rho)}{(1-\alpha)(2-2\alpha+\alpha\rho)}.
\end{equation}
Substituting \eqref{eq:x-rho-explicit}--\eqref{eq:y-rho-explicit} into the
closed form for \(Z\) gives
\begin{equation}\label{eq:Z-rho-explicit}
    Z(x(\rho),y(\rho))
    =
    \frac{\alpha(1-\rho)\sqrt{2-\rho}}
    {\sqrt{\rho}\,\sqrt{(2-\alpha\rho)(2-2\alpha+\alpha\rho)}}.
\end{equation}

In particular,
\[
Z(x(\rho),y(\rho))\longrightarrow\infty
\qquad\text{as }\rho\downarrow 0,
\]
and
\[
Z(x(\rho),y(\rho))\longrightarrow 0
\qquad\text{as }\rho\uparrow 1.
\]
Moreover, \(Z(x(\rho),y(\rho))=1\) is equivalent, after squaring
\eqref{eq:Z-rho-explicit}, to
\begin{equation}\label{eq:rho-quadratic}
    2\alpha^2\rho^2+(4\alpha-5\alpha^2-4)\rho+2\alpha^2=0.
\end{equation}
The polynomial on the left-hand side of \eqref{eq:rho-quadratic} has value
\(2\alpha^2>0\) at \(\rho=0\) and value
\[
2\alpha^2+(4\alpha-5\alpha^2-4)+2\alpha^2
=
4\alpha-\alpha^2-4
=
-(2-\alpha)^2<0
\]
at \(\rho=1\). Hence it has at least one root in \((0,1)\). Since its constant
term equals its leading coefficient, the product of its roots is \(1\), so it
can have at most one root in \((0,1)\). Therefore there is a unique
\(\rho^*\in(0,1)\) such that
\[
Z(x(\rho^*),y(\rho^*))=1.
\]

Because \(Z(x(\rho),y(\rho))\) is continuous, the uniqueness of \(\rho^*\) and
the above boundary limits imply
\[
Z(x(\rho),y(\rho))>1 \quad \text{for } 0<\rho<\rho^*,
\qquad
Z(x(\rho),y(\rho))<1 \quad \text{for } \rho^*<\rho<1.
\]
By \eqref{eq:g-prime-alpha-logZ}, it follows that
\[
g'(\rho)>0 \quad \text{for } 0<\rho<\rho^*,
\qquad
g'(\rho)<0 \quad \text{for } \rho^*<\rho<1.
\]
Thus \(g\) is strictly increasing on \((0,\rho^*)\) and strictly decreasing on
\((\rho^*,1)\), so \(g\) attains its maximum uniquely at \(\rho^*\in(0,1)\).

Finally, setting \((x^*,y^*):=(x(\rho^*),y(\rho^*))\), we have proved
\eqref{eq:Z-equals-one}. Plugging this into \eqref{eq:g-rho-from-Z} gives
\[
\mathsf R(\alpha)=g(\rho^*)=-\log x^*-\alpha\log y^*,
\]
which is \eqref{eq:R-explicit}.
\end{proof}

\begin{proposition}[Explicit solution of the algebraic system]\label{prop:explicit-solution}
Writing $c = 1/\alpha$ for algebraic convenience, for $c>1$ (equivalently $\alpha < 1$), the system $Z(x,y)=1$ together with the stationarity condition from
\cref{prop:Z-equals-one} has a unique solution $(x,y)\in(0,1)\times(0,\infty)$
given by $x=x_\alpha$ and $y=y_\alpha$ as defined in \cref{thm:main-annealed-intro}.
\end{proposition}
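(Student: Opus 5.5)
The plan is to avoid solving the two-variable system $Z(x,y)=1$ with the moment equations from scratch. Instead I would use the explicit parametrization already obtained in the proof of \cref{prop:Z-equals-one}. There, the stationarity (moment) conditions \eqref{eq:moment-system-a}--\eqref{eq:moment-system-b} were solved as \eqref{eq:x-rho-explicit}--\eqref{eq:y-rho-explicit}, and the condition $Z=1$ was reduced to the quadratic \eqref{eq:rho-quadratic} in $\rho$, which has a unique root $\rho^*\in(0,1)$. So any solution $(x,y)$ of the system with $x\in(0,1)$, $y>0$ corresponds to $\rho=\rho^*$. Uniqueness is then inherited from the uniqueness of $\rho^*$ and the injectivity of $\rho\mapsto x(\rho)$. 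One caveat: the map $(x,y)\mapsto\rho$ via \eqref{eq:moment-system-a} has to be checked to land in $(0,1)$ for solutions in the stated domain, so that every such solution really is one of the parametrized ones.

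\textbf{Step 1: eliminate $\rho$ in favour of $x$.} Set $u:=\alpha\rho\in(0,\alpha)$. Inverting \eqref{eq:x-rho-explicit} gives the linear-fractional relation
\begin{align*}
u=\frac{2x-2(1-\alpha)^2}{x+1-\alpha}.
\end{align*}
This map is monotone, which gives the injectivity used above.

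\textbf{Step 2: derive the equation for $x^*$.} Substitute $\rho=u/\alpha$ into \eqref{eq:rho-quadratic} and clear denominators. This yields a polynomial equation in $x$ with coefficients in $\alpha$. I expect it to factor with the quadratic
\begin{align*}
x^2+3\alpha x+\alpha-1=0
\end{align*}
as a factor; this quadratic is exactly the equation satisfied by $x_\alpha=(\Delta(\alpha)-3\alpha)/2$, since $(2x+3\alpha)^2=\Delta(\alpha)^2$. The remaining factor should correspond either to the reciprocal root $1/\rho^*$ of \eqref{eq:rho-quadratic} or to a spurious root introduced by squaring \eqref{eq:Z-rho-explicit}.

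\textbf{Step 3: select the correct root.} Show that the spurious roots give $\rho\notin(0,1)$ or $x\notin(0,1)$. For the quadratic above, the product of its roots is $\alpha-1<0$, so it has exactly one positive root, namely $x_\alpha$. One checks $x_\alpha<1$ because $\Delta(\alpha)<2+3\alpha$, which is equivalent to $-4\alpha<12\alpha$.

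\textbf{Step 4: recover $y^*$.} Plug $x_\alpha$ into \eqref{eq:y-rho-explicit}, after rewriting $2-\rho$, $1-\rho$ and $2-2\alpha+\alpha\rho$ in terms of $x$ via Step 1. Then use $x^2=1-\alpha-3\alpha x$ repeatedly to reduce to an expression linear in $\Delta(\alpha)$. Finally, match it to $y_\alpha=\frac{(3\alpha+2-\Delta)^2}{2(\Delta-3\alpha)(2+\Delta-3\alpha)}$. This matching is cleanest after cross-multiplying and using $\Delta^2=9\alpha^2-4\alpha+4$. Positivity $y_\alpha>0$ follows from $\Delta>3\alpha$, which holds because $\Delta^2-9\alpha^2=4(1-\alpha)>0$. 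As a last consistency check, I would verify that $D(x_\alpha,y_\alpha)>0$, so that \cref{lem:partition-closed-form} applies at this point.

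\textbf{Main obstacle.} I expect Steps 2 and 4 to be the hardest: the polynomial factoring and the radical simplification. The algebra is error-prone, and the choice of the correct root relies on the sign and range information from \cref{prop:Z-equals-one}. If the direct factorization is messy, the fallback is to go the other way. Define $x_\alpha$, compute $\rho:=u(x_\alpha)/\alpha$ from Step 1, and verify directly that this $\rho$ satisfies \eqref{eq:rho-quadratic} and lies in $(0,1)$. Uniqueness of $\rho^*$ then finishes the argument without any factoring.
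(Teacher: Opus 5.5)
Your proposal is correct, but it reaches the quadratic by a different elimination than the paper does.

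The paper does not go back to the $\rho$-parametrization at all. It squares $Z(x,y)=1$ to get \eqref{eq:constraint-I}, i.e.\ $y=(1-x)^2/(2x(1+x))$. It then writes stationarity as a Lagrange condition for $-\log x-\alpha\log y$ on that curve, giving \eqref{eq:constraint-II}, i.e.\ $y=(1-x)/((c-1)+(c-2)x)$. Equating the two expressions yields $cx^2+3x-(c-1)=0$, which is your quadratic multiplied by $c$.

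You instead eliminate $\rho$ between \eqref{eq:x-rho-explicit} and \eqref{eq:rho-quadratic}, and this works out cleanly. With $u=\alpha\rho=2(x-(1-\alpha)^2)/(x+1-\alpha)$, clearing denominators gives exactly
\[
-(1-\alpha)(2-\alpha)^2\bigl(x^2+3\alpha x+\alpha-1\bigr)=0 .
\]
So there is no extra factor to discard: the reciprocal root $1/\rho^*$ maps to the negative root $-(3\alpha+\Delta(\alpha))/2$ of the same quadratic.

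What your route buys is that the ``stationarity condition'' is literally the moment system at $\rho^*$. Existence and uniqueness are then inherited from \cref{prop:Z-equals-one} and the injectivity of the linear-fractional map. The paper's \eqref{eq:constraint-II} is justified more informally. It is valid because on $\{Z=1\}$ one has $\nabla Q=-2D\,\nabla\log Z$, so \eqref{eq:constraint-II} is equivalent to the quotient of \eqref{eq:moment-system-a} by \eqref{eq:moment-system-b}. As your caveat notes, that quotient alone does not check that the implied $\rho$ lies in $(0,1)$.

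What the paper's route buys is brevity, especially for $y$. Since $Z=1$ holds at your solution anyway, in Step 4 you can skip the radical simplification through \eqref{eq:y-rho-explicit}. Read off $y=(1-x_\alpha)^2/(2x_\alpha(1+x_\alpha))$ instead; substituting $1\mp x_\alpha=(2\pm3\alpha\mp\Delta(\alpha))/2$ gives $y_\alpha$ immediately.
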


\begin{proof}
    By \cref{lem:partition-closed-form},
    $Z(x,y)=1$ is equivalent to 
    \begin{align*}
        xy=\sqrt{(1-x-xy)^2-4x^2y}.
    \end{align*}
    Squaring both sides yields
    \[
    x^2y^2=(1-x-xy)^2-4x^2y.
    \]
    Expanding the right-hand side and cancelling $x^2y^2$ yields
    \begin{equation}\label{eq:constraint-I}
    x^{2}(1-2y)-2x(1+y)+1=0,\tag{I}
    \end{equation}
    which can be solved for $y$ as
    $y=(1-x)^{2}/\bigl[2x(1+x)\bigr]$.
    
    \medskip On the surface $Z=1$, the optimality of $\mathsf R(\alpha)=-\log x-\alpha\log y$ requires
    stationarity with respect to $(x,y)$ constrained to \eqref{eq:constraint-I}.
    Writing
    \begin{align*}
        Q(x,y):=x^2(1-2y)-2x(1+y)+1,
    \end{align*}
    the Lagrange condition
    $\nabla[-\log x-\alpha\log y]=\lambda\,\nabla Q$ gives (after dividing the two components)
    \begin{equation}\label{eq:constraint-II}
    \frac{(1+y)-x(1-2y)}{y(x+1)}=c.\tag{II}
    \end{equation}
    This can be solved for $y$ as $y=(1-x)/\bigl[(c-1)+(c-2)x\bigr]$.
    
    \medskip Equating the two expressions for $y$ from \eqref{eq:constraint-I} and \eqref{eq:constraint-II}
    (and dividing by $1-x\neq 0$ for $c>1$) gives
    \[
    \frac{1-x}{2x(1+x)}=\frac{1}{(c-1)+(c-2)x}.
    \]
    Cross-multiplying:
    $(1-x)\bigl[(c-1)+(c-2)x\bigr]=2x(1+x)$,
    which expands and simplifies to the quadratic
    \begin{equation}\label{eq:quadratic-x}
    cx^{2}+3x-(c-1)=0.
    \end{equation}
    By the quadratic formula, the unique positive root is
    \[
    x=\frac{-3+\sqrt{9+4c(c-1)}}{2c}=\frac{-3+\sqrt{4c^{2}-4c+9}}{2c}.
    \]
    Since $\sqrt{4c^2-4c+9}=c\sqrt{9c^{-2}-4c^{-1}+4}=c\,\Delta(c)$, we obtain
    $x=\bigl(\Delta(c)-3c^{-1}\bigr)/2=x_c$.
    Substituting back into $y=(1-x)^2/[2x(1+x)]$ and using
    $\Delta-3/c=2x_c$ (hence $3/c+2-\Delta=2(1-x_c)$ and $2+\Delta-3/c=2(1+x_c)$), we get
    \[
    y_c=\frac{4(1-x_c)^2}{2\cdot 2x_c\cdot 2(1+x_c)}
    =\frac{(3c^{-1}+2-\Delta(c))^2}{2(\Delta(c)-3c^{-1})(2+\Delta(c)-3c^{-1})},
    \]
    matching the definition in \cref{thm:main-annealed-intro}.
    Since $\sqrt{4c^2-4c+9}>3$ for $c>1$, we have $x_c>0$, and since $0<x_c<1$ we have $y_c>0$.
\end{proof}

\begin{proof}[Proof of \cref{thm:main-annealed-intro}]
    By \cref{thm:main-variational},
    $\lim_{M\to\infty}\frac{1}{N}\log\overline{Z}=\mathsf R(\alpha)$.
    By \cref{prop:Z-equals-one,prop:explicit-solution},
    \begin{align*}
        \mathsf R(\alpha)=-\log x_\alpha-\alpha\log y_\alpha,
    \end{align*}
    which gives \eqref{eq:Z-bar-lim}. By the discussion at the beginning of \cref{sec:annealed_free_planted}, this proves \cref{thm:main-annealed-intro}.
\end{proof}

\section{Open Problems} \label{sec:open_problems}

We conclude the paper with several suggestions for future research.

\subsection{\texorpdfstring{A conjectural formula for $\fpl(\alpha)$}{A conjectural formula for the planted free energy}}

We recall that our main motivation in this paper is the identity \eqref{eq:del-chann-cap-connection}, which reduces the problem of computing the uniform capacity of the deletion channel to understanding the planted quenched free energy $f_{\mathrm{pl}}(\alpha)$ with $\alpha = 1-p$. While \cref{thm:main-annealed-intro} gives an exact formula for the corresponding annealed free energy, the problem of computing the quenched free energy is likely to be much more challenging. In particular, as suggested by \cref{fig:capacity-simulation}, we expect the following analogue of \cref{thm:weak_law} to hold. In words, \cref{conj:planted-jensen-gap} states that the Jensen gap corresponding to the planted Random Subsequence Model is also always nontrivial, just as it was for the null model.
\begin{conjecture} \label{conj:planted-jensen-gap}
    It holds for any $\alpha \in (0, 1)$ that
    \begin{align*}
        \fpl(\alpha) < \fplann(\alpha).
    \end{align*}
\end{conjecture}
Given the suspected difficulty of exactly determining the value $\fpl(\alpha)$, a first step is to at least aim for a convincing conjectural description of this quantity. We illustrate why even non-rigorously deriving such a prediction for this free energy is challenging, by considering two very natural approaches to this problem and discussing the key obstructions that each approach encounters.

\medskip

\noindent {\bf The replica method.} In light of \cref{thm:main-annealed-intro}, perhaps the most natural candidate method for deriving a prediction is with a replica calculation. Let $Z = Z_{X',Y}$ be under the null law on independent uniform strings $X' \in \{0,1\}^N$ and $Y \in \{0,1\}^M$. By Nishimori's identity, if $(X,Y)$ are distributed according to the planted law, we have
\begin{align*}
    \E[\log Z_{X,Y}] = \frac{\E[Z\log Z]}{\E[Z]} = \left.\frac{\partial}{\partial q} \log \E[Z^q]\right|_{q=1}.
\end{align*}
Thus, a replica computation for $\fpl(\alpha)$ would begin by studying
\begin{align*}
    \Phi_r(\alpha) := \lim_{N\to\infty}\frac1N\log \E[Z^r] \qquad\text{for integers } r \ge 1,
\end{align*}
and then seeking (non-rigorous) continuation in the replica number. The difficulty is that the integer moments already appear to be highly nontrivial. Indeed, for $r \in \N$,
\begin{align} \label{eq:replica-moment-calc}
    \E[Z^r] = \sum_{\sigma^1,\dots,\sigma^r\in\Sigma_{N,M}} \Prob\left( X_{\sigma^1} = \cdots = X_{\sigma^r}=Y \right).
\end{align}
For a fixed $r$-tuple $\left(\sigma^1,\dots,\sigma^r\right)$, this probability depends on the full structure of the bipartite union graph $G = G\left(\sigma^1,\dots,\sigma^r\right)$ on vertex set $[M] \sqcup [N]$ obtained by connecting each $j \in [M]$ to those $\sigma^a(j) \in [N]$ used by the replicas for $a \in [r]$. More precisely, we have that
\begin{align*}
    \Prob\left( X_{\sigma^1} = \cdots = X_{\sigma^r}=Y \right) = 2^{\kappa(G)-(M+N)},
\end{align*}
where $\kappa(G)$ is the number of connected components of $G$. Thus the contribution of a replica configuration is not determined merely by a pairwise overlap matrix on configurations in $\Sigma$ for values $r \geq 3$, as was the case in \cref{sec:annealed_free_planted}, which is equivalent to the $r=2$ case of the above computation. Starting at $r=3$, tuples with the same pairwise overlap statistics can have different union graph topology and hence different weights in \eqref{eq:replica-moment-calc}. In particular, there does not seem to be a candidate finite-dimensional order parameter in terms of which the replica computation can be closed.

\medskip

\noindent {\bf The cavity method.} Another appealing route is via the cavity method. Starting from the recursion of \eqref{eq:Z-recurrence} applied to the entire strings $(X,Y) \in \{0,1\}^N \times \{0,1\}^M$, i.e., 
\begin{align*}
    Z_{N,M} = \1\{X_N=Y_M\} Z_{N-1,M-1}+Z_{N-1,M},
\end{align*}
we can eventually obtain
\begin{align*}
    \E\left[\log Z_{N,M} \right] - \E \left[ \log Z_{N-1,M} \right] = -\E\left[ \log \langle \sigma_1 \neq 1 \rangle_{N,M} \right],
\end{align*}
where $\langle \cdot \rangle_{N,M}$ denotes the Gibbs average, i.e., the expectation with respect to the uniform law over the (random) set $S_{X,Y}$. Assuming that the cavity field $\langle \sigma_1 \neq 1 \rangle_{N,M}$ converges in law to a random variable $P(\alpha)$ as $N,M\to\infty$ with $M/N = \alpha$, then that would imply
\begin{align*}
    f_{\mathrm{pl}}(\alpha) = -\int_0^{1-\alpha} \E \left[ \log P\!\left(\frac{\alpha}{\alpha+x}\right) \right] \,dx.
\end{align*}
Thus, the challenge in this approach is to derive a plausible closed-form description for the limit law of the cavity field $P(\alpha)$. In the context of mean-field spin glasses, this is often obtained as a consequence of a self-consistency relation. We have not been able to find such a relation.



\subsection{Mean-field versus rank-one free energies}

The natural mean-field version of the null Random Subsequence Model is where one replaces the ``rank one'' matrix $B_{ij} = \1\{X_i = Y_j\}$ by a matrix with i.i.d. entries drawn from a common distribution $\calD$ (see \cref{subsubsec:mean-field}). The choice of $\calD$ closest to the true Random Subsequence Model is $\calD = \Unif\{0,1\}$, which has been studied under the name ``Bernoulli Matching Model'' in relation to the longest common subsequence problem \cite{boutetdemonvel1999extensive, majumdar2005exact}. The case where $\calD = \text{Gamma}(a,b)$ corresponds to the Strict-Weak lattice polymer \cite{corwin2015strict}. Let $f^{\mathrm{BMM}}(\alpha)$ and $f^{\mathrm{SW}(a,b)}(\alpha)$ denote their (quenched) free energies, respectively.
To what extent do these quantities relate to or control the Random Subsequence Model's free energy? We pose the following conjecture.
\begin{conjecture} \label{conj:MF-rank-one-comp}
    For all $\alpha \in [0,1]$, it holds that
    \begin{align*}
        f_{\mathrm{null}}(\alpha)\leq f^{\mathrm{BMM}}(\alpha) \leq f^{\mathrm{SW}(1,1/2)}(\alpha).
    \end{align*}
\end{conjecture}

We recall that the experience in the directed polymers literature suggests that only problems with special algebraic structure admit exact analytic solutions, which represents an important barrier to obtaining exact formulas for the free energy of the Random Subsequence Model. Moreover, while the solvable Strict-Weak Polymer Model is an analogue of the null Random Subsequence Model, as far as the authors are aware, no solvable analogue (in this sense) of the \emph{planted} Random Subsequence Model is known. Indeed, the planted structure already seems to break the requisite algebraic structure. The next open problem suggests such an analogue. We see this model as breaking the algebraic structure in the minimal way, and hence view it as a promising stepping stone towards the Random Subsequence Model.
\begin{problem}
    Let $B \in \R_{+}^{N\times M}$ have i.i.d. $\mathrm{Gamma}(a,b)$ entries. Independently, sample $\sigma^* \sim \Sigma_{NM},$ and for every $j=1,\dots,M,$ over-write $B_{\sigma(j),j} = 1.$ Let $Z_{NM}$ be the corresponding free energy obtained from \eqref{eq:Z-B-recurrence}, and define
    \begin{align*}
        f^{\mathrm{SW}(a,b)}_{\mathrm{pl}}(\alpha) = \lim_{\substack{N,M\to\infty \\ M/N = \alpha}} \frac{1}{N}\E\left[ \log Z_{NM} \right].
    \end{align*}
    Find an exact analytic formula for $f^{\mathrm{SW}(a,b)}_{\mathrm{pl}}(\alpha)$.
\end{problem}

\subsection{Asymptotics in the likely deletion regime}

\cref{cor:pos_mutual_info} establishes that the uniform capacity of the deletion channel is strictly positive for every $p < 1$. However, the explicit lower bound extracted from our argument in \cref{thm:quantitative_capacity_bound} is too small to capture the asymptotic behavior of $\fpl(\alpha)$ as $\alpha = 1-p \to 0$. It remains open to determine the asymptotic order of magnitude of $\fpl(\alpha)$ as $\alpha \to 0$, which would be very interesting. 

\section*{Acknowledgments}

R.J. would like to thank Robin Pemantle for early encouragement to work on this problem and for responding to several ideas and questions that he had over its duration. We both especially thank Brice Huang, and F.P. especially thanks Hang Du, for several valuable discussions about this project. We are grateful to Amir Dembo, Nike Sun, Shuangping Li, and Tselil Schramm for helpful conversations while this work was in development. Finally, we thank Timo Sepp\"al\"ainen for answering our questions on the work \cite{corwin2015strict}.

\begin{appendices}

\appendix

\section{\texorpdfstring{Proof of Equation \eqref{eq:del-chann-cap-connection}}{Proof of Capacity Identity}} \label{app:del-chann-cap-proof}

\begin{proof}
Let $D \in \{0,1\}^N$ be the deletion pattern applied by the channel to obtain $Y$ from $X.$ For each $i=1,\dots,N,$ we set $D_i=1$ if bit $i$ of $X$ was deleted, and $D_i=0$ otherwise; note that $D\sim \Ber(p)^{\otimes N}.$ We have
\begin{align*}
    I(X;Y) &= H(Y) - H(Y \mid X) \\
    &= H(Y) - (H(D,Y \mid X) - H(D \mid X,Y)) \\
    &= H(Y) - H(D \mid X) + H(D \mid X,Y) \\
    &= H(Y) - H(D) + H(D \mid X,Y),
\end{align*}
where in the third step we used that $Y$ is deterministic given $X$ and $D,$ and in the last step we used that $D$ and $X$ are independent. When $X$ is uniformly random, $Y$ is uniform in $\{0,1\}^{|Y|},$ with $\E |Y| =(1-p)N.$ Setting $\alpha := 1-p,$ this leads to
\begin{align*}
    \lim_{N\to \infty} \frac{1}{N}I(X;Y) &= \alpha \log 2 - h(\alpha) + \lim_{N\to\infty} \frac{1}{N}H(D \mid X,Y).
\end{align*}
It remains to show that $H(D \mid X,Y) = \E[\log Z_{X,Y}^{\mathrm{pl}}]$. Given $X$ and $Y,$ define the set of deletion patterns consistent with the observation:
\[
\mathcal{D}(X,Y) = \{D \in \{0,1\}^N : D(X) = Y\},
\]
where $D(X)$ denotes the string obtained by deleting bit $X_i$ whenever $D_i = 1.$ Since $D \sim \Ber(p)^{\otimes N}$ independently of $X,$ and all $D \in \mathcal{D}(X,Y)$ have the same number of ones (namely $N - |Y|$), the conditional distribution of $D$ given $(X,Y)$ is uniform on $\mathcal{D}(X,Y).$ Therefore 
\begin{align*}
    H(D \mid X,Y) = \E[\log |\mathcal{D}(X,Y)|].
\end{align*}
Finally, there is a natural bijection between $\mathcal{D}(X,Y)$ and the embedding set $S_{X,Y}$: each $D \in \mathcal{D}(X,Y)$ determines a unique $\sigma \in S_{X,Y}$ by letting $\sigma(j)$ be the index of the $j$-th retained bit, and vice versa. Hence $|\mathcal{D}(X,Y)| = Z_{X,Y}^{\mathrm{pl}},$ and $H(D \mid X,Y) = \E[\log Z_{X,Y}^{\mathrm{pl}}]$.

It remains to verify that the limit on the right-hand side of \eqref{eq:del-chann-cap-connection} is well-defined with $M$ fixed at $\alpha N$ rather than random. In the deletion channel, $M = |Y| \sim \Bin(N,\alpha)$ concentrates around $\alpha N$ with deviations of order $\sqrt{N}.$ We couple a random $\lfloor\alpha N\rfloor$-subsequence of $X$ with the deletion channel output $Y$ by inserting or deleting $|M - \alpha N|$ bits. Each single-bit change affects $\log Z_{X,Y}$ by at most $\log N$. Since $\E[|M - \alpha N|] = O(\sqrt{N}),$ this gives
\[
\frac{1}{N}\left|\E_M\left[\E[\log Z_{X,Y} \mid M]\right] - \E[\log Z_{X, Y'}]\right| = O\!\left(\frac{\sqrt{N}\log N}{N}\right) = o(1),
\]
where $Y'$ denotes a uniform random $\lfloor\alpha N\rfloor$-subsequence of $X.$
\end{proof}

\section{\texorpdfstring{A Combinatorial Consequence of \cref{thm:weak_law}}{Addendum to Section \ref{subsec:defns_and_notation}}} \label{app:no-double-phase-transition}

The existence of the weak limit of \eqref{eq:weak_law_null} alone is enough to easily deduce the relative asymptotic behavior between $\fnull(\alpha)$ and $\fnullann(\alpha)$ in the $\alpha < 1/2$ regime. Indeed, we let $\alpha < \tilde\alpha < 1/2$, and we assume that $\fnull(\tilde\alpha) = \fnullann(\tilde\alpha)$. For $(X',Y)$ drawn from the null model with density parameter $\alpha$, we can write
\begin{align} \label{eq:alpha_tilde_decomp}
    Z_{X',Y} = \binom{N - \alpha N}{(\alpha/\tilde\alpha) N - \alpha N}^{-1} \sum_{S \in \binom{[N]}{(\alpha/\tilde\alpha) N}} Z_{X'|_S, Y}.
\end{align}
Each summand on the RHS of \eqref{eq:alpha_tilde_decomp} has the same law as the null partition function with density parameter $\tilde\alpha$, where the embedded string is of length $\alpha N$. It now follows from a standard argument (e.g., via invoking Markov's inequality to control the number of ``bad summands'' that do not behave like $(\alpha / \tilde\alpha) N \fnullann(\tilde\alpha)$ in the exponential) that typically, most of these summands behave like $(\alpha / \tilde\alpha) N \fnullann(\tilde\alpha)$ in the exponential. By comparing to the expectation of \eqref{eq:alpha_tilde_decomp}, this observation readily yields that $\fnull(\alpha) = \fnullann(\alpha)$. Thus, it is necessarily the case that either
\begin{itemize}
    \item there exists some critical value $\alpha_* \leq 1/2$ for which 
    \begin{align*}
        \begin{cases}
            \fnull(\alpha) < \fnullann(\alpha) & \alpha > \alpha_*, \\
            \fnull(\alpha) = \fnullann(\alpha) & \alpha < \alpha_*;
        \end{cases}
    \end{align*}

    \item it holds that $\fnull(\alpha) < \fnullann(\alpha)$ for all $\alpha < 1/2$.
\end{itemize}
In this sense, \cref{thm:weak_law} exactly characterizes the relative asymptotic behavior of $\fnull(\alpha)$ and $\fnullann(\alpha)$. In particular, it shows that there is no double phase transition past $\alpha < 1/2$, i.e., that the expectation always exhibits an exponential gap with the typical partition function which can be readily shown (via comparing the upper and lower bounds of \cref{thm:weak_law}) to diminish as $\alpha \to 0$. We note that this is in contrast to other combinatorial properties of the null variant of the Random Subsequence Model which emerge for smaller values of $\alpha$. For example, via a simple adaptation of the greedy algorithm, it is easy to show that $\alpha = 1/3$ is the threshold for $X'$ to typically contain every string in $\{0,1\}^M$ as a subsequence.

\section{\texorpdfstring{Proof of \cref{prop:regular_alignment_property}}{Proof of Proposition \ref{prop:regular_alignment_property}}} \label{app:RAP_proof}

\begin{proof}
    We fix a typical string $x \in \{0,1\}^N$. Towards establishing \eqref{eq:regular_alignment_aligned_set}, we define 
    \begin{align*}
        Z = \big(Z^{(1)}, \dots, Z^{(B)}\big)
    \end{align*}
    to denote a random string resulting from independently including each bit of $x$ with probability $\alpha$ (i.e., $Z = \BDC_{1-\alpha}(x)$), with $Z^{(i)}$ the part of $Z$ which corresponds to the block $x^{(i)}$ for each $i \in [B]$. It is clear that this random string $Z$ has law $\Prob_x$. It holds from the multiplicative Chernoff bound that
    \begin{align} \label{eq:RAP_mult_chernoff}
        q_{\text{len}} := \Prob\left( \big| |Z^{(1)}| - \alpha b \big| > \delta \alpha b \right) \stackrel{\text{(Chernoff)}}{\leq} 2\exp\left( - \frac{\delta^2 \alpha b}{3} \right) = 2\exp\left( -\frac{\alpha b^{2\epsilon}}{3} \right) \stackrel{\text{($b$ large)}}{\leq} b^{-\epsilon} = \gamma.
    \end{align}
    It thus follows from the additive Chernoff bound that
    \begin{align}
        & \Prob\left( (Z^{(1)}, \dots, Z^{(B)}) \notin \mathcal{NE}_\ind(Z) \right) \leq \Prob\left( \sum_{i=1}^B \1\left\{\big| |Z^{(i)}| - \alpha b \big| > \delta \alpha b \right\} > \gamma B \right) \nonumber \\
        & \quad \stackrel{\text{(Chernoff)}}{\leq} \exp\left( -B \cdot \KL\left( \gamma \doubleline q_{\text{len}} \right) \right) = \exp\left( -\frac{N}{b} \left[ \gamma \log\left( \frac{\gamma}{q_{\text{len}}} \right) + (1-\gamma) \log\left( \frac{1-\gamma}{1-q_{\text{len}}} \right) \right] \right) \nonumber \\
        & \quad \stackrel{\eqref{eq:RAP_mult_chernoff}}{\leq} \exp\left( -\frac{N}{b} \left[ \gamma \left( \log \gamma - \log 2 + \frac{\alpha b^{2\epsilon}}{3} \right) + (1-\gamma) \log\left( \frac{1-\gamma}{1-q_{\text{len}}} \right) \right] \right) \nonumber \\
        & \quad = \exp\left( -\frac{N}{b} \left[ b^{-\epsilon} \left( -\epsilon \log b - \log 2 + \frac{\alpha b^{2\epsilon}}{3} \right) + (1-b^{-\epsilon}) \log\left( \frac{1-b^{-\epsilon}}{1-o_b(1)} \right) \right] \right) \nonumber \\
        & \quad \stackrel{\text{($b$ large)}}{\leq} \exp\left( - \frac{N}{b} \cdot \frac{\alpha b^\epsilon}{6} \right) = e^{-\Omega(N)}. \label{eq:RAP_additive_chernoff}
    \end{align}
    Next, we define the (deterministic) collection of indices
    \begin{align} \label{eq:RAP_I_lb}
        \mathcal I_x := \left\{ i \in [B] : \Delta\big( x^{(i)} \big) \geq \sqrt{b} \right\} \stackrel{\text{($x$ typical)}}{\implies} \left| \mathcal I_x \right| \geq B/10.
    \end{align}
    If it held that $\big(Z^{(1)}, \dots, Z^{(B)}\big) \in \mathcal{NE}_\ind(Z)$, then
    \begin{align} \label{eq:regular_alignment_decomp_bd}
        T_\ind(x, Z) \geq \frac{1}{B} \sum_{i=1}^B A_\loc\big(x^{(i)}, Z^{(i)}\big) = \frac{1}{B} \sum_{i \in \mathcal I_x} A_\loc\big(x^{(i)}, Z^{(i)}\big) + \frac{1}{B} \sum_{i \notin \mathcal I_x} A_\loc\big(x^{(i)}, Z^{(i)}\big).
    \end{align}
    We now sequentially derive lower bounds on the two summands of \eqref{eq:regular_alignment_decomp_bd} which hold with overwhelming probability. Towards this end, we define independent random variables $\zeta_1$ and $\zeta_0$ such that
    \begin{align} \label{eq:RAP_binom_RVs}
        & \zeta_1 \sim \Bin\left( \frac{b+\sqrt{b}}{2} , \alpha \right);
        & \zeta_0 \sim \Bin\left( \frac{b-\sqrt{b}}{2} , \alpha \right),
    \end{align}
    for which the central limit theorem and Slutsky's theorem readily imply that
    \begin{align*}
        & \frac{\zeta_1 - \frac{\alpha b}{2}}{\sqrt{b}} \xrightarrow[b \to \infty]{d} \mathcal N\left( \frac{\alpha}{2}, \frac{\alpha}{2}(1 - \alpha) \right);
        & \frac{\zeta_0 - \frac{\alpha b}{2}}{\sqrt{b}}  \xrightarrow[b \to \infty]{d} \mathcal N\left( -\frac{\alpha}{2}, \frac{\alpha}{2}(1 - \alpha) \right).
    \end{align*}
    We begin with the first summand of \eqref{eq:regular_alignment_decomp_bd}. We fix $i \in \mathcal I$, and we assume without loss of generality that $\maj(x_i) = 1$. It now readily follows that, with the inequality below relying on the fact that $\Delta\big( x^{(i)} \big) \geq \sqrt{b}$, 
    \begin{align}
        & \Prob\left( A_\loc\big(x^{(i)}, Z^{(i)}\big) = 1 \right) \nonumber \\
        & \qquad = \Prob\left( \{ \maj\big(Z^{(i)}\big) = 1 \} \cap \{ \delta \Delta\big(Z^{(i)}\big) \geq 1 \} \right) = \Prob\left( \{ \maj\big(Z^{(i)}\big) = 1 \} \cap \{ \Delta\big(Z^{(i)}\big) \geq b^{1/2-\epsilon} \} \right) \nonumber \\
        & \qquad \geq \Prob\left( \zeta_1 - \zeta_0 \geq b^{1/2-\epsilon} \right) = \Prob\left( \frac{\zeta_1 - \frac{\alpha b}{2}}{\sqrt{b}} - \frac{\zeta_0 - \frac{\alpha b}{2}}{\sqrt{b}} \geq b^{-\epsilon} \right) \nonumber \\
        & \qquad \xrightarrow{b \to \infty} \Prob\left( \mathcal N\left( \alpha, \alpha(1 - \alpha) \right) \geq 0 \right) \stackrel{\eqref{eq:alignment_constant}}{=} \frac{1}{2} + \beta(\alpha) \stackrel{\eqref{eq:alignment_constant}}{=} \frac{1}{2} + 40\beta^\star(\alpha), \label{eq:RAP_first_summand_beta_conv}
    \end{align}
    where it is clear that the constant $\beta(\alpha) > 0$. This implies that, assuming (from \eqref{eq:RAP_first_summand_beta_conv}) $b$ is large enough such that
    \begin{align*}
        \Prob\left( A_\loc\big(x^{(i)}, Z^{(i)}\big) = 1 \right) \geq \frac{1}{2} + \frac{3\beta(\alpha)}{4}
    \end{align*}
    holds (with this guarantee holding uniformly over all $i \in [B]$),
    \begin{align}
        & \Prob\left( \frac{1}{|\mathcal I|} \sum_{i \in \mathcal I_x} A_\loc\big(x^{(i)}, Z^{(i)}\big) \leq \frac{1+\beta(\alpha)}{2} \right) \leq \Prob\left( \sum_{i \in \mathcal I_x} \1\big\{ A_\loc(x^{(i)}, Z^{(i)}) = 1 \big\} \leq |\mathcal I_x| \cdot \frac{1+\beta(\alpha)}{2} \right) \nonumber \\
        & \quad \leq \Prob\left( \sum_{i \in \mathcal I_x} \1\big\{ A_\loc(x^{(i)}, Z^{(i)}) = 1 \big\} - \E\left[ \sum_{i \in \mathcal I_x} \1\big\{ A_\loc(x^{(i)}, Z^{(i)}) = 1 \big\} \right] \leq -|\mathcal I_x| \cdot \frac{\beta(\alpha)}{4} \right) \nonumber \\
        & \stackrel{\text{(Hoeffding)}}{\leq} \exp\left( - \frac{2|\mathcal I_x|^2 \beta(\alpha)^2}{16|\mathcal I_x|} \right) \stackrel{\eqref{eq:RAP_I_lb}}{\leq} \exp\left( -\frac{B \cdot \beta(\alpha)^2}{80} \right) = \exp\left( -\frac{N \cdot \beta(\alpha)^2}{80b} \right) = e^{-\Omega(N)}. \label{eq:RAP_summand_one_lb}
    \end{align}
    Combining \eqref{eq:RAP_additive_chernoff} and \eqref{eq:RAP_summand_one_lb}, it holds with probability $1-e^{-\Omega(N)}$ that
    \begin{align}
        T_\ind(x, Z) & \stackrel{\eqref{eq:regular_alignment_decomp_bd}}{\geq} \frac{1}{B} \sum_{i \in \mathcal I_x} A_\loc\big(x^{(i)}, Z^{(i)}\big) + \frac{1}{B} \sum_{i \notin \mathcal I_x} A_\loc\big(x^{(i)}, Z^{(i)}\big) \nonumber \\
        & \geq \frac{|\mathcal I_x|}{B}\left( \frac{1+\beta(\alpha)}{2} \right) + \frac{1}{B} \sum_{i \notin \mathcal I_x} A_\loc\big(x^{(i)}, Z^{(i)}\big). \label{eq:RAP_decomp_bd_1}
    \end{align}
    We now consider the second summand of \eqref{eq:RAP_decomp_bd_1}. We proceed under the further assumption on $x$ that
    \begin{align} \label{eq:RAP_summand_2_further_assumption}
        \frac{|\mathcal I_x|}{B}\left( \frac{1+\beta(\alpha)}{2} \right) < \frac{1}{2} + \frac{\beta(\alpha)}{4} \iff |\mathcal I_x^c| > B\left( 1 - \frac{\frac{1}{2} + \frac{\beta(\alpha)}{4}}{\frac{1}{2} + \frac{\beta(\alpha)}{2}} \right) = B\left( \frac{\beta(\alpha)}{2\left(1+\beta(\alpha)\right)} \right) = \Omega(N),
    \end{align}
    as if this assumption fails and \eqref{eq:RAP_decomp_bd_1} holds, then it is clear that
    \begin{align*}
        T_\ind(x, Z) \stackrel{\eqref{eq:RAP_decomp_bd_1}}{\geq} \frac{|\mathcal I_x|}{B}\left( \frac{1+\beta(\alpha)}{2} \right) \geq  \frac{1}{2} + \frac{\beta(\alpha)}{4} \geq \frac{1}{2} + \beta^\star(\alpha).
    \end{align*}
    A similar argument as that for the first summand in \eqref{eq:RAP_first_summand_beta_conv} (with the modification that we take the analogues of \eqref{eq:RAP_binom_RVs} to have mean $b/2$ instead) yields that for large $b$, it holds uniformly over $i \notin \mathcal I_x$ that
    \begin{align*}
        \Prob\left( A_\loc\big(x^{(i)}, Z^{(i)}\big) = 1 \right) \geq \frac{1}{2} - \frac{\beta(\alpha)}{72}.
    \end{align*}
    Then by a similar application of Hoeffding's inequality as in \eqref{eq:RAP_summand_one_lb}, it holds that
    \begin{align} \label{eq:RAP_summand_two_lb}
        \Prob\left( \frac{1}{|\mathcal I_x^c|} \sum_{i \notin \mathcal I_x} A_\loc\big(x^{(i)}, Z^{(i)}\big) \leq \frac{1}{2} - \frac{\beta(\alpha)}{36} \right) \stackrel{\eqref{eq:RAP_summand_2_further_assumption}}{=} e^{-\Omega(N)}.
    \end{align}
    Therefore, we conclude that with probability $1 - e^{-\Omega(N)}$,
    \begin{align*}
        T_\ind(x, Z) & \stackrel{\eqref{eq:RAP_decomp_bd_1}}{\geq} \frac{|\mathcal I_x|}{B}\left( \frac{1+\beta(\alpha)}{2} \right) + \frac{1}{B}\sum_{i \notin \mathcal I_x} A_\loc\big(x^{(i)}, Z^{(i)}\big) \\
        & = \frac{|\mathcal I_x|}{B}\left( \frac{1+\beta(\alpha)}{2} \right) + \frac{|\mathcal I_x^c|}{B}\left( \frac{1}{|\mathcal I_x^c|} \sum_{i \notin \mathcal I_x} A_\loc\big(x^{(i)}, Z^{(i)}\big) \right) \\
        & \stackrel{\eqref{eq:RAP_summand_two_lb}}{\geq} \frac{|\mathcal I_x|}{B}\left( \frac{1+\beta(\alpha)}{2} \right) + \frac{|\mathcal I_x^c|}{B}\left( \frac{1}{2} - \frac{\beta(\alpha)}{36} \right) = \frac{1}{2} + \left( \frac{|\mathcal I_x|}{B} \cdot \frac{\beta(\alpha)}{2} - \frac{|\mathcal I_x^c|}{B} \cdot \frac{\beta(\alpha)}{36} \right) \\
        & \stackrel{\eqref{eq:RAP_I_lb}}{\geq} \frac{1}{2} + \left( \frac{\beta(\alpha)}{20} - \frac{\beta(\alpha)}{40} \right) = \frac{1}{2} + \frac{\beta(\alpha)}{40} = \frac{1}{2} + \beta^\star(\alpha).
    \end{align*}
    Altogether, we conclude that
    \begin{align*}
        & \Prob_x\!\left( \mathcal A(x)^c \right) = \Prob\left( Z \notin \mathcal A(x) \right) = \Prob\left( T_\ind(x, Z) < 1/2 + \beta^\star(\alpha) \right) = e^{-\Omega(N)}.
    \end{align*}
    The uniformity of the guarantee over typical $x \in \{0,1\}^{N}$ is now a consequence of the fact that the preceding argument did not depend on the particular choice of typical $x$.
\end{proof}

\section{\texorpdfstring{Proof of \cref{thm:quantitative_capacity_bound}}{Proof of Proposition \ref{thm:quantitative_capacity_bound}}} \label{app:quantitative_cap_bd_proof}

\begin{proof}
    We begin by fixing auxiliary parameters $\epsilon$ and $b$ for which the argument goes through. Throughout this section we take
    \begin{align*}
        \epsilon = 1/24.
    \end{align*}
    We also record the following quantitative Berry-Esseen bound for later use.
    \begin{theorem}[{\cite[Theorem 1]{zolotarev1967sharpening}}] \label{thm:berry_esseen_quantitative}
        Let $\xi_1, \dots, \xi_n$ be independent random variables with mean zero, variances $\sigma_1^2, \dots, \sigma_n^2$, and finite absolute third moments $\beta_1, \dots, \beta_n$. Let $F$ denote the distribution function of $\sum_{k=1}^n \xi_k/\sqrt{\sum_{k=1}^n \sigma_k^2}$. Then
        \begin{align*}
            \sup_x |F(x) - \Phi(x)| \leq \frac{\sum_{k=1}^n \beta_k}{\left( \sum_{k=1}^n \sigma_k^2 \right)^{3/2}}.
        \end{align*}
    \end{theorem}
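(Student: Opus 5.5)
We would prove the statement in the normalized form. Set $s^2=\sum_k \sigma_k^2$, replace $\xi_k$ by $\xi_k/s$, and write $L:=\sum_k\beta_k/s^3$ for the Lyapunov ratio, so the goal is $\sup_x|F(x)-\Phi(x)|\le L$. Here $\Phi$ is the standard normal distribution function, not the variational quantity of \cref{sec:annealed_free_planted}.

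\textbf{Trivial regime.} Since $F$ and $\Phi$ both take values in $[0,1]$, we always have $\sup_x|F-\Phi|\le 1$. So the claim is automatic whenever $L\ge 1$, and we may assume $L<1$. By Lyapunov's inequality $\sigma_k^3\le\beta_k$, this also gives $\max_k\sigma_k\le L^{1/3}<1$, which controls each individual summand.

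\textbf{Main analytic step: smoothing inequality plus characteristic-function estimates.} Let $f(t)=\prod_k f_k(t)$ be the characteristic function of the normalized sum. We would invoke a smoothing inequality, comparing $F-\Phi$ through $\int_{-T}^{T}|f(t)-e^{-t^2/2}|\,|t|^{-1}\,dt$ plus a boundary term of order $1/T$. For the first-order accuracy needed here, Esseen's version suffices in principle, and Prawitz's sharpened kernel version is better suited to the constant.

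On $|t|\le T$, each factor satisfies
\begin{align*}
    f_k(t)=1-\tfrac{\sigma_k^2t^2}{2}+\theta\,\tfrac{\beta_k|t|^3}{6}, \qquad |\theta|\le 1.
\end{align*}
Taking logarithms, summing over $k$, and using $\sum_k\sigma_k^2=1$ then yields the classical estimate
\begin{align*}
    |f(t)-e^{-t^2/2}|\le C_1\,L\,|t|^3e^{-c t^2} \qquad \text{for } |t|\le c'/L .
\end{align*}
For $|t|$ beyond that range, we would use the symmetrization bound $|f(t)|\le\exp(-t^2/2+L|t|^3/3)$ to extend the admissible window up to $T\asymp 1/L$. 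Integrating against $|t|^{-1}$ gives a contribution $C_1' L$, and the boundary term gives $C_2 L$.

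\textbf{Main obstacle: the absolute constant.} The real difficulty is forcing $C_1'+C_2\le 1$. A crude run of the scheme above (Esseen kernel, $T=1/(4L)$, the standard $16L|t|^3e^{-t^2/3}$ estimate) produces a constant well above $1$. To reach $1$ we would do three things. First, use Prawitz's kernel, whose boundary term is sharper and whose integral is restricted to $0<t<T$. Second, split the frequency range into $|t|\le t_0$, handled by the cubic Taylor bound with optimal remainder constants, and $t_0<|t|\le T$, handled by the bound on $|f(t)|$ alone. Third, optimize $T=\kappa/L$ over $\kappa$ numerically.

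For moderately large $L$, say $L\in[L_0,1)$, we would instead use a cruder but direct bound. The nonuniform estimate $|F(x)-\Phi(x)|\le \tfrac12$-type deviation near the median, combined with $\sup|F-\Phi|<1$ and the one-sided Chebyshev/Cantelli bounds $F(x)\le 1/(1+x^2)$ for $x<0$, shows $\sup|F-\Phi|\le L$ when $L\ge L_0$. We would then close the remaining window $L<L_0$ using the smoothing argument, where the $O(L^2)$ corrections are negligible.

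This is precisely the computation carried out in \cite{zolotarev1967sharpening}. If the bookkeeping does not fit within the plan, the fallback is to cite that theorem directly, as the paper does, since only an explicit absolute constant is needed downstream in \cref{thm:quantitative_capacity_bound}.
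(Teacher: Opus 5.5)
The paper gives no proof of this statement: it quotes it from Zolotarev and uses it only as a black box in the proof of \cref{thm:quantitative_capacity_bound}. Your fallback of citing it is therefore exactly the paper's route, and it is the right call. As a self-contained argument, however, your sketch has a genuine gap at the one place where the statement says more than the classical Berry--Esseen theorem, namely the constant $1$. What you actually specify is Esseen's smoothing inequality with $T=1/(4L)$, the estimate $|f(t)-e^{-t^2/2}|\le 16L|t|^3e^{-t^2/3}$ on $|t|\le T$, and a boundary term of order $L$. As you note yourself, this rigorously gives $\sup_x|F(x)-\Phi(x)|\le CL$ only for some absolute $C>1$. The three refinements you list (a sharper kernel, splitting the frequency range, optimizing $\kappa$) name tools; they do not supply the explicit inequalities and numerical evaluations that would certify $C\le 1$.

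Your reduction to small $L$ is also not sound as stated. The Chebyshev--Cantelli bounds are $F(x)\le 1/(1+x^2)$ for $x<0$ and $F(x)\ge x^2/(1+x^2)$ for $x>0$. For every mean-zero, unit-variance $F$ they give $\sup_x|F(x)-\Phi(x)|\le\sup_{y>0}\bigl(\Phi(y)-y^2/(1+y^2)\bigr)\approx 0.541$, so they dispose only of $L\ge L_0\approx 0.541$. On the complementary window $L$ is of order one, so the $O(L^2)$ and other higher-order corrections in the smoothing estimate are not negligible. Every term must be bounded with explicit constants uniformly over $L\in(0,L_0)$, and that uniform bookkeeping is precisely the content of the cited paper.

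You are right that the downstream use needs only some explicit constant. With $C$ in place of $1$, the two Berry--Esseen thresholds on $b$ in that proof are multiplied by $C^2$ and, for any explicit $C$ of moderate size, remain dominated by $\kappa(\alpha)$, so the explicit capacity bound would survive unchanged. But that proves a weaker statement than the one quoted, so for the statement as written the citation is indispensable.
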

    We now enumerate each point in the proof where we require $b$ to be sufficiently large, and record an explicit condition on $b$ ensuring that step.
    \begin{itemize}
        \item In \eqref{eq:RAP_mult_chernoff}, we require that
        \begin{align} \label{eq:RAP_mult_chernoff_numerical_cond}
            2\exp\left( -\frac{\alpha b^{2\epsilon}}{3} \right) \leq b^{-\epsilon} \iff \log \left( 2b^{1/24} \right) \leq \frac{\alpha b^{1/12}}{3}.
        \end{align}
        This condition holds for all $b \geq (3/\alpha)^{24}$, on which
        \begin{align*}
            \log \left( 2b^{1/24} \right) \leq b^{1/24} \leq \frac{\alpha}{3}b^{1/12}.
        \end{align*}
    
        \item In \eqref{eq:RAP_additive_chernoff}, we require that
        \begin{align*}
            \exp\left( -\frac{N}{b} \left[ b^{-\epsilon} \left( -\epsilon \log b - \log 2 + \frac{\alpha b^{2\epsilon}}{3} \right) + (1-b^{-\epsilon}) \log\left( \frac{1-b^{-\epsilon}}{1-q_{\text{len}}} \right) \right] \right) \leq \exp\left( - \frac{N}{b} \cdot \frac{\alpha b^\epsilon}{6} \right),
        \end{align*}
        which follows if we have that
        \begin{align*}
            b^{-\epsilon} \left( -\epsilon \log b - \log 2  \right) + (1-b^{-\epsilon})\log(1-b^{-\epsilon}) - (1-b^{-\epsilon}) \log\left(1-q_{\text{len}}\right) \geq -\frac{\alpha b^\epsilon}{6}.
        \end{align*}
        In particular, this holds whenever
        \begin{align} \label{eq:RAP_additive_chernoff_numerical_conds}
            & \log b^{1/24} + \log 2 \leq \frac{\alpha b^{1/12}}{12};
            & (1-b^{-1/24})\log(1-b^{-1/24}) \geq -\frac{\alpha b^{1/24}}{12}.
        \end{align}
        Controlling the two LHS expressions in \eqref{eq:RAP_additive_chernoff_numerical_conds} via
        \begin{align*}
            & \log \left( 2b^{1/24} \right) \leq b^{1/24};
            & (1-x)\log(1-x) \geq -x
        \end{align*}
        gives that both conditions are satisfied whenever $b \geq (12/\alpha)^{24}$.
        
        \item In \eqref{eq:ne_2_size_bd} and \eqref{eq:ne_2_exp_bd}, we require that
        \begin{align*}
            B \cdot h(3b^{-\epsilon}) + B \cdot 3b^{-\epsilon} \log (b+1) \leq B \cdot \frac{\beta^\star(\alpha)^2}{8} \iff h(3b^{-\epsilon}) + 3b^{-\epsilon} \log (b+1) \leq \frac{\beta^\star(\alpha)^2}{8}.
        \end{align*}
        In particular, this holds whenever
        \begin{align} \label{eq:ne_2_bd_numerical_conds}
            -3b^{-\epsilon}\log(3b^{-\epsilon}) \leq \frac{\beta^\star(\alpha)^2}{32}; \qquad -(1-3b^{-\epsilon})\log(1-3b^{-\epsilon}) \leq \frac{\beta^\star(\alpha)^2}{32}; \qquad b^{-\epsilon}\log(b+1) \leq \frac{\beta^\star(\alpha)^2}{48}.
        \end{align}
        Controlling the three LHS expressions in \eqref{eq:ne_2_bd_numerical_conds} via
        \begin{align*}
            -x\log x \leq \sqrt{x}; \qquad -(1-x)\log(1-x)\leq x; \qquad \log(b+1) \leq 48 \cdot b^{1/48}
        \end{align*}
        gives that all conditions in \eqref{eq:ne_2_bd_numerical_conds} are satisfied whenever $b \geq \left( 1920/\beta(\alpha) \right)^{96}$.
        
        \item In our invocation of Hoeffding's inequality in \eqref{eq:ne_2_hoeffding_bd}, we assumed that
        \begin{align*}
            & \exp\left( -\frac{2\left( B\left( \frac{1+\beta^*(\alpha)}{2} - \Prob\left( \maj(X'^{(i)}) = \maj(Y^{(i)}), \Delta(Y^{(i)}) > 0  \right) \right) \right)^2}{B} \right) \leq \exp\left( -\frac{B \cdot \beta^*(\alpha)^2}{4} \right) \\
            & \iff 2\Prob\left( \maj(X'^{(i)}) = \maj(Y^{(i)}), \Delta(Y^{(i)}) > 0 \right) - 1 \leq \left( 1 - \frac{1}{\sqrt{2}} \right)\beta^*(\alpha).
        \end{align*}
        Bounding the probability that a simple random walk hits $0$ after a fixed number of steps implies that this holds whenever
        \begin{align*}
            2 \cdot \Prob\left( \Delta(X'^{(i)}) = 0 \text{ or } \Delta(Y^{(i)}) = 0 \right) \leq \frac{4}{\sqrt{\alpha b}} \leq \left( 1 - \frac{1}{\sqrt{2}} \right)\beta^*(\alpha) \implies \alpha b \geq \left( \frac{4}{\left( 1 - \frac{1}{\sqrt{2}}\right)\beta^*(\alpha)} \right)^2
        \end{align*}
        and this is satisfied whenever
        \begin{align*}
            b \geq \left( \frac{160}{\alpha\left( 1 - \frac{1}{\sqrt{2}}\right)\beta(\alpha)} \right)^2.
        \end{align*}

        \item The inequality \eqref{eq:total_shift} is satisfied whenever $\alpha b^{1/2+2\epsilon} \leq \alpha b$ and $\alpha b^{1/2+2\epsilon} \leq (1-\alpha) b$. Both of these conditions are satisfied whenever
        \begin{align*}
            b \geq \left(1-\alpha\right)^{-12/5}.
        \end{align*}

        \item The inequality of \eqref{eq:Y_equipartition_biased_stretches_bd} used the fact that
        \begin{align*}
            b^{-(4+8\epsilon)} \geq 2e^{-b^{1/2-6\epsilon}/2} \iff \frac{1}{2}b^{1/4} \geq \frac{13}{3}\log b + \log 2.
        \end{align*}
        This condition holds for all $b \geq 30^8$, on which in particular $\log b \leq b^{1/8}$.
    
        \item In \eqref{eq:standardization_algo_bd}, we require that
        \begin{align} \label{eq:biased_stretches_bd}
            2b^{-\epsilon} + \alpha b^{1+\epsilon} \cdot b^{-(1+2\epsilon)} = (2+\alpha)b^{-\epsilon} \leq \frac{\beta^\star(\alpha)}{2} \iff b \geq \left( \frac{80(2+\alpha)} {\beta(\alpha)} \right)^{24}.
        \end{align}
    
        \item In \eqref{eq:RAP_summand_one_lb}, it suffices for $b$ to be large enough so that for any $i \in \mathcal I_x$, it holds that
        \begin{align*}
            \Prob\left( \zeta_1 - \zeta_0 \geq b^{1/2-\epsilon} \right) \geq \frac{1}{2} + \frac{3\beta(\alpha)}{4}.
        \end{align*}
        We may express the probability of interest as
        \begin{align*}
            & \Prob\left( \zeta_0 - \zeta_1 \leq -b^{1/2-\epsilon} \right) = \Prob\left( \frac{\left( \zeta_0-\frac{\alpha(b-\sqrt{b})}{2} \right) - \left( \zeta_1-\frac{\alpha(b+\sqrt{b})}{2} \right)}{\sqrt{\alpha b (1-\alpha)}} \leq \frac{\alpha \sqrt{b} - b^{1/2-\epsilon}}{\sqrt{\alpha b (1-\alpha)}} \right) \\
            & \qquad =: \Prob\left( S_b \leq \frac{\alpha \sqrt{b} - b^{1/2-\epsilon}}{\sqrt{\alpha b (1-\alpha)}} \right) = \Prob\left( S_b \leq \sqrt{\frac{\alpha}{1-\alpha}} - \frac{1}{b^\epsilon \sqrt{\alpha(1-\alpha)}} \right).
        \end{align*}
        It therefore follows that
        \begin{align}
            & \left| \Prob\left( \zeta_1 - \zeta_0 \geq b^{1/2-\epsilon} \right) - \left(\frac{1}{2} + \beta(\alpha) \right) \right| = \left| \Prob\left( \zeta_1 - \zeta_0 \geq b^{1/2-\epsilon} \right) - \Phi\left( \sqrt{\frac{\alpha}{1-\alpha}} \right) \right| \nonumber \\
            & \qquad \leq \sup_x |F_{S_b}(x) - \Phi(x)| + \left| \Phi\left( \sqrt{\frac{\alpha}{1-\alpha}} - \frac{1}{b^\epsilon \sqrt{\alpha(1-\alpha)}} \right) - \Phi\left( \sqrt{\frac{\alpha}{1-\alpha}} \right) \right|. \label{eq:good_set_b_bound}
        \end{align}
        It suffices to show that \eqref{eq:good_set_b_bound} is at most $\beta(\alpha)/4$. In particular, this holds whenever
        \begin{align} \label{eq:good_indices_two_bounds}
            & \sup_x |F_{S_b}(x) - \Phi(x)| \leq \frac{\beta(\alpha)}{8};
            & \left| \Phi\left( \sqrt{\frac{\alpha}{1-\alpha}} - \frac{1}{b^\epsilon \sqrt{\alpha (1-\alpha)}} \right) - \Phi\left( \sqrt{\frac{\alpha}{1-\alpha}} \right) \right| \leq \frac{\beta(\alpha)}{8}.
        \end{align}
        Invoking \cref{thm:berry_esseen_quantitative}, the former condition of \eqref{eq:good_indices_two_bounds} is satisfied whenever
        \begin{align*}
            \frac{b}{\left( \alpha b ( 1 -  \alpha) \right)^{3/2}} \leq \frac{\beta(\alpha)}{8} \iff \frac{64}{\alpha^3\beta(\alpha)^2(1-\alpha)^3} \leq b.
        \end{align*}
        On the other hand, since it readily follows from the Mean Value Theorem that $\Phi$ is $1$-Lipschitz, the latter condition of \eqref{eq:good_indices_two_bounds} is satisfied whenever
        \begin{align*}
            \frac{1}{b^\epsilon \sqrt{\alpha(1-\alpha)}} \leq \frac{\beta(\alpha)}{8} \iff \frac{8^{24}}{\alpha^{12}\beta(\alpha)^{24} (1-\alpha)^{12}} \leq b.
        \end{align*}
    
        \item In carrying out \eqref{eq:RAP_summand_two_lb}, we assume that $b$ is large enough so that for any $i \notin \mathcal I_x$, it holds that
        \begin{align*}
            \Prob\left( \Tilde{\zeta}_1 - \Tilde{\zeta}_0 \geq b^{1/2-\epsilon} \right) \geq \frac{1}{2} - \frac{\beta(\alpha)}{72}
        \end{align*}
        where we have here that independently
        \begin{align*}
            & \Tilde{\zeta}_1, \Tilde{\zeta}_0 \sim \Bin\left( b/2 , \alpha \right).
        \end{align*}
        We may express the probability of interest as
        \begin{align*}
            \Prob\left( \Tilde{\zeta}_0 - \Tilde{\zeta}_1 \leq -b^{1/2-\epsilon} \right) & = \Prob\left( \frac{\left( \Tilde{\zeta}_0-\frac{\alpha b}{2} \right) - \left( \Tilde{\zeta}_1-\frac{\alpha b}{2} \right)}{\sqrt{\alpha b \cdot (1-\alpha)}} \leq \frac{-b^{1/2-\epsilon}}{\sqrt{\alpha b \cdot (1-\alpha)}} \right) \\
            & =: \Prob\left( \Tilde{S}_b \leq -\frac{1}{b^\epsilon \sqrt{\alpha(1-\alpha)}} \right).
        \end{align*}
        It therefore follows that
        \begin{align} \label{eq:bad_set_b_bound}
            & \left| \Prob\left( \Tilde{\zeta}_1 - \Tilde{\zeta}_0 \geq b^{1/2-\epsilon} \right) - \frac{1}{2} \right| \leq \sup_x |F_{\Tilde{S}_b}(x) - \Phi(x)| + \left| \Phi\left( -\frac{1}{b^\epsilon \sqrt{\alpha(1-\alpha)}} \right) - \frac{1}{2} \right|.
        \end{align}
        It suffices to show that \eqref{eq:bad_set_b_bound} is at most $\beta(\alpha)/72$. In particular, this holds whenever
        \begin{align} \label{eq:bad_indices_two_bounds}
            & \sup_x |F_{\Tilde{S}_b}(x) - \Phi(x)| \leq \frac{\beta(\alpha)}{144};
            & \left| \Phi\left( -\frac{1}{b^\epsilon \sqrt{\alpha(1-\alpha)}} \right) - \frac{1}{2} \right| \leq \frac{\beta(\alpha)}{144}.
        \end{align}
        Invoking \cref{thm:berry_esseen_quantitative}, the former condition of \eqref{eq:bad_indices_two_bounds} is satisfied whenever
        \begin{align*}
            \frac{b}{\left( \alpha b \left( 1 - \alpha \right) \right)^{3/2}} \leq \frac{\beta(\alpha)}{144} \iff \frac{20736}{\alpha^3\beta(\alpha)^2(1-\alpha)^3} \leq b.
        \end{align*}
        On the other hand, as $\Phi$ is $1$-Lipschitz, the latter condition of \eqref{eq:bad_indices_two_bounds} is satisfied whenever
        \begin{align*}
            \frac{1}{b^\epsilon \sqrt{\alpha(1-\alpha)}} \leq \frac{\beta(\alpha)}{144} \iff \frac{144^{24}}{\alpha^{12}\beta(\alpha)^{24} (1-\alpha)^{12}} \leq b.
        \end{align*}
    \end{itemize}
    Finally, we record a crude quantitative estimate for later use. We would like $b$ to be large enough so that
    \begin{align} \label{eq:planted_gap_b_asymp}
        \frac{\alpha \cdot \KL\left( b^{-(4+8\epsilon)} \doubleline 2e^{-b^{1/2-6\epsilon}/2} \right)}{b^{1/2+2\epsilon}} & \geq \frac{\alpha \cdot b^{-(4+8\epsilon)}\left( \frac{1}{2}b^{1/2-6\epsilon} -(4+8\epsilon)\log b - \log 2 \right)}{b^{1/2+2\epsilon}} \geq \frac{1}{b^5},
    \end{align}
    where the first inequality follows from the definition of the KL divergence of two Bernoulli distributions. The latter inequality simplifies to
    \begin{align*}
        \frac{1}{2}b^{1/3} - b^{1/12}\left( \frac{13}{3}\log b + \log 2 \right) \geq \frac{1}{\alpha}.
    \end{align*}
    This condition holds whenever $b \geq (120/\alpha)^{24}$, on which in particular $\log b \leq b^{1/24}$.

    \smallskip
    
    Combining all of these conditions on $b$ yields that the proof follows whenever
    \begin{align*}
        b \geq \left\lceil \frac{1920^{96}}{\alpha^{24}\beta(\alpha)^{96} (1-\alpha)^{12}} \right\rceil = \kappa(\alpha).
    \end{align*}
    We take $b = \kappa(\alpha)$. We now make the implicit constants in the proof of \eqref{eq:null_quantitative} explicit and then replace them by their minimum. Retrieving explicit expressions for \eqref{eq:RAP_summand_one_lb} and \eqref{eq:RAP_summand_two_lb}, the guarantee of \cref{prop:regular_alignment_property}, whose corresponding constant is that of the initial $\Omega(1)$ term of \eqref{eq:null_quantitative}, can be explicitly written via
    \begin{align*}
        \exp\left( -\frac{N \cdot  \beta(\alpha)^3(1+o(1))}{5184(1+\beta(\alpha))\kappa(\alpha)} \right) \leq \exp\left( -N \left[ \frac{\beta(\alpha)^3}{8000 \cdot \kappa(\alpha)}\right] \right)
    \end{align*}
    for all large $N$. We next consider the $\Omega(N)$ term of \eqref{eq:null_quantitative}. The guarantee of \cref{prop:total_alignment_2_bd} here is
    \begin{align*}
        \exp\left( - B \cdot \frac{\beta^\star(\alpha)^2}{8} \right) = \exp\left( -N \left[ \frac{\beta(\alpha)^2}{12800 \cdot \kappa(\alpha)} \right] \right),
    \end{align*}
    while the guarantee of \cref{prop:few_biased_stretches} is 
    \begin{align*}
        \exp\left( -\frac{M}{b^{1/2+2\epsilon}} \cdot \KL\left( b^{-(4+8\epsilon)} \doubleline 2e^{-b^{1/2-6\epsilon}/2} \right) \right) \stackrel{\eqref{eq:planted_gap_b_asymp}}{\leq} \exp\left( -N \left[ \frac{1}{\kappa(\alpha)^5} \right] \right).
    \end{align*}
    Altogether, we have that
    \begin{align} \label{eq:C_unif_lower_bd}
        \fpl(\alpha) - \fnullann(\alpha) \geq \frac{1}{4} \cdot \frac{\beta(\alpha)^3}{12800 \cdot \kappa(\alpha)^5} \stackrel{\eqref{eq:C_unif_positivity}}{\implies} C_{\unif}(1-\alpha) \geq \frac{\beta(\alpha)^3}{51200 \cdot \kappa(\alpha)^5}.
    \end{align}
    Finally, we fix $p \in (0, 1)$. Setting $\alpha = 1-p$, we conclude that
    \begin{align*}
        C_\unif(p) = C_\unif\left( 1 - \alpha \right) \stackrel{\eqref{eq:C_unif_lower_bd}}{\geq} \frac{\left(\beta(1-p)\right)^3}{51200 \cdot \left(\kappa(1-p)\right)^5} > 0,
    \end{align*}
    yielding the desired result.
\end{proof}

\section{\texorpdfstring{Addendum to the Proof of \cref{lem:inner-optimizer}}{Addendum to the Proof of Lemma \ref{lem:inner-optimizer}}} \label{app:lagrange-multipliers-addendum}

We list some results on exponential families here that we invoke later.

\begin{theorem}[{\cite[Equation (3.28)]{wainwright2008graphical}}] \label{thm:exp-fam-mean-representation}
    In an exponential family with discrete state space $\mathcal X$ and sufficient statistic $\phi(X)$, the mean parameter space $\mathcal M$ admits the representation
    \begin{align*}
        \mathcal M = \textup{conv}\left(\{ \phi(x) : x \in \mathcal X \} \right),
    \end{align*}
    where $\textup{conv}$ denotes the convex hull operation.
\end{theorem}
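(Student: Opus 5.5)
Here the mean parameter space is $\mathcal M := \{\E_p[\phi(X)] : p \text{ a probability measure on } \mathcal X \text{ with } \E_p\|\phi(X)\| < \infty\}$, with $\phi : \mathcal X \to \R^d$. Write $S := \{\phi(x) : x \in \mathcal X\}$. I would prove the two inclusions separately.

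\emph{The inclusion $\textup{conv}(S) \subseteq \mathcal M$.} This is immediate. Any point of $\textup{conv}(S)$ has the form $\sum_{k=1}^m \lambda_k \phi(x_k)$ with $\lambda_k \ge 0$ and $\sum_k \lambda_k = 1$. It is the mean of $\phi$ under the finitely supported law $p = \sum_k \lambda_k \delta_{x_k}$. When $\mathcal X$ is finite, the reverse inclusion is equally trivial, since every $p$ is of this form. So the content lies in countably infinite $\mathcal X$, such as $\N_{>0}^2$ in our application, where $p$ may have infinite support.

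\emph{The inclusion $\mathcal M \subseteq \textup{conv}(S)$.} I would argue by induction on the dimension $d$, and more precisely on the dimension of the affine hull of $S$. Let $\mu = \E_p[\phi]$ with $p$ supported on a countable $\mathcal X$. It suffices to replace $S$ by $S_p := \phi(\supp p)$ and show $\mu \in \textup{conv}(S_p)$.

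Suppose $\mu \notin \textup{conv}(S_p)$. Since $\textup{conv}(S_p)$ is convex, although not necessarily closed, the separation theorem gives a nonzero $v$ with $\langle v, s\rangle \ge \langle v,\mu\rangle$ for all $s \in S_p$. Integrating against $p$ gives $\E_p\langle v,\phi\rangle = \langle v,\mu\rangle$. Hence the nonnegative variable $\langle v,\phi\rangle - \langle v,\mu\rangle$ has mean zero, so it vanishes $p$-almost surely. Since $p$ is discrete, this means $S_p$ lies entirely in the hyperplane $H = \{s : \langle v,s\rangle = \langle v,\mu\rangle\}$.

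Now $\mu \in H$, and $p$ pushed forward through $\phi$ is a probability measure on $H \cap S_p$, a set of strictly lower affine dimension. The induction hypothesis then yields $\mu \in \textup{conv}(S_p)$, a contradiction. The base case, dimension $0$, is the case where $S_p$ is a single point, which then equals $\mu$.

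\emph{Main obstacle.} The delicate step is the separation when $\textup{conv}(S_p)$ is not closed and $\mu$ may lie on its relative boundary. Strict separation is then unavailable. I would therefore use only weak separation within the affine hull of $S_p$, which is valid for any convex set not containing $\mu$ once $\mu$ lies in that affine hull; the latter holds because $\mu$ is an average of points of $S_p$. I would also choose $v$ so that it is not constant on the affine hull. This guarantees that the dimension genuinely drops, so the induction terminates.
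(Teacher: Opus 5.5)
Your proof is correct, and it supplies something the paper does not. The paper gives no argument for this statement. It imports it from Wainwright--Jordan, where it is the marginal-polytope identity for a \emph{finite} state space. There it is immediate, since every law on $\mathcal X$ is a finite convex combination of point masses. The paper then applies it in \cref{app:lagrange-multipliers-addendum} to the countably infinite alphabet $\{(a,b)\in\N_{>0}^2 : 1\le b\le a\}$. Your argument (weak separation, then the observation that a nonnegative mean-zero variable vanishes at every atom of a discrete law) covers exactly that case, so it closes the gap between the cited result and its use.

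Your induction is redundant, though harmless. Suppose you take $v$ non-constant on $\mathrm{aff}(S_p)$; such a properly separating $v$ exists because $\mu\in\mathrm{aff}(S_p)\setminus\mathrm{conv}(S_p)$. Then the conclusion $S_p\subseteq H$ is already a contradiction, and no induction is needed. Alternatively, induct on the ambient dimension $d$. Then ordinary weak separation in $\R^d$ suffices, because any convex set in $\R^d$ can be weakly separated from a point outside it; you need neither proper separation nor the "main obstacle" discussion. Either version alone is a complete proof.

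For the paper's purpose, a cheaper argument suffices. \cref{thm:exp-fam-mean-realization} only uses $\mathcal M^\circ$, so it is enough to sandwich $\mathrm{conv}(S)\subseteq\mathcal M\subseteq\overline{\mathrm{conv}(S)}$. The right inclusion follows by integrating the defining inequality of each closed half-space containing $S$. A convex set in $\R^d$ has the same interior as its closure, which gives $\mathcal M^\circ=\{(\alpha,\beta)\in\R^2: 1<\beta<\alpha\}$ directly. What your separation argument buys beyond this is the exact equality at the boundary: no mean of an infinitely supported law lands in $\overline{\mathrm{conv}(S)}\setminus\mathrm{conv}(S)$.
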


\begin{theorem}[{\cite[Theorem 3.3]{wainwright2008graphical}}] \label{thm:exp-fam-mean-realization}
    In a minimal exponential family with parameter space $\Omega$, sufficient statistic $\phi(X)$, and log-partition function $A$, the gradient map $\nabla_\theta A$ is onto $\mathcal M^\circ$, the interior of the mean parameter space. Consequently, for each $\mu \in \mathcal M^\circ$, there exists some $\theta \in \theta(\mu) \in \Omega$ such that $\E_\theta[\phi(X)] = \mu$.
\end{theorem}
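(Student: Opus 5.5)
The plan is the standard convex-duality argument: realize each interior mean as the stationary point of a strictly convex, coercive function of the natural parameter. Write the family as $p_\theta(x)=\exp(\langle\theta,\phi(x)\rangle-A(\theta))$ with $A(\theta)=\log\sum_{x\in\mathcal X}\exp(\langle\theta,\phi(x)\rangle)$ and $\Omega=\{\theta:A(\theta)<\infty\}$.

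\textbf{Step 1: convexity, smoothness and the gradient.} Show that $A$ is convex on $\Omega$ by H\"older's inequality. On the interior of $\Omega$, show that $A$ is $C^\infty$ with
\begin{align*}
    \nabla A(\theta)=\E_\theta[\phi(X)], \qquad \nabla^2A(\theta)=\mathrm{Cov}_\theta(\phi(X)),
\end{align*}
by differentiating under the sum, which dominated convergence justifies on compact subsets of $\Omega^\circ$. Minimality means no nonzero $d$ makes $\langle d,\phi(X)\rangle$ almost surely constant. Hence the covariance is positive definite and $A$ is strictly convex. The easy inclusion $\nabla A(\Omega^\circ)\subseteq\mathcal M$ is immediate, since a mean of $\phi$ is a mean parameter.

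\textbf{Step 2: coercivity of the tilted objective.} Fix $\mu\in\mathcal M^\circ$ and set $F_\mu(\theta):=A(\theta)-\langle\theta,\mu\rangle$, extended by $+\infty$ off $\Omega$. Take any unit direction $d$. Because $\mu$ is interior, and using the representation of $\mathcal M$ from \cref{thm:exp-fam-mean-representation}, the point $\mu+\eta d$ still lies in $\mathcal M$ for small $\eta>0$. This forces the existence of some $x_d$ with $\langle d,\phi(x_d)\rangle\ge\langle d,\mu\rangle+\eta$. Then
\begin{align*}
    F_\mu(\theta_0+td)\ \ge\ \langle\theta_0,\phi(x_d)\rangle+t\bigl(\langle d,\phi(x_d)\rangle-\langle d,\mu\rangle\bigr)-\langle\theta_0,\mu\rangle\ \ge\ t\eta-C,
\end{align*}
so $F_\mu\to\infty$ along every ray. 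A compactness argument on the unit sphere, using lower semicontinuity of $A$ (Fatou's lemma), upgrades this to coercivity. Hence $F_\mu$ has bounded sublevel sets and attains its minimum at some $\theta^*$ in the closure of $\Omega$, in fact in $\Omega$ since $F_\mu(\theta^*)<\infty$.

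\textbf{Step 3: the minimizer is interior and realizes $\mu$.} If $\theta^*\in\Omega^\circ$, the first-order condition gives $\nabla A(\theta^*)=\mu$, which is the claim. The main obstacle is excluding $\theta^*\in\partial\Omega$. For a finite state space $\Omega=\R^d$, so this is vacuous. In general (and in the application to $Z(x,y)$, which blows up at the boundary curve by \cref{lem:partition-closed-form}) I would invoke the regularity or steepness of the family: $\|\nabla A(\theta)\|\to\infty$ as $\theta\to\partial\Omega$ from inside. Given steepness, for a boundary point $\theta^*$ take $\theta_0\in\Omega^\circ$ and look along the segment from $\theta^*$ toward $\theta_0$. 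Convexity combined with steepness makes the directional derivative of $F_\mu$ at $\theta^*$ toward $\theta_0$ equal to $-\infty$, contradicting minimality. Concretely, one checks that $t\mapsto\frac{d}{dt}A(\theta^*+t(\theta_0-\theta^*))$ tends to $-\infty$ as $t\downarrow0$, because the gradient norm blows up while convexity keeps the derivative monotone.

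Strict convexity from Step 1 then makes $\theta^*$ unique, so $\nabla A$ is a bijection from $\Omega^\circ$ onto $\mathcal M^\circ$. The step I expect to need the most care is Step 3 when the parameter space is not all of $\R^d$, specifically establishing steepness. If that fails, I would fall back on verifying directly in the paper's setting that $Z$, and hence $\log Z$ and its gradient, diverges at the boundary of its domain.
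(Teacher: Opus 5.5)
The paper gives no proof of this statement. It imports it from Wainwright--Jordan, where it is stated under their standing assumption that the family is \emph{regular}, i.e.\ $\Omega$ is open. Your Steps 1--2 are the standard coercivity argument for that result and are correct. The sphere-compactness upgrade works because the sublevel sets of $F_\mu$ are closed and convex, so an unbounded one would contain an entire ray from $\theta_0$, contradicting divergence along that ray.

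The ``main obstacle'' in Step 3, however, disappears in the regular setting. You already know $\theta^*\in\Omega$ because $F_\mu(\theta^*)<\infty$, so if $\Omega$ is open then $\theta^*\in\Omega^\circ$ and $\nabla A(\theta^*)=\mu$ follows at once, with no steepness or gradient blow-up needed. In the paper's application, openness is exactly what \cref{lem:partition-closed-form} provides: its proof gives $Z=\infty$ whenever $x(1+\sqrt y)^2\ge 1$. So your ``fallback'' is really the whole argument.

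You are right that minimality alone does not suffice. Take $\phi(k)=k$ on $\{0,1,2,\dots\}$ with base weights $e^{-k}(1+k)^{-3}$. Then $\Omega=(-\infty,1]$ and the mean at $\theta=1$ is finite, so the realized means stay bounded while $\mathcal M^\circ=(0,\infty)$. The cited statement therefore tacitly assumes regularity.

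Two minor points. First, if you do route through steepness, passing from $\|\nabla A\|\to\infty$ to a $-\infty$ one-sided derivative along the particular segment toward $\theta_0$ does not follow from monotonicity of that derivative alone. You need the usual argument (cf.\ Rockafellar's characterization of essential smoothness, \S 26): apply the gradient inequality $A(z)\ge A(\theta_t)+\langle\nabla A(\theta_t),z-\theta_t\rangle$ at all points $z$ of a small ball around $\theta_0$ to bound the full gradient in terms of the segment derivative.

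Second, your closing claim that $\nabla A$ is a bijection \emph{onto} $\mathcal M^\circ$ also needs $\nabla A(\Omega^\circ)\subseteq\mathcal M^\circ$, whereas you only noted $\subseteq\mathcal M$. This follows because minimality makes $\langle d,\phi(X)\rangle$ non-degenerate under $p_\theta$ for every $d\neq 0$. Hence $\langle d,\nabla A(\theta)\rangle<\sup_x\langle d,\phi(x)\rangle$, and $\nabla A(\theta)$ lies on no supporting hyperplane of $\mathcal M$. The paper only uses the existence half, which you did establish.
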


For $\theta = (\theta_1, \theta_2)$, defining (with the correspondence $\theta_1 = -\lambda_1$ and $\theta_2 = -\lambda_2$)
\begin{align*}
    A(\theta) := \log\left( \sum_{a,b} w(a,b) e^{\theta_1 a + \theta_2 b} \right)
\end{align*}
gives the log-partition function of the exponential family supported on $\{ (a,b)\in\N_{>0}^2:1\le b\le a\}$ with mass function $p_\theta(a,b)$ defined via
\begin{align*}
    p_\theta(a,b) \propto w(a,b) e^{\theta_1 a + \theta_2 b} \implies p_\theta(a,b) = w(a,b) e^{\left\langle \theta, (a,b) \right\rangle - A(\theta)}.
\end{align*}
As the sufficient statistic of this exponential family is $(a,b)$, which is not contained in an affine line, the family is thus minimal. Additionally, since $p_\theta$ has discrete support, it follows from \cref{thm:exp-fam-mean-representation} that the mean parameter space is
\begin{align*}
    \mathcal M := \left\{ (\alpha, \beta) \in \R^2: 1 \leq \beta \leq \alpha \right\} \implies \mathcal M^\circ = \left\{ (\alpha, \beta) \in \R^2: 1 < \beta < \alpha \right\}.
\end{align*}
It follows from \cref{lem:partition-closed-form} (whose proof does not rely on the preceding \cref{lem:inner-optimizer}) and \eqref{eq:closed-form-partition-fn-discriminant} that
\begin{align*}
    \Omega := \left\{ (\theta_1, \theta_2) \in \R^2 : A(\theta) < \infty \right\} = \left\{ (\theta_1, \theta_2) \in (-\infty, 0)^2 : e^{\theta_1}(1+e^{\theta_2/2})^2 < 1 \right\}.
\end{align*}
It readily follows from this description of $\Omega$ that for any $\gamma = (\gamma_1, \gamma_2) \in \Omega$, there exists an open box $I \subset \Omega$ such that $\gamma \in I$ and uniformly over $\theta \in I$, 
\begin{align*}
    & \sum_{a,b} \frac{\partial p_\theta(a,b)}{\partial \theta_1} = \sum_{a,b} a \cdot p_\gamma(a,b) < \infty;
    & \sum_{a,b} \frac{\partial p_\theta(a,b)}{\partial \theta_2} = \sum_{a,b} b \cdot p_\gamma(a,b) \leq \sum_{a,b} a \cdot p_\gamma(a,b) < \infty.
\end{align*}
It thus follows that this family satisfies, for all $\theta \in \Omega$, that
\begin{align} \label{eq:exp_fam_opt_existence}
    \nabla_\theta A(\theta) = \left( \E_\theta[a], \E_\theta[b] \right),
\end{align}
which we want to be equal to $(1/(\alpha\rho), 1/\rho) \in \mathcal M^\circ$. Altogether, \cref{thm:exp-fam-mean-realization} yields the existence of $(\theta_1^*, \theta_2^*)$ for which \eqref{eq:exp_fam_opt_existence} is satisfied. Letting $\left( \lambda_1^*, \lambda_2^* \right) = (-\theta_1^*, -\theta_2^*)$, we may then set $\lambda_0^*$ via 
\begin{align*}
    e^{1+\lambda_0^*} = \sum_{a,b} w(a,b) e^{-\lambda_1^* a - \lambda_2^* b}.
\end{align*}
We conclude that this system of equations has a solution. We now let $\nu^*$, $Z^*$, $\lambda_1^*$, and $\lambda_2^*$ denote the corresponding quantities for such a solution. It holds that
\begin{align} \label{eq:optimum_prob_expression}
    \log \nu^*(a,b) = \log w(a,b) - \lambda_1^* a -\lambda_2^* b - \log Z^*.
\end{align}
For any probability measure $\nu$, the objective functional may be expressed via
\begin{align}
    & H(\nu) + \E_\nu\left[ \log w \right] = -\sum_{a,b}\nu(a,b) \log\nu(a,b) + \sum_{a,b} \nu(a,b) \log w(a,b) \nonumber \\
    & \qquad = -\sum_{a,b} \nu(a,b) \log \nu(a,b) + \sum_{a,b} \nu(a,b) \log \nu^* + \sum_{a,b} \nu \lambda_1^*a + \sum \nu(a,b) \lambda_2^* b + \sum_{a,b} \nu(a,b) \log Z^* \nonumber \\
    & \qquad = -\sum_{a,b} \nu(a,b) \log\frac{\nu(a,b)}{\nu^*(a,b)} + \frac{c\lambda_1}{\rho} + \frac{\lambda_2}{\rho} + \log Z^* = \frac{c\lambda_1}{\rho} + \frac{\lambda_2}{\rho} + \log Z^* - \KL\left( \nu \doubleline \nu^* \right) \nonumber \\
    & \qquad = H(\nu^*) + \E_{\nu^*}\left[ \log w \right] - \KL\left( \nu \doubleline \nu^* \right), \label{eq:objective-functional-opt-relation}
\end{align}
where the first equality in the third line is due to the normalization and mean constraints applied to the solution. Since $\KL\left( \nu \doubleline \nu^* \right) \geq 0$, we conclude that $\nu^*$ is a global maximizer of the objective functional.
\end{appendices}

\section*{References}

\printbibliography[heading=none]

\end{document}